\newtheorem{theorem}{Theorem}[section]
\newtheorem{proposition}[theorem]{Proposition}
\newtheorem{lemma}[theorem]{Lemma}
\newtheorem{corollary}[theorem]{Corollary}
\newtheorem{definition}[theorem]{Definition}
\newtheorem{assumption}[theorem]{Assumption}
\def\supp{\mathop{\rm supp}\nolimits}
\def\dist{\mathop{\rm dist}\nolimits}
\def\Ker{\mathop{\rm Ker}\nolimits}
\def\Ran{\mathop{\rm Ran}\nolimits}
\def\min{\mathop{\rm min}\limits}
\def\max{\mathop{\rm max}\limits}
\def\codim{\mathop{\rm codim}\nolimits}
\begin{document}
\begin{center}
 {\large\bf 
Integrated density of states for the Poisson point interactions
on $\mathbf{R}^3$
}

by

Masahiro Kaminaga\footnote{
Department of Information Technology,
Tohoku Gakuin University,
3-1, Shimizu-Koji, Sendai 984-8588, Japan.\ 
E-mail: kaminaga@mail.tohoku-gakuin.ac.jp
}, %
Takuya Mine\footnote{
Faculty of Arts and Sciences,
Kyoto Institute of Technology,
Matsugasaki, Sakyo-ku,
Kyoto 606-8585, Japan.\ 
E-mail: mine@kit.ac.jp
}, %
and Fumihiko Nakano\footnote{
Mathematical institute,
Tohoku University,
Sendai 980-8578, Japan.\ 
E-mail: fumihiko.nakano.e4@tohoku.ac.jp
}
\end{center}

\begin{abstract}
We determine the principal term of
the asymptotics of the integrated density of states (IDS) 
$N(\lambda)$
for the 
Schr\"odinger operator with point interactions on $\mathbf{R}^3$
as $\lambda \to -\infty$,
provided that 
the set of positions of the point obstacles is the Poisson configuration,
and the interaction parameters are bounded i.i.d.\ random variables.
In particular, we prove $N(\lambda) =O(|\lambda|^{-3/2})$ as $\lambda\to -\infty$.
In the case that all interaction parameters are equal to a constant, 
we give a more detailed asymptotics
of $N(\lambda)$,
and verify the result by a numerical method using 
the R programming language.
As a byproduct, we give a rigorous definition of
the Schr\"odinger operators with point interactions 
on a bounded open set with the Dirichlet or Neumann 
boundary conditions by the method of quadratic form, 
and study fundamental properties
about the counting functions of eigenvalues; 
e.g., Dirichlet--Neumann bracketing, etc.
\end{abstract}

\section{Introduction}

Let $Y$ be a set of points in $\mathbf{R}^d$ ($d=1,2,3$) 
which is locally finite,
that is,
$\#\{y\in Y;|y|<R\}$
is finite for every $R>0$,
where $\#S$ is the cardinality of a set $S$.
Let $\alpha=(\alpha_y)_{y\in Y}$ be a sequence of real numbers.
We define 
a linear operator
$-\Delta_{\alpha,Y}$
on $L^2(\mathbf{R}^d)$ by
\begin{align*}
-\Delta_{\alpha,Y}u
:=&\ -\Delta|_{\mathbf{R}^d\setminus Y} u,\\
D(-\Delta_{\alpha,Y})
:=&\ \{u \in L^2(\mathbf{R}^d)\cap H^2_{\rm loc}(\mathbf{R}^d\setminus Y)
\,;\, -\Delta|_{\mathbf{R}^d\setminus Y} u\in L^2(\mathbf{R}^d),\\
&\qquad
u \mbox{ satisfies }(BC)_y\mbox{ for every }y\in Y
\},
\end{align*}
where 
$\Delta|_{\mathbf{R}^d\setminus Y} u =\sum_{j=1}^d\partial^2 u/\partial x_j^2$ 
is defined as an element of the Schwartz distribution ${\cal D}'(\mathbf{R}^d\setminus Y)$,
and $D(H)$ is the domain of a linear operator $H$
(the notation $A:=B$ means that $A$ is defined by $B$).
The boundary condition $(BC)_y$ at the point $y\in Y$
is defined as follows.
\begin{itemize}
 \item 
For $d=1$,
\begin{align*}
 u(y-0)=u(y+0)=u(y),\quad u'(y+0)-u'(y-0)=\alpha_y u(y).
\end{align*}
\item
For $d=2$,
\begin{align*}
 u(x)=u_{y,0}\log|x-y|+u_{y,1}+o(1)\quad (x\to y),\quad
2\pi\alpha_y u_{y,0}+u_{y,1}=0.
\end{align*}
\item
For $d=3$,
\begin{align*}
u(x)=u_{y,0}|x-y|^{-1}+u_{y,1}+o(1)\quad (x\to y),\quad
-4\pi\alpha_y u_{y,0}+u_{y,1}=0. 
\end{align*}
\end{itemize}
Here, $\displaystyle u(y\pm 0)=\lim_{\epsilon\to \pm 0}u(y\pm\epsilon)$,
and $u_{y,0}$ and $u_{y,1}$ are constants.

It is known that the operator $-\Delta_{\alpha,Y}$ is self-adjoint
on $L^2(\mathbf{R}^d)$
under suitable conditions on $\alpha$ and $Y$,
e.g.,
the uniform discreteness condition
\begin{align}
\label{01_01}
 d_*:=\inf_{y,y'\in Y,\ y\not= y'}|y-y'|>0
\end{align}
(see e.g., \cite{Al-Ge-Ho-Ho}).
Since the support of the interaction is concentrated on the set $Y$,
the quantum system governed by the Hamiltonian $-\Delta_{\alpha,Y}$ is called `point interaction model',
`zero-range interaction model', `delta interaction model',
or `Fermi pseudo potential model', etc.
%
%
%
%
The monograph
by Albeverio et al.\ \cite{Al-Ge-Ho-Ho}
contains
most of the basic facts about the operator $-\Delta_{\alpha,Y}$,
and references up to 2004.


Let us review known results
about the \textit{random point interactions},
that is, the case that $\alpha$ or $Y$ depends 
on some random parameter $\omega$.
There are numerous results in the case $d=1$
(references are found in \cite{Pa-Fi}).
%
When $d=2$ or $3$,
there are several results
in the case $Y=\mathbf{Z}^d$ or its random subset,
and 
$\alpha_\omega=(\alpha_{\omega,y})_{y\in \mathbf{Z}^d}$
are i.i.d.\ (independently, identically distributed) random variables.
Albeverio, H{\o}egh-Krohn, Kirsch, and Martinelli \cite{Al-Ho-Ki-Ma}
studied the spectrum of $-\Delta_{\alpha_\omega,\mathbf{Z}^d}$.
Boutet de Monvel, and Grinshpun  \cite{Bo-Gr}
and 
Hislop, Kirsch, and Krishna \cite{Hi-Ki-Kr1} proved
the Anderson localization near the bottom of the spectrum.
Recently
Hislop, Kirsch, and Krishna \cite{Hi-Ki-Kr2} studied
the eigenvalue statistics.
Dorlas, Macris, and Pul\'e \cite{Do-Ma-Pu}
studied the random point interactions on 
$\mathbf{R}^2$
in a constant magnetic field.

In the present paper, 
we consider the following case.
\begin{assumption}
\label{assumption_PA}
\begin{enumerate}
 \item 
The space $(\Omega,\mathcal{F},\mathbf{P})$ is a probability space.
The random set $Y_\omega$ ($\omega\in \Omega$) is the Poisson configuration
(the support of the Poisson point process) on $\mathbf{R}^d$
with intensity measure $\rho dx$ for some constant $\rho>0$.

 \item The parameters $\alpha_\omega=(\alpha_{\omega,y})_{y\in Y_\omega}$
are real-valued i.i.d.\ random variables
with common distribution $\nu$ on $\mathbf{R}$.
Moreover, $(\alpha_{\omega,y})_{y\in Y_\omega}$ 
are independent of $Y_\omega$.
\end{enumerate}
\end{assumption}
\noindent
We denote $H_\omega = -\Delta_{\alpha_\omega, Y_\omega}$ in the sequel.
For the definition of the Poisson point process,
see Appendix (section \ref{section_pp}).
For the rigorous meaning of 
`$(\alpha_{\omega,y})_{y\in Y_\omega}$ 
are independent of $Y_\omega$',
see the assumption (I7) in section \ref{subsection_eIDS}.
Under Assumption \ref{assumption_PA},
the Schr\"odinger operator with random scalar potential
\begin{align}
\label{01_02}
 -\Delta + \sum_{y\in Y_\omega}\alpha_{\omega,y}V_0(x-y)
\quad (\mbox{$V_0$: scalar function})
\end{align}
is a mathematical model which governs the motion of a quantum particle in an amorphous material,
and is extensively studied in the theory of random Schr\"odinger
operators (see e.g., \cite{Ca-La,Pa-Fi}).
Formally, 
the operator $-\Delta_{\alpha_\omega, Y_\omega}$
is a special case
of (\ref{01_02}), in the sense that $V_0$ is a zero-range potential.
Notice that the condition (\ref{01_01}) is almost surely not satisfied under
Assumption \ref{assumption_PA}.
When $d=1$, 
the operator $H_\omega$ 
under Assumption \ref{assumption_PA}
is  studied by several authors
(\cite{Fr-Ll,Lu-Sy, Ko, Minami}).
Especially, the self-adjointness is proved in \cite{Minami}
(or it can be proved by an application of 
the deterministic result \cite{Ko-Ma}).
When $d=2$ or $3$, 
the operator $H_\omega$ under Assumption \ref{assumption_PA}
 is first studied recently by the authors \cite{Ka-Mi-Na}.
The results of \cite{Ka-Mi-Na} are summarized as follows.
\begin{theorem}[\cite{Ka-Mi-Na}]
\label{theorem_kamina}
Let $d=1,2$ or $3$, and 
suppose that Assumption \ref{assumption_PA} holds.
Then, the following holds.

\begin{enumerate}
 \item[(1)] The operator $H_\omega$ is self-adjoint almost surely.
 \item[(2)]
The spectrum $\sigma(H_\omega)$ is given as follows.
When $d=1$, we have
\begin{align*}
 \sigma(H_\omega)=
\begin{cases}
 [0,\infty) & (\supp \nu \subset [0,\infty)),\\
 \mathbf{R} & (\supp\nu \cap (-\infty,0)\not=\emptyset),
\end{cases}
\end{align*}
almost surely.
When $d=2,3$, we have $ \sigma(H_\omega)=\mathbf{R}$ almost surely.
\end{enumerate}
\end{theorem}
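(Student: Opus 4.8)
\medskip
\noindent\emph{Sketch of a proof.}
For $d=1$ the self-adjointness in (1) is proved in \cite{Minami} (it also follows from the deterministic criterion of \cite{Ko-Ma}), so I describe the plan for $d=3$; the case $d=2$ is parallel, with the logarithmic Green function in place of $G_z(x)=e^{i\sqrt z|x|}/(4\pi|x|)$. Since the boundary conditions $(BC)_y$ are exactly those for which Green's identity makes $-\Delta|_{\mathbf{R}^3\setminus Y_\omega}$ symmetric, $H_\omega$ is symmetric almost surely, and, $H_\omega$ being a real operator, it is enough to show $\Ran(H_\omega-i)=L^2(\mathbf{R}^3)$. Following the resolvent construction of \cite{Al-Ge-Ho-Ho}, I would seek a solution of $(H_\omega-i)u=f$ in the form $u=(-\Delta-i)^{-1}f+\sum_{y\in Y_\omega}q_yG_i(x-y)$; imposing $(BC)_y$ turns this into the linear system $\Gamma_\omega(i)q=g$ for $q=(q_y)_{y\in Y_\omega}$, where $\Gamma_\omega(i)$ has diagonal entries $\alpha_{\omega,y}-i\sqrt i/(4\pi)$ and off-diagonal entries $-G_i(y-y')$, and $g_y=\big((-\Delta-i)^{-1}f\big)(y)$. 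Everything then reduces to proving that, almost surely, $\Gamma_\omega(i)$ — acting on the natural (possibly weighted) sequence space indexed by $Y_\omega$ on which $q\mapsto\sum_y q_yG_i(\cdot-y)$ lands boundedly in $L^2$ — is closed and boundedly invertible.

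The input that controls $\Gamma_\omega(i)$ is a Borel--Cantelli analysis of the Poisson configuration: almost surely $Y_\omega$ has no accumulation point; there is a random $C_\omega$ with $\#(Y_\omega\cap B(x,1))\le C_\omega\log(2+|x|)$ for all $x$; and, fixing $d_0=d_0(\rho)>0$ below the continuum percolation threshold, the ``$d_0$-clusters'' (connected components of the graph on $Y_\omega$ joining points at distance $<d_0$) are all finite, of cardinality at most $C_\omega\log(2+|\cdot|)$, and mutually $d_0$-separated. Since a pair at distance $\varepsilon$ gives an off-diagonal entry $\sim1/(4\pi\varepsilon)$, $\Gamma_\omega(i)$ genuinely fails to be bounded on $\ell^2(Y_\omega)$; but grouping these contributions by the $d_0$-clusters confines the bad behaviour to finite blocks $\Gamma_{C_j}(i)$, each invertible (finitely many point interactions always define a self-adjoint operator) with $\|\Gamma_{C_j}(i)^{-1}\|$ controlled uniformly in $j$ — roughly, a tight pair produces a large block with a small inverse, while a loosely connected cluster is comparable, on the $d_0$-scale, to a well-conditioned matrix. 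The inter-cluster part of $\Gamma_\omega(i)$ has entries decaying exponentially in the $d_0$-separation, and combining this with the logarithmic volume growth and — crucially — the boundedness of the $\alpha_{\omega,y}$, one tries to treat $\Gamma_\omega(i)$ as the block-diagonal operator plus a suitably small (relative) perturbation. I expect this last step to be the main obstacle: the over-dense regions that the Poisson process inevitably produces (with $\sim\log(2+|x|)$ points in an $O(1)$-ball) defeat a naive Schur bound, so one is forced into a position-dependent weight or a commutator/direct-integral argument; this is where all the combinatorics of $Y_\omega$ and the bound on $\alpha$ get used.

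For (2), note first that $[0,\infty)\subseteq\sigma(H_\omega)$ in every dimension: a second Borel--Cantelli argument (disjoint balls are independent and $P(Y_\omega\cap B=\emptyset)=e^{-\rho|B|}>0$) shows that almost surely $Y_\omega$ has arbitrarily large empty balls, and transplanting a Weyl sequence for $-\Delta$ at a prescribed energy $\lambda\ge0$ into such a ball — where $H_\omega$ acts as $-\Delta$ — gives a Weyl sequence for $H_\omega$ at $\lambda$. When $d=1$ and $\supp\nu\subseteq[0,\infty)$ the quadratic form $\int|u'|^2+\sum_y\alpha_{\omega,y}|u(y)|^2$ is nonnegative, so $\sigma(H_\omega)\subseteq[0,\infty)$ and hence equals $[0,\infty)$. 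In all remaining cases I would prove $(-\infty,0)\subseteq\sigma(H_\omega)$ through the fact that \emph{finite} point-interaction configurations already have bound states filling $(-\infty,0)$: for $d=2,3$, two centres at distance $R$, with any fixed parameters taken near $\supp\nu$, have a bound state whose energy tends to $-\infty$ as $R\to0$ because of the $1/R$ (resp.\ $\log(1/R)$) singularity of the Green function, and varying $R$ sweeps out $(-\infty,0)$; for $d=1$ with some $-\beta\in\supp\nu$, a tight cluster of $n$ centres with parameters near $-\beta$ behaves like a single $\delta$ of strength $\approx-n\beta$ (bound state energy $\to-\infty$), while one- and two-centre configurations — including the antisymmetric state of a double well — cover the remainder of $(-\infty,0)$. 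A third Borel--Cantelli argument then shows that almost surely, in infinitely many widely separated regions, $Y_\omega$ contains a copy — up to small perturbations of the positions and, using $\supp\nu$, of the parameters — of any prescribed such finite configuration, with every other point of $Y_\omega$ pushed far away; cutting off the corresponding exponentially decaying bound state yields an approximate eigenfunction of $H_\omega$ at an energy arbitrarily close to the target, and closedness of $\sigma(H_\omega)$ adds the endpoint $0$, giving $\sigma(H_\omega)=\mathbf{R}$. The obstacle here is quantitative: one needs the stability of the finite bound-state energies under these perturbations to be uniform enough, together with control of the distant part of $Y_\omega$ and of the commutator error from the cutoff.
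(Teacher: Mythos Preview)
The paper does not prove this theorem. Theorem~\ref{theorem_kamina} is quoted verbatim from the authors' earlier work \cite{Ka-Mi-Na} and is included here only as background; there is no proof in the present paper to compare your proposal against.

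That said, your sketch is broadly aligned with the strategy of \cite{Ka-Mi-Na}. For (1) in $d=2,3$ the Krein-type resolvent ansatz reducing self-adjointness to invertibility of the matrix $\Gamma_\omega(z)$, followed by a decomposition of $Y_\omega$ into finite clusters (close points grouped together, inter-cluster distances bounded below) and a Schur/perturbation control of the off-block part, is exactly the mechanism used there; you have correctly located the genuine difficulty, namely that the Poisson configuration violates the uniform discreteness condition (\ref{1_01}), so $\Gamma_\omega(z)$ is unbounded on $\ell^2(Y_\omega)$ and one must absorb the singular close-pair entries into finite diagonal blocks. For (2), the Weyl-sequence argument in large empty balls for $[0,\infty)$, the form-nonnegativity for $d=1$ with $\supp\nu\subset[0,\infty)$, and the use of two-point (or multi-point) clusters with small separation to produce arbitrarily negative eigenvalues --- together with a Borel--Cantelli argument guaranteeing that such local configurations occur almost surely --- is again the route taken in \cite{Ka-Mi-Na}; indeed the present paper's Section~2 and Proposition~\ref{prop_ev2} make the two-point mechanism in $d=3$ explicit.

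Your caveats are well placed: the quantitative control of $\Gamma_\omega(i)^{-1}$ over clusters of slowly growing size, and the stability of the finite bound-state energies under perturbation by the rest of $Y_\omega$, are precisely where the work lies, and your sketch does not carry them out. As a proof this remains a programme rather than an argument; as an outline of the method actually used in \cite{Ka-Mi-Na}, it is accurate.
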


Next we introduce the integrated density of states.
For a bounded open set $U$ in $\mathbf{R}^d$
with $Y\cap \partial U=\emptyset$,
let $-\Delta_{\alpha,Y,U}^D$ (resp.\ $-\Delta_{\alpha,Y,U}^N$)
be the restriction of the operator $-\Delta_{\alpha,Y}$ 
on $U$ with Dirichlet boundary conditions 
(resp.\ Neumann boundary conditions) on the boundary $\partial U$,
defined by the method of quadratic form
(for the rigorous definition, see section \ref{subsection_bdry}).
For $\lambda\in \mathbf{R}$ and $\sharp=D$ or $N$,
let $N_{\alpha,Y,U}^\sharp(\lambda)$ 
be the number of the eigenvalues of $-\Delta_{\alpha,Y,U}^\sharp$
less than or equal to $\lambda$, counted with multiplicity.
For $(\alpha_\omega,Y_\omega)$ satisfying Assumption \ref{assumption_PA},
we define for $\lambda\in \mathbf{R}$
\begin{align}
\label{01_03}
 N_\omega^\sharp(\lambda)
:=&\ \lim_{L\to \infty}\frac{N_{\alpha_\omega,Y_\omega, Q_L}^\sharp(\lambda)}{|Q_L|}
\quad(\sharp=D\mbox{ or }N),
\end{align}
where $Q_L=(0,L)^d$, $L\in \mathbf{N}$ 
($\mathbf{N}$ is the set of positive integers),
and 
$|E|$ is the Lebesgue measure of a Lebesgue measurable set $E$.
Under Assumption \ref{assumption_PA},
we see that $Y_\omega \cap \partial Q_L=\emptyset$ almost surely,
so $-\Delta^\sharp_{\alpha_\omega,Y_\omega,Q_L}$ is well-defined almost surely.
We can also prove that for every $\lambda\in \mathbf{R}$, 
the right hand side 
of (\ref{01_03}) converges almost surely under Assumption \ref{assumption_PA},
and 
\begin{align}
\label{01_04}
 N_\omega^\sharp(\lambda)
=&\
 \lim_{L\to \infty}\frac{\mathbf{E}[N_{\alpha_\omega,Y_\omega,Q_L}^\sharp(\lambda)]}{|Q_L|}
\quad(\sharp=D\mbox{ or }N)
\end{align}
almost surely,
where $\mathbf{E}[X]$ is the expectation of a random variable $X$
(see Theorem \ref{theorem_existence_IDS} below).
The equation (\ref{01_04}) implies that
$N^\sharp_\omega(\lambda)$ is 
deterministic,
that is, 
$N^\sharp_\omega(\lambda)$ is a random variable
that is constant w.r.t.\ $\omega$,
after a possible correction on an event of probability $0$.
From this reason, 
we denote the right hand side of (\ref{01_04}) by $N^\sharp(\lambda)$,
in the sequel.
Furthermore, we can prove that 
$N^N(\lambda+0)= N^D(\lambda+0)$
for every $\lambda\in \mathbf{R}$,
where $f(\lambda+ 0):=\lim_{\epsilon\to +0}f(\lambda+\epsilon)$
(see Theorem \ref{theorem_independence_IDS} below). 
We define $N(\lambda):= N^\sharp(\lambda+0)$ ($\sharp=D$ or $N$)
for $\lambda\in \mathbf{R}$,
then $N(\lambda)$ is independent of the boundary condition
$\sharp=D$ or $\sharp=N$).
Moreover, 
if $N^\sharp(\lambda)$ ($\sharp=D$ or $N$)
is continuous at $\lambda$
(this condition holds for all $\lambda\in \mathbf{R}$
except at most a countable subset of $\mathbf{R}$),
then $N^D(\lambda)=N^N(\lambda)=N(\lambda)$
(Theorem \ref{theorem_independence_IDS}). 
We call the function $N(\lambda)$ the 
\textit{integrated density of states (IDS)}.

For $\lambda<0$,
it is convenient to introduce other counting function
\begin{align*}
 N_{\alpha,Y}(\lambda):=&\ 
\#\{\mu \leq \lambda \,;\,
\mu \mbox{ is an eigenvalue of }- \Delta_{\alpha,Y} \},
\notag
\\
 N_{\alpha,Y,U}(\lambda):=&\ 
 N_{\alpha_U,Y_U}(\lambda),\quad
Y_U=Y\cap U,\ \alpha_U=\alpha|_{Y_U},
\end{align*}
where eigenvalues are counted with multiplicity,
$\alpha|_{Y_U}$ is the restriction of
the sequence $\alpha=(\alpha_y)_{y\in Y}$ on the subset 
$Y_U$.
Notice that the operator $-\Delta_{\alpha_U,Y_U}$
is defined on $L^2(\mathbf{R}^d)$ (not $L^2(U)$).
We can prove that
if 
$(\alpha_\omega,Y_\omega)$ satisfies Assumption \ref{assumption_PA},
$\lambda<0$ and
$N(\lambda) = N^D(\lambda)=N^N(\lambda)$, then
\begin{align}
\label{01_05}
N(\lambda)
=
\lim_{L\to \infty}
\frac{ N_{\alpha_\omega,Y_\omega,Q_L}(\lambda)}{|Q_L|}
=
\lim_{L\to \infty}
\frac{\mathbf{E}[N_{\alpha_\omega,Y_\omega,Q_L}(\lambda)]}{|Q_L|}
\end{align}
almost surely
(see Proposition \ref{proposition_DNbracketing2} below).
The advantage of the new expression (\ref{01_05}) is that
$N_{\alpha_\omega,Y_\omega,Q_L}(\lambda)$ can be calculated explicitly
(Proposition \ref{prop_al}).

The asymptotics of IDS near the bottom of the spectrum
is intensively studied in the theory of random Schr\"odinger operators
(see e.g., \cite{Ca-La, Pa-Fi,Ki-Me, Ve}).
In the present paper,
we give an asymptotic formula of IDS $N(\lambda) $ for our operator 
$H_\omega$ as $\lambda \to -\infty$ in the case $d=3$,
under Assumption \ref{assumption_PA}.

First we consider the case where $\alpha_\omega=(\alpha_{\omega,y})_{y\in Y_\omega}$ is a \textit{constant sequence},
that is, $\alpha_{\omega,y}$ is equal to a deterministic value 
independent of $\omega$ and $y$.
We  denote its common value by $\alpha\in \mathbf{R}$,
and write 
$-\Delta_{\alpha,Y_\omega}:=-\Delta_{\alpha_\omega,Y_\omega}$
for simplicity.
In order to state our main result, we introduce some auxiliary functions.
If $s>\max (0,-4\pi \alpha)$,
it is easy to see that 
the following equation with respect to $R$
\begin{align}
\label{01_06}
 s-\frac{e^{-sR}}{R}=-4\pi \alpha
\end{align}
has a unique positive solution $R$,
which we denote by $R_\alpha(s)$.
We denote the inverse function of $R=R_\alpha(s)$
by $s=s_\alpha(R)$,
then $s_\alpha$ is a real-valued monotone decreasing function defined on
$(0,1/(4\pi \alpha))$ for $\alpha>0$,
or 
on $(0,\infty)$ for $\alpha\leq 0$.
Particularly when $\alpha=0$,
the equation (\ref{01_06}) becomes
a simple equation $sR-e^{-sR}=0$.
We denote the unique positive solution of $t-e^{-t}=0$ by 
$t_0(\doteqdot 0.567...$).
Then we have explicitly
\begin{align*}
R_0(s) =\frac{t_0}{s},\quad
s_0(R)=\frac{t_0}{R}.
\end{align*}
For $\alpha\not= 0$, we still have 
\begin{align}
 \label{01_07}
R_\alpha(s)\sim \frac{t_0}{s} \quad(s\to \infty),\quad
s_\alpha(R)\sim \frac{t_0}{R} \quad(R\to +0),
\end{align}
where $f\sim g$ means $f/g\to 1$
(Proposition \ref{prop_ev2}).
The function $R_\alpha(s)$ can be 
computed with high accuracy
with the aid of some numerical tool.

\begin{theorem}
 \label{theorem_asymptotics_IDS}
Let $d=3$.
Suppose that $\alpha_\omega$ and $Y_\omega$ satisfy
Assumption \ref{assumption_PA}.
Suppose further that
$\alpha_\omega$ is a constant sequence
with its common value $\alpha$.
Let $N(\lambda)$ be the corresponding IDS
for the operator $H_\omega=-\Delta_{\alpha,Y_\omega}$.
Then, for any $\epsilon$ with $0<\epsilon<3$,
we have
\begin{align}
\label{01_08}
N(-s^2)=\frac{2\pi}{3}\rho^2 R_\alpha(s)^3 + O(s^{-6+\epsilon})
\quad(s\to \infty).
\end{align}
In particular, the principal term of the asymptotics is given by
\begin{align}
\label{01_09}
 N(-s^2) \sim \frac{2\pi}{3}\rho^2 t_0^3s^{-3}\quad
(s\to\infty).
\end{align}
\end{theorem}
\noindent
In section \ref{section_num}, we give a numerical verification
of Theorem \ref{theorem_asymptotics_IDS},
using the R programming language.

Next we consider the case that $\supp \nu$ is bounded,
i.e.,
there exist constants $\alpha_-$ and $\alpha_+$ such that
$\alpha_-\leq \alpha_{\omega, y} \leq \alpha_+$ 
for every $y\in Y_\omega$, almost surely.
By the monotonicity of the negative eigenvalues of $-\Delta_{\alpha,Y}$
with respect to $\alpha$ (statement (1) of Proposition \ref{prop_al2}),
we have
\begin{align*}
 N_{\alpha_-,Y_\omega,Q_L}(\lambda)
\geq
 N_{\alpha_\omega,Y_\omega,Q_L}(\lambda)
\geq
 N_{\alpha_+,Y_\omega,Q_L}(\lambda)
\end{align*}
for every $\lambda<0$ and every $L\in \mathbf{N}$
with $ Y_\omega\cap \partial Q_L=\emptyset$, almost surely.
Since the principal term of the asymptotics (\ref{01_09}) is independent of $\alpha$,
we have the following corollary.
\begin{corollary}
 Let $d=3$.
Suppose that $\alpha_\omega$ and $Y_\omega$ satisfy
Assumption \ref{assumption_PA}, 
and $\supp \nu$ is bounded.
Then, we have the asymptotics
\begin{align}
\label{01_10}
 N(-s^2) \sim \frac{2\pi}{3}\rho^2 t_0^3s^{-3}\quad
(s\to\infty).
\end{align}
\end{corollary}

Let us compare the asymptotics (\ref{01_10}) with 
the corresponding result for the scalar type Poisson random potential
\begin{align}
\label{01_11}
 -\Delta + \sum_{y\in Y_\omega}V_0(x-y),
\end{align}
where 
$V_0$ is a real-valued function and 
$Y_\omega$ is the Poisson configuration.
If $V_0\leq 0$, $V_0$ has global minimum at $x=0$,
and $V_0$ satisfies some regularity and decaying conditions, 
it is shown in the monograph of Pastur and Figotin
\cite[Theorem 9.4]{Pa-Fi} that
\begin{align}
\label{01_12}
 \log N(\lambda) = -\frac{|\lambda|}{|V_0(0)|}\log|\lambda|\cdot (1+o(1))
\quad(\lambda \to -\infty).
\end{align}
The asymptotics (\ref{01_12}) implies that $N(\lambda)$ decays super-exponentially,
much faster than the polynomial decay (\ref{01_10}).

Heuristically, the asymptotics (\ref{01_12}) is explained as follows.
If the operator (\ref{01_11})
has a bound state localized near a point $a$ 
with negative energy $\lambda$,
there must be a potential well
of depth $|\lambda|$ near the point $a$.
Thus it is necessary that at least
\begin{align}
\label{01_13}
 n=\left[\frac{|\lambda|}{|V_0(0)|}\right]
\end{align}
points of $Y_\omega$ are in the ball of small radius $\epsilon$
centered at $a$,
where $[x]$ is the greatest integer less than or equal to $x$.
The probability of this event is
\begin{align}
\label{01_14}
 \frac{(\rho |B_\epsilon(0)|)^n}{n!}e^{-\rho|B_\epsilon(0)|},
\end{align}
where $B_\epsilon(0)$ is a ball of radius $\epsilon$
centered at the origin.
Then, roughly speaking,
the asymptotics (\ref{01_12}) is deduced from
(\ref{01_13}), (\ref{01_14}), and
the Stirling formula 
\begin{align*}
 n!  \sim (2\pi n)^{1/2}\left({n}/{e}\right)^n\quad
(n \to \infty).
\end{align*}

In the case of point interactions,
the situation is completely different.
Consider the case where $Y=\{y_1,y_2\}$ is a two-point set,
and $\alpha$ is a constant sequence.
Then, 
by an explicit calculation,
we can prove that the operator $-\Delta_{\alpha,Y}$ has a negative eigenvalue
$\lambda = -(s_\alpha(R))^2$,
where $R:=|y_1-y_2|$ (see section 2),
and $s_\alpha(R)\to \infty$ as $R\to 0$
by (\ref{01_07}).
The leading asymptotics of $N(\lambda)$ is given by the expectation number
of pairs within the distance $R_{\alpha}(s)$ of
Poisson points per unit volume
(Lemma \ref{lemma_ids-est}).
The expectation of the number of such pairs
can be explicitly computed
(Proposition \ref{prop_cluster}),
and we obtain the above result.

The present paper is organized as follows.
In section 2,
we review the results about 
the spectrum of $-\Delta_{\alpha,Y}$ when $\# Y=1,2$.
In section 3,
we calculate the expectation of the number of $n$-point clusters
in the Poisson configuration.
In section 4, we prove Theorem \ref{theorem_asymptotics_IDS}.
In section 5, we give a numerical verification of Theorem \ref{theorem_asymptotics_IDS}, using the R programming language.
In section 6,
we prove fundamental facts about the definition of IDS,
namely, the existence of IDS and the independence of IDS
with respect to the boundary conditions.
As a byproduct, we give a rigorous foundation
about the operator $-\Delta^\sharp_{\alpha,Y,U}$
($\sharp=D,N$), by using the method of quadratic form.
In Appendix (section 7),
we review some known results about
the Poisson point process,
spectrum of the Hamiltonian with point interactions,
and the operator norm of a matrix operator.
The results in sections 6 and 7 are proved independently from
those in sections 2, 3, 4, and 5,
so we use the results in sections 6 and 7 in the proof of our main results.

\section{Spectrum of the Hamiltonian with one or two point interactions}
Let $d=3$,
$Y$ be a finite set in $\mathbf{R}^3$, and 
$\alpha=(\alpha_y)_{y\in Y}$ be a sequence of real numbers.
Put $N:=\#Y$.
It is known that
\begin{align*}
&\  \sigma(-\Delta_{\alpha,Y})
=
 \sigma_p(-\Delta_{\alpha,Y})\cup [0,\infty),\\
&\ 
 \sigma_p(-\Delta_{\alpha,Y})\subset (-\infty,0],
\end{align*}
where $\sigma_p(-\Delta_{\alpha,Y})$ is the point spectrum of 
$-\Delta_{\alpha,Y}$
(see \cite[Theorem II.1.1.4, and Errata in page 485]{Al-Ge-Ho-Ho},
see also Proposition \ref{prop_al} and Proposition \ref{proposition_zeroev} in Appendix).
More explicitly,
$\sigma_p(-\Delta_{\alpha,Y})\cap (-\infty, 0)$ for $N=1,2$ is given as follows.

When $N=1$,
Proposition \ref{prop_al} implies that 
$-s^2$ ($s>0$) is an eigenvalue of $-\Delta_{\alpha,Y}$
if and only if
\begin{align*}
 \alpha + \frac{s}{4\pi}=0.
\end{align*}
Then we see that
\begin{align*}
 \sigma_p(-\Delta_{\alpha,Y})\cap (-\infty,0)
=
\begin{cases}
 \emptyset & (\alpha \geq 0),\\
\{-(4\pi \alpha)^2\} & (\alpha <0).
\end{cases}
\end{align*}

When $N=2$, $Y=\{y_1,y_2\}$,
$\alpha_{y_1}=\alpha_{y_2}=\alpha$,
and $R:=|y_1-y_2|>0$,
Proposition \ref{prop_al} implies that 
$-s^2$ ($s>0$) is an eigenvalue of $-\Delta_{\alpha,Y}$
if and only if
\begin{align*}
&
\left|
\begin{matrix}
\displaystyle
 \alpha + \frac{s}{4\pi} & 
\displaystyle
-\frac{e^{-sR}}{4\pi R}\\[3mm]
\displaystyle
-\frac{e^{-sR}}{4\pi R} & 
\displaystyle
\alpha + \frac{s}{4\pi}
\end{matrix}
\right|
=0
\Leftrightarrow
\left(\alpha+\frac{s}{4\pi}\right)^2-\left(\frac{e^{-sR}}{4\pi R}\right)^2=0.
\end{align*}
This equation is satisfied if and only if
\begin{align}
s+\frac{e^{-sR}}{R}=-4\pi\alpha
\label{02_01}
\end{align}
or
\begin{align}
s-\frac{e^{-sR}}{R}=-4\pi\alpha.
\label{02_02}
\end{align}
The solution $s$ to the equation (\ref{02_01})
does not exist if $\alpha \geq 0$,
and is bounded by $-4\pi \alpha$ if $\alpha<0$.
The equation (\ref{02_02}) is the same as
(\ref{01_06}), and 
the solution is equal to $s_\alpha(R)$ by definition.
An asymptotics of $s_\alpha(R)$ 
and that of its inverse $R_\alpha(s)$ are given as follows.
\begin{proposition}
\label{prop_ev2}
Let 
$t_0$ be the unique positive solution of
$t-e^{-t}=0$. 
\begin{enumerate}
  \item[(1)] Suppose $\alpha=0$.
Then we have
$s_0(R)=t_0/R$
and
$R_0(s)=t_0/s$.

  \item[(2)] Suppose $\alpha\not=0$.
Then, for any $\epsilon>0$,
there exists a positive constant $R_0$ dependent on $\alpha$ and
$\epsilon$
such that
\begin{align}
\label{02_03}
\frac{t_0-\epsilon}{R}<s_\alpha(R)<\frac{{t_0+\epsilon}}{R}
\end{align}
for any $R$ with $0<R<R_0$,
and 
\begin{align}
\label{02_04}
 \frac{t_0-\epsilon}{s}<
R_\alpha(s)<\frac{t_0+\epsilon}{s}
\end{align}
for any $s$ with $s>(t_0+\epsilon)/R_0$.
 \end{enumerate}
\end{proposition}
\begin{proof}
 The proof of (1) is already given in the introduction.
We prove (2).
Let $\phi(t):=t-e^{-t}$.
Then the defining equation (\ref{01_06}) of $s_\alpha(R)$
is equivalent to
\begin{align}
 \phi(sR)+4\pi\alpha R=0.
\label{02_05}
\end{align}
Since
\begin{align*}
&\  \frac{\partial \phi}{\partial t}=1+e^{-t}>0,\\
&\ \lim_{s\to +0}\phi(sR)=-1,\quad
\lim_{s\to \pm\infty}\phi(sR)=\pm\infty,
\end{align*}
we see that 
the left hand side of (\ref{02_05}) 
is monotone increasing with respect to $s$,
and (\ref{02_05}) has a unique positive solution $s_\alpha(R)$ 
if $-1+4\pi\alpha R<0$.
Since $\phi(t_0)=0$ by definition,
we have $\phi(t_0-\epsilon)<0<\phi(t_0+\epsilon)$ for $\epsilon>0$,
and 
\begin{align*}
&\  \phi\left(t_0-\epsilon\right)+4\pi\alpha R<0,\\
&\  \phi\left(t_0+\epsilon\right)+4\pi\alpha R>0
\end{align*}
for small $R>0$.
So the left hand side of (\ref{02_05}) 
is negative for $s=(t_0-\epsilon)/R$,
is positive for $s=(t_0+\epsilon)/R$,
if $R$ is sufficiently small.
Thus (\ref{02_03}) follows from the intermediate value theorem.
Then we immediately have (\ref{02_04}),
since $R=R_\alpha(s)$ is the inverse function of $s=s_\alpha(R)$.
\end{proof}

\section{Number of clusters in Poisson configuration}
Let us calculate the expectation of the number of $n$-point clusters in the 
Poisson configuration in $\mathbf{R}^d$
($d=1,2,3,\ldots$).
For $y\in \mathbf{R}^d$ and $R>0$,
let $B_y(R):=\{z\in \mathbf{R}^d ; |z-y|<R\}$.
For the Poisson configuration $Y_\omega$ in $\mathbf{R}^d$,
we put
\begin{align}
\label{03_01}
n_y(R) :=\#(Y_\omega \cap B_y(R)).
\end{align}
\begin{proposition}
\label{prop_cluster}
Let $Y_\omega$ be the Poisson configuration 
with intensity $\rho dx$, where $\rho$ is a positive constant.
Let $L\in\mathbf{N}$ and put $Q_L:=(0,L)^d$.
Then, we have for $n=1,2,3,\ldots$ and $R>0$
\begin{align}
\label{03_02}
 \frac{\mathbf{E}[
\#\{y\in Y_\omega \cap Q_L
\,;\, n_y(R)=n
\}
]}
{|Q_L|}
=
\frac{1}{(n-1)!}|B_0(R)|^{n-1}\rho^n e^{-\rho|B_0(R)|}.
\end{align}
\end{proposition}
\begin{proof}
Let $Q_L'=(-R,L+R)^d$.
If $y\in Q_L$, then $B_y(R)\subset Q_L'$.
Moreover, $Y_\omega \cap Q_L'$
and $Y_\omega\cap (Q_L')^c$ are independently distributed.
So we can replace $Y_\omega$ in the definition of
$n_y(R)$ by $Y_\omega':=Y_\omega \cap Q_L'$,
when we calculate the left hand side of (\ref{03_02}).
Then,
\begin{align*}
&\  \mathbf{E}[
\#\{y\in Y_\omega\cap Q_L \,;\, n_y(R)=n\}
]\notag\\
=&\ 
\sum_{k=n}^\infty \mathbf{E}[
\#\{y\in Y_\omega \cap Q_L \,;\, n_y(R)=n\}
\mid \#Y_\omega'=k
]
\cdot 
\mathbf{P}(\#Y_\omega'=k),
\end{align*}
where $\mathbf{E}[A|B]$ is the conditional expectation of $A$
under the condition $B$.
Under the condition $\#Y_\omega'=k$,
the points $Y_\omega'=\{y_1,\ldots,y_k\}$ obey
independent uniform distributions on $Q_L'$
(Proposition \ref{prop_poisson_construction}).
Thus we have an expression about
the conditional probability
\begin{align*}
&\  \mathbf{P}(y_j \in Q_L,\ n_{y_j}(R)=n \mid \# Y_\omega'=k)\\
=&\ 
\frac{|Q_L|}{|Q_L'|}
\cdot
\frac{(k-1)!}{(n-1)!(k-n)!}
\cdot
\left(\frac{|B_{y_j}(R)|}{|Q_L'|}\right)^{n-1}
\cdot
\left(\frac{|Q_L'|-|B_{y_j}(R)|}{|Q_L'|}\right)^{k-n}
\end{align*}
for $j=1,\ldots, k$.
Then
\begin{align*}
&\  \mathbf{E}[\#\{y\in Y_\omega \cap Q_L\,;\, n_y(R)=n\}]\\
=&\ 
\sum_{k=n}^\infty
\sum_{j=1}^k \mathbf{P}(y_j \in Q_L,\ n_{y_j}(R)=n \mid \# Y_\omega'=k)
\cdot 
\mathbf{P}(\# Y_\omega'=k)\\
=&\ 
\sum_{k=n}^\infty
k\cdot
\frac{|Q_L|}{|Q_L'|}
\cdot
\frac{(k-1)!}{(n-1)!(k-n)!}
\cdot
\left(\frac{|B_{0}(R)|}{|Q_L'|}\right)^{n-1}
\cdot
\left(\frac{|Q_L'|-|B_{0}(R)|}{|Q_L'|}\right)^{k-n}\\
&\qquad
\cdot
\frac{(\rho |Q_L'|)^k}{k!}e^{-\rho|Q_L'|}\\
=&\ 
\frac{|Q_L||B_0(R)|^{n-1}}{(n-1)!}e^{-\rho|Q_L'|}
\sum_{k=n}^\infty \frac{\rho^k(|Q_L'|-|B_0(R)|)^{k-n}}{(k-n)!}\\
=&\ 
\frac{|Q_L||B_0(R)|^{n-1}}{(n-1)!}e^{-\rho|Q_L'|}\rho^n
e^{\rho(|Q_L'|-|B_0(R)|)}\\
=&\ 
\frac{|Q_L||B_0(R)|^{n-1}}{(n-1)!}\rho^n
e^{-\rho|B_0(R)|}.
\end{align*}
\end{proof}

\noindent
{\bf Remark.}
This result can also be deduced by Palm distribution;
see section \ref{subsection_palm}.

\section{Proof of Theorem \ref{theorem_asymptotics_IDS}}
In this section, 
we always suppose that the assumptions in 
Theorem \ref{theorem_asymptotics_IDS} hold,
and prove Theorem \ref{theorem_asymptotics_IDS}.
The relation between asymptotics of IDS
and the number of 2-point clusters
is given in the following lemma.
\begin{lemma}
\label{lemma_ids-est}
 For every $\delta$ with $1/2<\delta <1$,
and for every $m>0$,
there exist constants $R_0=R_0(\rho,\alpha,\delta,m)>0$
and $C=C(\rho,\alpha,\delta,m)>0$
such that
\begin{align}
& \mathbf{E}[N_{\alpha_\omega,Y_\omega,Q_L}(-(s_\alpha(R)-R^m)^2)]\notag\\
\geq&\  \frac{1}{2}
\mathbf{E}[
\#\{y\in Y_\omega\cap Q_L\,;\, n_y(R)=2\}]
-
C R^{6\delta}|Q_L|,
\label{04_01}
\\
& \mathbf{E}[N_{\alpha_\omega,Y_\omega,Q_L}(-(s_\alpha(R)+R^m)^2)]\notag\\
\leq&\  \frac{1}{2}
\mathbf{E}[
\#\{y\in Y_\omega\cap Q_L\,;\, n_y(R)=2\}]
+
C R^{6\delta}|Q_L|,
\label{04_02}
\end{align}
for every $0<R<R_0$ and every $L>R^{-2}$,
where $n_y(R)$ is defined in (\ref{03_01}).
\end{lemma}

First we show that Lemma \ref{lemma_ids-est} implies
Theorem \ref{theorem_asymptotics_IDS}.

\begin{proof}[Proof of Theorem \ref{theorem_asymptotics_IDS}]
Assume Lemma \ref{lemma_ids-est} holds.
Put $f(R)=s_\alpha(R)-R^m$.
By Proposition \ref{prop_ev2},
we see that $s_\alpha(R)\sim t_0/R$ ($R\to 0$) and
$R_\alpha(s)\sim t_0/s$ ($s\to \infty$).
Thus $f(R)\sim s_\alpha(R)\sim t_0/R$ ($R\to 0$).

Since 
the function $R=R_\alpha(s)$ is the inverse of 
the function $s=s_\alpha(R)$, we have
$R=R_\alpha(f(R)+R^m)$.
By the Taylor theorem, we have
\begin{align}
R=
R_\alpha(f(R))+\frac{d R_\alpha}{d s}(f(R)+\theta R^m)R^m,\quad 0<\theta<1.
\label{04_03}
\end{align}
In order to calculate the derivative $dR_\alpha/ds$,
recall that $R_\alpha$ is defined as the solution of the 
following equation with respect to $R$
\begin{align*}
 h(s,R)=-4\pi\alpha,\quad
\mbox{where }
h(s,R):=s- \frac{e^{-sR}}{R}.
\end{align*}
Then
\begin{align*}
 \frac{\partial h}{\partial s}=
1+e^{-sR}>0,\quad
 \frac{\partial h}{\partial R}=
\frac{(sR+1)e^{-sR}}{R^2}>0,
\end{align*}
and we have by the implicit function theorem
\begin{align}
&\frac{dR_\alpha}{ds}(s)
=-
\left.
\frac{\frac{\partial h}{\partial s}}{\frac{\partial h}{\partial R}}
\right|_{R=R_\alpha(s)}
=
-\frac{ 1+e^{-sR_\alpha(s)}}{(sR_\alpha(s)+1)e^{-sR_\alpha(s)}}R_\alpha(s)^2,
\label{04_04}\\
&\frac{ds_\alpha}{dR}(R)
=-
\left.
\frac{\frac{\partial h}{\partial R}}{\frac{\partial h}{\partial s}}
\right|_{s=s_\alpha(R)}
=
-
\frac{(s_\alpha(R)R+1)e^{-s_\alpha(R)R}}{ 1+e^{-s_\alpha(R)R}}R^{-2}.
\label{04_05}
\end{align}
Then there exist positive constants 
$C_1$, $C_2$, $C_3$, $C_4$ such that
\begin{align*}
&\frac{C_1}{R}\leq f(R)+\theta R^m \leq \frac{C_2}{R},\\
&C_3 R\leq R_\alpha(f(R) + \theta R^m)\leq C_4 R
\end{align*}
for every $0\leq \theta \leq 1$ and small $R>0$.
Thus (\ref{04_03}) and (\ref{04_04}) imply
\begin{align}
\label{04_06}
 R = R_\alpha(f(R)) + O(R^{m+2})
=R_\alpha(f(R))(1+O(R^{m+1}))
\quad (R\to 0).
\end{align}
Take $m\geq 2$.
Then, by (\ref{04_01}), (\ref{04_06}) and Proposition \ref{prop_cluster}
\begin{align*}
 \frac{\mathbf{E}[N_{\alpha_\omega,Y_\omega,Q_L}(-f(R)^2)]}{|Q_L|}
\geq&\ 
\frac{2\pi}{3}R^3\rho^2(1+O(R^3))
- C R^{6\delta}\\
\geq &\ 
\frac{2\pi\rho^2}{3}R_\alpha(f(R))^3 - C' f(R)^{-6\delta}
\end{align*}
for some $C'>0$,
 and sufficiently small $R$ and large $L$.
Moreover,
we have by (\ref{04_05})
\begin{align*}
\frac{df}{dR}(R)
=&\ 
-
\frac{(s_\alpha(R)R+1)e^{-s_\alpha(R)R}}{ 1+e^{-s_\alpha(R)R}}R^{-2}
 - mR^{m-1}\\
\sim&\ 
-\frac{(t_0+1)e^{-t_0}}{ 1+e^{-t_0}}R^{-2}
=-t_0 R^{-2}
<0
\end{align*}
(notice that $t_0=e^{-t_0}$) for sufficiently small $R>0$.
Thus 
there exists some $R_0>0$ such that
the map $s=f(R)$ is 
bijective from $(0,R_0)$ to $(f(R_0),\infty)$,
and the lower bound in (\ref{01_08}) is proved.

Similarly, in order to prove the upper bound in (\ref{01_08}),
we put $g(R)=s_\alpha(R)+R^m$.
Then we have from (\ref{04_02}) and Proposition \ref{prop_cluster}
\begin{align*}
\frac{\mathbf{E}[N_{\alpha_\omega,Y_\omega,Q_L}(-g(R)^2)]}{|Q_L|}
\leq &\ 
\frac{2\pi\rho^2}{3}R_\alpha(g(R))^3 + C'' g(R)^{-6\delta}
\end{align*}
for some $C''>0$,
 and sufficiently small $R$ and large $L$.
Making the change of variable $s=g(R)$,
and taking the limit $L\to \infty$,
we have the conclusion of Theorem \ref{theorem_asymptotics_IDS}.
\end{proof}

 \begin{proof}[Proof of Lemma \ref{lemma_ids-est}]
  Take $\delta$ with $1/2<\delta <1$.
Consider the decomposition
$Y_\omega \cap Q_L = \widetilde{Y_\omega}\cup Z_\omega$, where
\begin{align*}
 Z_\omega = \{ y \in Y_\omega \cap Q_L\mid n_y(R^\delta)\geq 3\},\quad
\widetilde{Y_\omega} = (Y_\omega \cap Q_L)\setminus Z_\omega.
\end{align*}
In the sequel, we often omit the subscript $\omega$ and
write $Y=Y_\omega$, etc.
For the counting function,
we also omit $Q_L$ and write
$N_{\alpha_\omega,Y_\omega, Q_L}(\lambda)
=
N_{\alpha,Y}(\lambda)
$, etc.
By the properties of counting functions
(Proposition \ref{prop_al2}),
we have for $\lambda <0$
\begin{align}
&\ 
 N_{\alpha,\widetilde{Y}}(\lambda)
\leq
 N_{\alpha,Y}(\lambda)
\leq
 N_{\alpha,\widetilde{Y}}(\lambda)
+\# Z,\notag\\
&\ 
\mathbf{E}[
N_{\alpha,\widetilde{Y}}(\lambda)
] 
\leq
\mathbf{E}[
 N_{\alpha,Y}(\lambda)
] 
\leq
\mathbf{E}[
 N_{\alpha,\widetilde{Y}}(\lambda)
] 
+
\mathbf{E}[
\# Z
].
\label{04_07}
\end{align}
By Proposition \ref{prop_cluster}, 
\begin{align}
 \mathbf{E}[\# Z]
=&\ 
\sum_{n=3}^\infty
\frac{1}{(n-1)!}|B_0(R^\delta)|^{n-1}\rho^n e^{-\rho|B_0(R^\delta)|}|Q_L|\notag\\
=&\ 
O(R^{6\delta})|Q_L|\quad (R\to 0).
\label{04_08}
\end{align}

Next, we put for $y\in \widetilde{Y}$
\begin{align*}
\widetilde{n_y}(R^\delta)
:=
\#(\widetilde{Y}\cap B_y(R^\delta)), 
\end{align*}
then $1\leq \widetilde{n_y}(R^\delta)\leq n_y(R^\delta)\leq 2$
since $\widetilde{Y}\subset Y$.
Thus we have another decomposition
$\widetilde{Y}=\widetilde{Y}^{(1)}\cup \widetilde{Y}^{(2)}$, where
\begin{align*}
&\ 
 \widetilde{Y}^{(j)}
=
\{y\in \widetilde{Y}\mid \widetilde{n_y}(R^\delta)=j\}\quad (j=1,2).
\end{align*}
Then, for any $y \in \widetilde{Y}^{(2)}$, there exists a 
$y'\in \widetilde{Y}^{(2)}$ such that 
$0<|y-y'|<R^\delta$.
This $y'$ is unique, since otherwise we have $\widetilde{n_y}(R^\delta)\geq 3$.
Thus we can number the elements of $\widetilde{Y}^{(2)}$ and 
$\widetilde{Y}^{(1)}$ as
\begin{align*}
  \widetilde{Y}^{(2)}
=&\
\{y_1,y_2\}
\cup
\{y_3,y_4\}
\cup
\cdots
\cup
\{y_{2n-1},y_{2n}\},\\
&\ 
0<|y_1-y_2|\leq |y_3-y_4|\leq\cdots \leq |y_{2n-1}-y_{2n}|<R^\delta,\\
\widetilde{Y}^{(1)}=&\ \{y_{2n+1},\ldots, y_{2n+p}\},
\end{align*}
where $2n=\# \widetilde{Y}^{(2)}$ and $p=\# \widetilde{Y}^{(1)}$.
By definition, we have
\begin{align*}
 |y_k-y_{k'}|\geq R^\delta,
\end{align*}
if $\{k,k'\}\not=\{2j-1,2j\}$  for any $j=1,\ldots,n$.
  
Let $A$ and $B$ be
$(2n+p)\times (2n+p)$ matrices given by
\begin{align*}
&\  A=\alpha I_{2n+p} + \frac{1}{4\pi}B,\\
&\ 
B=(b_{jk}),\quad
b_{jk}
=
\begin{cases}
 s & (j=k),\\
-\frac{e^{-s|y_j-y_k|}}{|y_j-y_k|}
&
(j\not=k),
\end{cases}
\end{align*}
where $I_m$ is the identity matrix of degree $m$.
For $j=1,\ldots,2n+p$, let $\mu_j=\mu_j(s)$ be the $j$-th smallest eigenvalue of $A$,
counted with multiplicity.
By Proposition \ref{prop_al},
$\lambda =-s^2$ ($s>0$) is an eigenvalue of 
$-\Delta_{\alpha,\widetilde{Y}}$
if and only if
$\mu_j(s)=0$ for some $j$.

We introduce another matrix $B'$ given by
\begin{align*}
 B'=(b'_{jk}),\quad
b'_{jk}=
\begin{cases}
 b_{jk} & (|y_j-y_k|<R^\delta),\\
 0 & (|y_j-y_k|\geq R^\delta).
\end{cases}
\end{align*}
Then the matrix $B'$ is written as the direct sum
\begin{align*}
&\ 
 B'=\bigoplus_{j=1}^{n}B_j \oplus s I_p,\quad
B_j
=
\begin{pmatrix}
 s & -\frac{e^{-s|y_{2j-1}-y_{2j}|}}{|y_{2j-1}-y_{2j}|}\\
-\frac{e^{-s|y_{2j-1}-y_{2j}|}}{|y_{2j-1}-y_{2j}|} & s
\end{pmatrix}.
\end{align*}
Put $A'=\alpha I_{2n+p}+\frac{1}{4\pi}B'$.
Then, the $j$-th smallest eigenvalue
$\mu'_j=\mu'_j(s)$ of $A'$ is given explicitly as
\begin{align}
 &\ \mu'_j
= 
\alpha+\frac{1}{4\pi}
\left(s-\frac{e^{-s|y_{2j-1}-y_{2j}|}}{|y_{2j-1}-y_{2j}|}\right)
\quad
(1\leq j \leq n),
\label{04_09}
\\
&\ \mu'_{n+j}=
\alpha+\frac{s}{4\pi}
\quad
(1\leq j \leq p),
\notag
\\
&\ \mu'_{2n+p+1-j}= 
\alpha+\frac{1}{4\pi}
\left(s+\frac{e^{-s|y_{2j-1}-y_{2j}|}}{|y_{2j-1}-y_{2j}|}\right)
\quad
(1\leq j \leq n).
\notag
\end{align}

By the min-max principle,
the difference of $\mu_j$ and $\mu_j'$ is estimated as
\begin{align}
\label{04_10}
 |\mu_j-\mu_j'|\leq \|A-A'\|=\frac{1}{4\pi}\|B-B'\|,
\end{align}
where $\|\cdot\|$ is the operator norm on $\ell^2(\widetilde{Y})
\simeq \mathbf{C}^{2n+p}$.
The components of the matrix $B-B'$ are given as
\begin{align*}
 b_{jk}-b_{jk}'
=
\begin{cases}
 0 & (|y_j-y_k|<R^\delta),\\
-\frac{e^{-s|y_j-y_k|}}{|y_j-y_k|}
&(|y_j-y_k|\geq R^\delta).
\end{cases}
\end{align*}
By Proposition \ref{prop_operator-norm},
we have
\begin{align}
\label{04_11}
 \|B-B'\|
\leq \max_j \sum_{k;|y_j-y_k|\geq R^\delta}
\frac{e^{-s|y_j-y_k|}}{|y_j-y_k|}.
\end{align}
In order to estimate the right hand side of (\ref{04_11}),
we fix $y_j$,
divide $\mathbf{R}^3$ into cubes of side 
$\widetilde{R_\delta}:=R^\delta/\sqrt{3}$,
and classify the cubes into infinite layers,
like Figure \ref{figure_boxes}.
\begin{figure}[htbp]
\begin{center}
\begin{picture}(240,240)(0,0)
\setlength\unitlength{1pt}
\multiput(0,0)(0,40){7}{\line(1,0){240}}
\multiput(0,0)(40,0){7}{\line(0,1){240}}
\put(120,120){\circle*{5}}
\put(125,125){$y_j$}
\put(100,100){Layer 0}
\put(100,60){Layer 1}
\put(100,20){Layer 2}
\put(15,125){$\widetilde{R_\delta}$}
\linethickness{3pt}
\put(0,0){\polygon(0,0)(0,240)(240,240)(240,0)}
\put(40,40){\polygon(0,0)(0,160)(160,160)(160,0)}
\put(80,80){\polygon(0,0)(0,80)(80,80)(80,0)}
\end{picture}
\end{center}
\caption{Division by cubes of side $\widetilde{R_\delta}$}
\label{figure_boxes}
\end{figure}
The diameter of each cube is $R^\delta$.
By the construction of $\widetilde{Y}$,
we see that
\begin{enumerate}
 \item[(i)]
there is no point $y_k$ of $\widetilde{Y}$ 
with $|y_j-y_k|\geq R^\delta$ in Layer 0,

 \item[(ii)] there are at most two points of 
$\widetilde{Y}$ in each cube in Layer $\ell$ ($\ell\geq 1$),

 \item[(iii)] the minimal distance from $y_j$ to Layer $\ell$ is 
$\ell \widetilde{R_\delta}$,

 \item[(iv)] the number of cubes in Layer $\ell$ is
\begin{align*}
(2\ell+2)^3-(2\ell)^3=8(3\ell^2+3\ell+1)\leq 56\ell^2
\quad (\ell\geq 1).
\end{align*}
\end{enumerate}
The function $e^{-sR}/R$ is monotone decreasing with respect to $R$.
Thus we have by (\ref{04_11}) and (\ref{04_10})
\begin{align}
 \|B-B'\|
\leq&\ 
\sum_{\ell=1}^\infty
2\cdot 56\ell^2\cdot \frac{e^{-s\ell \widetilde{R_\delta}}}{\ell \widetilde{R_\delta}}
=
\frac{112}{\widetilde{R_\delta}}\sum_{\ell=1}^\infty
\ell e^{-s \ell \widetilde{R_\delta}}
=
\frac{112}{\widetilde{R_\delta}}\frac{e^{-s\widetilde{R_\delta}}}{(1-e^{-s\widetilde{R_\delta}})^2},\notag\\
\|A-A'\|
\leq &\ 
\frac{28}{\pi \widetilde{R_\delta}}
\frac{e^{-s\widetilde{R_\delta}}}{(1-e^{-s\widetilde{R_\delta}})^2},
\label{04_12}
\end{align}
where we used the formula
\begin{align*}
 \sum_{\ell=1}^\infty \ell e^{-\ell x}=\frac{e^{-x}}{(1-e^{-x})^2}
\quad (x>0).
\end{align*}
By (\ref{04_10}) and (\ref{04_12}), we have
\begin{align}
\label{04_13}
\mu_j'-g(s) \leq \mu_j \leq \mu_j'+g(s),
\quad
g(s):=\frac{28}{\pi \widetilde{R_\delta}}
\frac{e^{-s\widetilde{R_\delta}}}{(1-e^{-s\widetilde{R_\delta}})^2}.
\end{align}

Let 
\begin{align*}
&\  q'=q_\omega':=
\frac{1}{2}\#\{y\in \widetilde{Y_\omega}\,;\, \widetilde{n_y}(R)=2\},\\
&\ 
\widetilde{n_y}(R)
=
\# (\widetilde{Y_\omega}\cap B_y(R)).
\end{align*}
If $0<R<1$, then $R<R^\delta$, and we have 
\begin{align}
&\ 
 |y_1-y_2|
\leq 
|y_3-y_4|
\leq
\cdots
\leq 
|y_{2q'-1}-y_{2q'}|
<
R,
\label{04_14}
\end{align}
and
\begin{align}
|y_{k}-y_{k'}|\geq R
\label{04_15}
\end{align}
if $\{k,k'\}\not=\{2j-1,2j\}$ for any $j=1,\ldots,q'$.

We shall show
\begin{align}
\label{04_16}
 \mu_j(s_{\alpha}(R)-R^m)<0\quad (j=1,\ldots,q')
\end{align}
for $m>0$ and sufficiently small $R$.
To see this, we put $h(s,r)=s-e^{-sr}/r$.
By (\ref{04_09}) and (\ref{04_13}),
\begin{align}
&\  \mu_j(s_{\alpha}(R)-R^m)\notag\\
\leq &\ 
\alpha + \frac{1}{4\pi}h\bigl(s_{\alpha}(R)-R^m, |y_{2j-1}-y_{2j}|\bigr)
+g(s_{\alpha}(R)-R^m).
\label{04_17}
\end{align}
Since
\begin{align}
&\  \frac{\partial h}{\partial r}=\frac{(sr+1)e^{-sr}}{r^2}>0,
\label{04_18}
\\
&\ 1< \frac{\partial h}{\partial s}=1+e^{-sr}<2,\quad
-r< \frac{\partial^2 h}{\partial s^2}=-r e^{-sr}<0,
\label{04_19}
\end{align}
we have by (\ref{04_14}),
(\ref{04_18}), (\ref{04_19}),
the Taylor theorem,
and the definition of $s_\alpha(R)$,
\begin{align}
&\ 
\alpha+ \frac{1}{4\pi}h\bigl(s_{\alpha}(R)-R^m, |y_{2j-1}-y_{2j}|\bigr)\notag\\
<&\ 
\alpha+ \frac{1}{4\pi}h\bigl(s_{\alpha}(R)-R^m, R\bigr)\notag\\
=&\ 
\alpha
+
\frac{1}{4\pi}
\left\{h(s_\alpha(R),R)
-\frac{\partial h}{\partial s}
(s_\alpha(R),R)R^m
+
\frac{1}{2}\frac{\partial^2 h}{\partial s^2}(s_\alpha(R)-\theta R^m,R)R^{2m}
\right\}\notag\\
&\ 
\hspace{10cm} (0<\theta<1)\notag\\
<&\ 
\alpha+\frac{1}{4\pi}(-4\pi \alpha -R^m)=-\frac{R^m}{4\pi}.
\label{04_20}
\end{align}
As for the last term of (\ref{04_17}),
we use the fact
\begin{align*}
 s_\alpha(R)-R^m
\geq \frac{C}{R}
\end{align*}
for some $C>0$ and sufficiently small $R>0$
(Proposition \ref{prop_ev2}).
Then we have
\begin{align*}
 ( s_\alpha(R)-R^m)\widetilde{R_\delta} 
\geq \frac{C}{R}\cdot \frac{R^\delta}{\sqrt{3}}=C' R^{-(1-\delta)},
\quad
C'=\frac{C}{\sqrt{3}}.
\end{align*}
Since $g(s)$ is monotone decreasing with respect to $s$,
we have
\begin{align}
 g(s_\alpha(R)-R^m)
\leq 
\frac{28}{\pi \widetilde{R_\delta}}
\frac{e^{-C'R^{-(1-\delta)}}}{(1-e^{-C'R^{-(1-\delta)}})^2}
\leq C'' R^{-\delta}e^{-C'R^{-(1-\delta)}},
\label{04_21}
\end{align}
for some $C''>0$ and sufficiently small $R$.
By (\ref{04_17}), (\ref{04_20}), (\ref{04_21}), and $1-\delta>0$,
we obtain (\ref{04_16}) for sufficiently small $R$.

Next we shall show
\begin{align}
\label{04_22}
 \mu_j(s_{\alpha}(R)+R^m)>0\quad (j\geq q'+1)
\end{align}
for $m>0$ and sufficiently small $R$.
It is sufficient show (\ref{04_22}) only when $j=q'+1$.
We only consider the case $q'+1 \leq n$
(the proof in the case $q'+1>n$ is easier).
By (\ref{04_09}) and (\ref{04_13}),
\begin{align}
&\  \mu_{q'+1}(s_{\alpha}(R)+R^m)\notag\\
\geq &\ 
\alpha + \frac{1}{4\pi}h\bigl(s_{\alpha}(R)+R^m, |y_{2q'+1}-y_{2q'+2}|\bigr)
-g(s_{\alpha}(R)+R^m).
\label{04_23}
\end{align}
We have by (\ref{04_15}),
(\ref{04_18}), (\ref{04_19}),
the Taylor theorem,
and the definition of $s_\alpha(R)$
\begin{align}
&\ 
\alpha+ \frac{1}{4\pi}h\bigl(s_{\alpha}(R)+R^m, |y_{2q'+1}-y_{2q'+2}|\bigr)\notag\\
\geq &\ 
\alpha+ \frac{1}{4\pi}h\bigl(s_{\alpha}(R)+R^m, R\bigr)\notag\\
=&\ 
\alpha
+
\frac{1}{4\pi}
\left\{h(s_\alpha(R),R)
+\frac{\partial h}{\partial s}
(s_\alpha(R),R)R^m
+
\frac{1}{2}\frac{\partial^2 h}{\partial s^2}(s_\alpha(R)+\theta R^m,R)R^{2m}
\right\}\notag\\
&\ \hspace{10cm}
(0<\theta<1)\notag\\
>&\ 
\alpha+\frac{1}{4\pi}(-4\pi \alpha +R^m-\frac{1}{2}R^{2m+1})
=\frac{1}{4\pi}\left(R^m - \frac{1}{2}R^{2m+1}\right)>0
\label{04_24}
\end{align}
for sufficiently small $R>0$.
Similarly to (\ref{04_21}),
we can prove 
\begin{align}
  g(s_\alpha(R)+R^m)
\leq C'' R^{-\delta}e^{-C'R^{-(1-\delta)}}
\label{04_25}
\end{align}
for some constants $C'>0$, $C''>0$ and 
sufficiently small $R>0$.
By (\ref{04_23}), (\ref{04_24}), (\ref{04_25}), and $1-\delta>0$,
we obtain (\ref{04_22}) for sufficiently small $R$.

We recall the fact that 
$\mu_j(s)$ is strictly monotone increasing with respect to $s$
and $\displaystyle\lim_{s\to \infty}\mu_j(s)=\infty$
(Proposition \ref{prop_al}).
Then, (\ref{04_16}) implies that
the equation $\mu_j(s)=0$ has a unique solution with $s>s_\alpha(R)-R^m$
for $j=1,\ldots,q'$.
Thus we have
\begin{align}
\label{04_26}
 N_{\alpha,\widetilde{Y}}(-(s_\alpha(R)-R^m)^2)\geq q'_\omega
\end{align}
for sufficiently small $R>0$.
Moreover, (\ref{04_22}) implies that
the equation $\mu_j(s)=0$ has no solution with $s>s_\alpha(R)+R^m$
for $j\geq q'+1$.
Thus we have
\begin{align}
\label{04_27}
 N_{\alpha,\widetilde{Y}}(-(s_\alpha(R)+R^m)^2)\leq q'_\omega
\end{align}
for sufficiently small $R>0$.
Taking the expectation of (\ref{04_26}) and (\ref{04_27}),
and using (\ref{04_07}) and (\ref{04_08}),
we have
\begin{align}
&\   
\mathbf{E}[N_{\alpha,Y}(-(s_\alpha(R)-R^m))]
\geq 
\mathbf{E}[q'_\omega],
\label{04_28}
\\
&\ 
\mathbf{E}[N_{\alpha,Y}(-(s_\alpha(R)+R^m))]
\leq 
\mathbf{E}[q'_\omega]+O(R^{6\delta})|Q_L|.
\label{04_29}
\end{align}

Lastly, we estimate the difference between
\begin{align*}
 q_\omega := 
\frac{1}{2}
\#\{y \in Y_\omega \cap Q_L \,;\, n_y(R)=2\}
\end{align*}
and 
\begin{align*}
 q_\omega' = 
\frac{1}{2}
\#\{y \in \widetilde{Y_\omega} \,;\, \widetilde{n_y}(R)=2\},
\end{align*}
provided that $0<R<1$ and $L>2$ (so $0<R<R^\delta$).

First we show
\begin{align}
 \label{04_30}
 q_\omega\geq q_\omega'.
\end{align}
Assume that $y\in \widetilde{Y_\omega}$ and $\widetilde{n_y}(R)=2$.
The first assumption  implies $n_y(R^\delta)\leq 2$, and
\begin{align*}
 2=\widetilde{n_y}(R)\leq n_y(R)\leq n_y(R^\delta) \leq 2.
\end{align*}
Thus we have 
$\{y\in \widetilde{Y_\omega};\widetilde{n_y}(R)=2\}\subset\{y\in Y_\omega\cap Q_L;n_y(R)=2\}$,
which proves (\ref{04_30}).

Next,
let $y\in \{y\in Y_\omega\cap Q_L;n_y(R)=2\}\setminus \{y\in \widetilde{Y_\omega};\widetilde{n_y}(R)=2\}$.
Then, we have
$n_y(R)=2$, and either following (i) or (ii) holds:
\begin{enumerate}
 \item[(i)] $y\in Z_\omega$, that is, $n_y(R^\delta)\geq 3$.

 \item[(ii)] $y\in \widetilde{Y_\omega}$, but $\widetilde{n_y}(R)=1$.
\end{enumerate}
The number of points satisfying (i) is bounded by $\#Z_\omega$.
In the case (ii), there exists a unique point $y'\in Y_\omega$
such that $|y-y'|<R$ and $y'\not\in \widetilde{Y_\omega}$.
Then, either following (iia) or (iib) holds:
\begin{enumerate}
 \item[(iia)] $y'\in Z_\omega$.
 \item[(iib)] $y'\not\in Q_L$.
\end{enumerate}
In the case (iia), we have $n_{y'}(R^\delta)\geq 3$,
and there exists $y''\in Y_\omega$ such that $y''\not= y$ and $0<|y''-y'|<R^\delta$.
So we have $|y-y''|<R+R^\delta < 2R^\delta$,
and $n_y(2R^\delta) \geq 3$.
In the case (iib), we have $y\in Q_L\setminus Q_L''$,
where $Q_L'':=(R,L-R)^3$.
Thus we have
\begin{align}
2(q_\omega-q_\omega')
\leq 
&
\# Z_\omega
 +
\#\{y\in Y_\omega\cap Q_L \,;\, n_{y}(2R^\delta)\geq 3\}\notag\\
&\ +
\#\{y\in Y_\omega\cap (Q_L\setminus Q_L'') \,;\, n_{y}(R)= 2\}.
 \label{04_31}
\end{align}
By (\ref{04_08}), we have $\mathbf{E}[\# Z_\omega]=O(R^{6\delta})|Q_L|$,
and we can prove that the expectation of the second term is also 
$O(R^{6\delta})|Q_L|$ similarly.
Moreover, we have by Proposition \ref{prop_cluster}
\begin{align}
&\  \mathbf{E}[\#\{y\in Y_\omega\cap (Q_L\setminus Q_L'') \,;\, n_{y}(R)= 2\}]
\notag\\
=&\ |B_0(R)|\rho^2 e^{-\rho|B_0(R)|}(|Q_L|-|Q_L''|)\notag\\
=&\ 
\frac{4\pi}{3}R^3 \rho^2 e^{-\frac{4\pi \rho}{3}R^3}
\{L^3-(L-2R)^3\}\notag\\
\leq &\ 
C L^2R^4
\label{04_32}
\end{align}
for $0<R<1$ and $L>2$,
where $C$ is a positive constant dependent on $\rho$.
By (\ref{04_08}), (\ref{04_30}), (\ref{04_31}), and (\ref{04_32}),
we have
\begin{align*}
&\ 
\mathbf{E}[ q_\omega']
\leq  
\mathbf{E}[q_\omega] 
\leq 
\mathbf{E}[q_\omega']+
\left(
O(R^{6\delta})
+O(L^{-1}R^4)
\right)|Q_L|,
\end{align*}
for $0<R<1$ and $L>2$.
If $L>R^{-2}$, then $O(L^{-1}R^4)=O(R^6)$,
and we have
\begin{align*}
&\ 
\mathbf{E}[ q_\omega']
\leq  
\mathbf{E}[q_\omega] 
\leq 
\mathbf{E}[q_\omega']+
O(R^{6\delta})|Q_L|
\end{align*}
since $1/2<\delta <1$.
This inequality together with (\ref{04_28}) and (\ref{04_29})
implies the conclusion.
 \end{proof}

\section{Numerical verification of Theorem \ref{theorem_asymptotics_IDS}}
\label{section_num}
Let us verify Theorem \ref{theorem_asymptotics_IDS}
by a numerical method.
Our calculation is based on Proposition \ref{prop_al},
which states that $-\Delta_{\alpha,Y}$ has an eigenvalue $-s^2$ $(s>0)$
if and only if $\det A(s)=0$, where the matrix $A(s)$ is given in 
(\ref{07_02}).
We use a simple algorithm as follows.
\begin{enumerate}
 \item[(i)] 
Choose a real parameter $\alpha$,
the intensity $\rho$ of the Poisson configuration $Y_\omega$,
and the size $L$ of the cube $Q_L$.

 \item[(ii)] Generate a random number $n_\omega$
obeying the Poisson distribution with parameter 
$\rho|Q_L|$.

 \item[(iii)] Generate $n_\omega$ random points 
$y_1,\ldots,y_{n_\omega}$ in $Q_L$,
each of which obeys the uniform distribution in $Q_L$.

 \item[(iv)] Choose $a>0$ and $M>0$,
and divide the interval $I=[0,a]$ into 
$M$ sub-intervals $I_j:=[s_{j-1},s_j]$ ($j=1,2,\ldots,M$), where $s_j=ja/M$.
We put $I_{M+1}:=(a,\infty)$.

 \item[(v)] For $j=0,\ldots,M$, calculate the $n_\omega\times n_\omega$
matrix $A(s_j)$ given in (\ref{07_02})
with 
$\alpha_j=\alpha$ and 
$Y=\{y_1,\ldots,y_{n_\omega}\}$, 
and calculate the sign of $\det A(s_j)$.

 \item[(vi)] 
Calculate the number $n_j$ of the solutions of 
$\det A(s)=0$ in the interval $I_j$, as follows.
\begin{align*}
 n_j :=&\ 
\begin{cases}
 1 & (\det A(s_{j-1})\det A(s_j)<0),\\
 0 & (\det A(s_{j-1})\det A(s_j)\geq 0),
\end{cases}
\quad(j=1,\ldots,M),\\
n_{M+1} :=&\ 
\begin{cases}
 1 & (\det A(s_M)<0),\\
 0 & (\det A(s_M)\geq 0).
\end{cases}
\end{align*}
Notice that $\displaystyle \lim_{s\to \infty} \det A(s)=\infty$.

 \item[(vii)] Repeat (i)-(vi) many times,
and calculate the average $\overline{n_j}$ of $n_j$'s.
Then, the numerical IDS $N(-s_j^2)$ is calculated by
\begin{align*}
N(-s_j^2) = \frac{1}{|Q_L|}\sum_{k={j+1}}^{M+1} \overline{n_k}.
\end{align*}
\end{enumerate}
A defect of the above algorithm is that
we might miss the solutions of $\det A(s)=0$
if there are two or more solutions in $I_j$.
We have to choose $M$ and $a$ sufficiently large,
so that the effect of the counting error is small.
This time we choose $\rho=1$, $L=5$, $a=20$, $M=2000$, and 
repeat 10000 tests for each value $\alpha=-0.5$, $0.0$, $0.5$.
The code is implemented by using R.
The results are given in Figure
\ref{figure-0.5}, \ref{figure0.0}, \ref{figure0.5}.

\begin{figure}[htbp]
\begin{center}
 \begin{tabular}{c}
\begin{minipage}{9cm}
  \includegraphics[width=9cm]{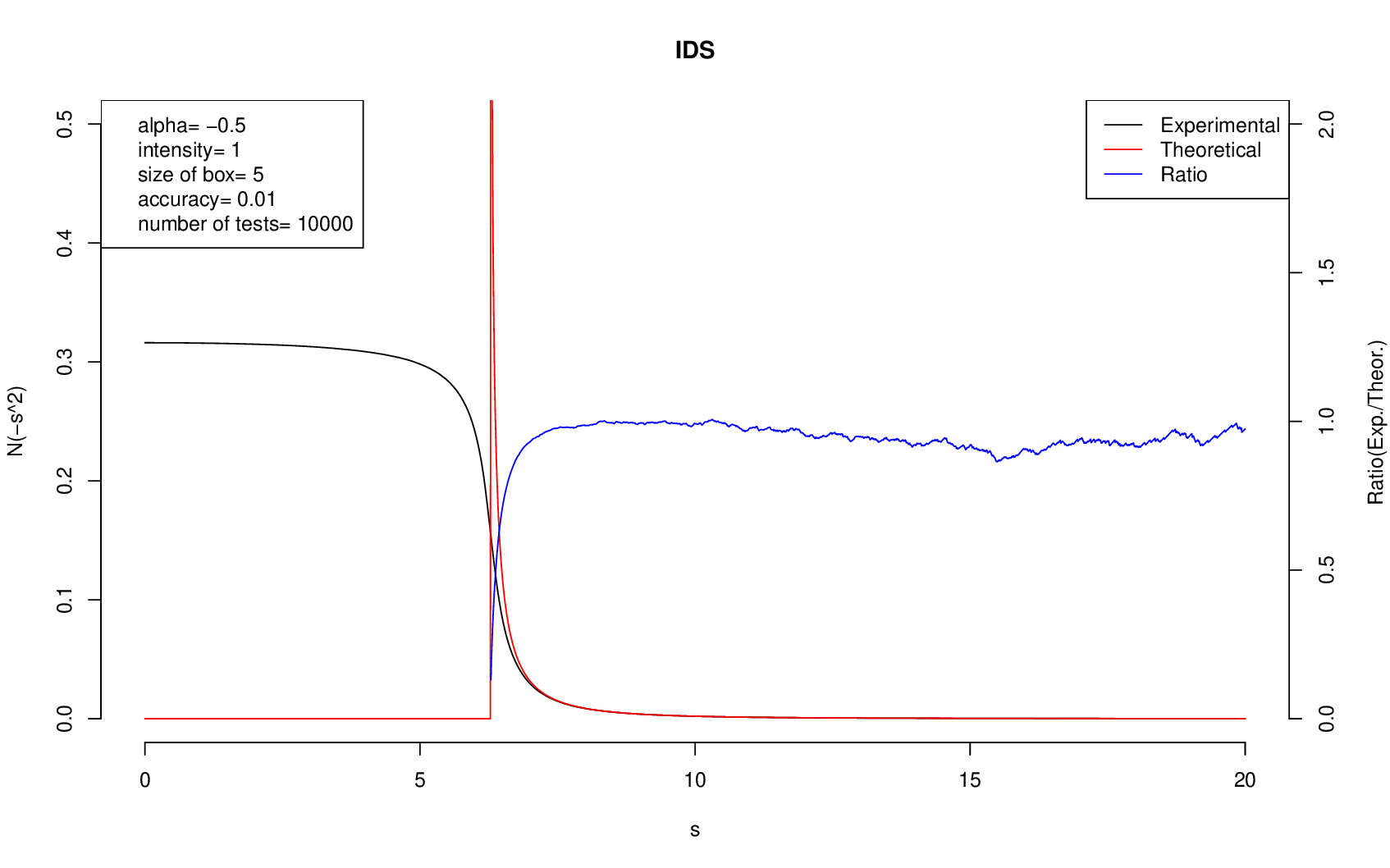}
\caption{$\alpha=-0.5$}
\label{figure-0.5}
\end{minipage}
 \\
\begin{minipage}{9cm}
  \includegraphics[width=9cm]{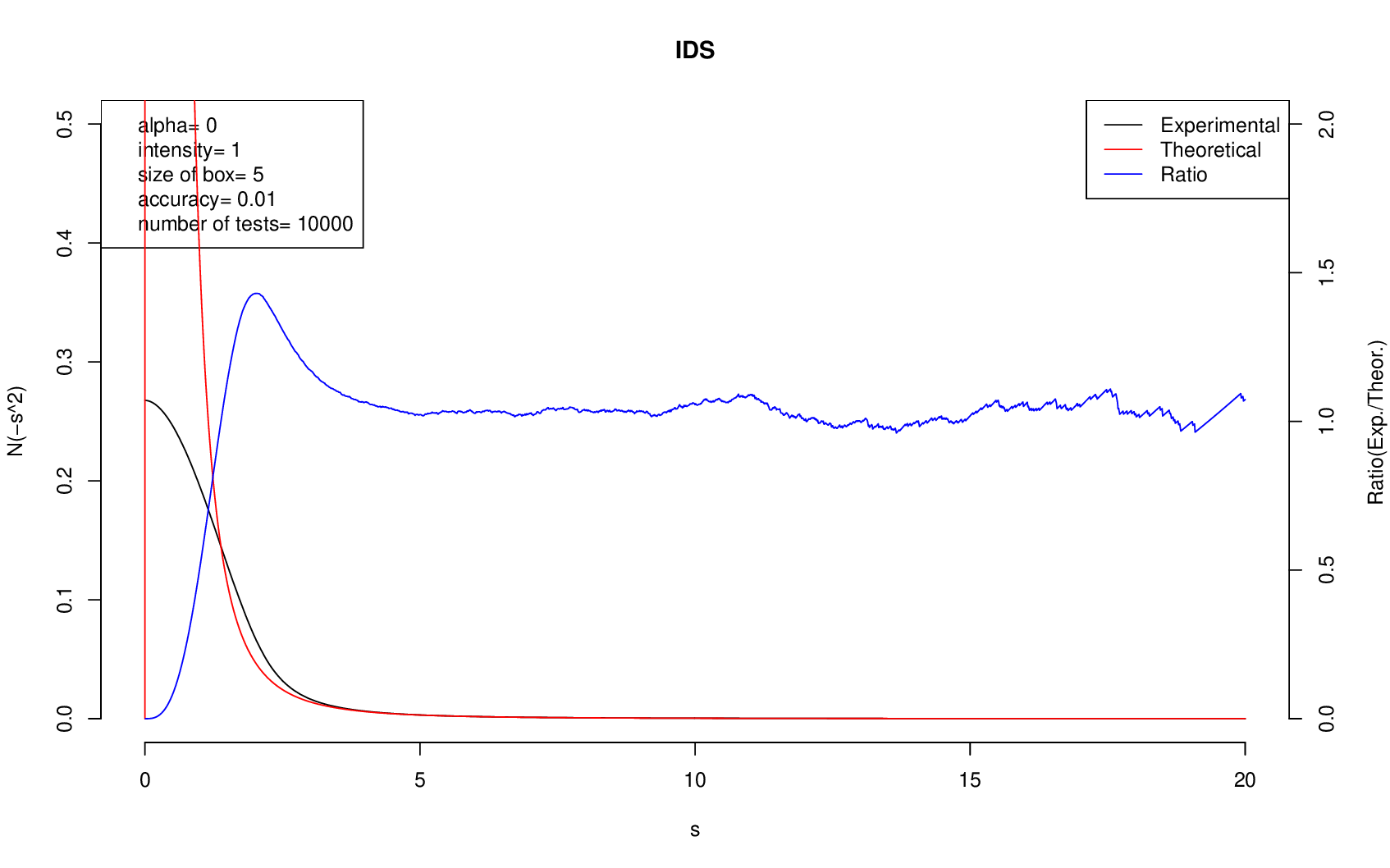}
\caption{$\alpha=0.0$}
\label{figure0.0}
\end{minipage}
\vspace{3mm}
 \\
\begin{minipage}{9cm}
  \includegraphics[width=9cm]{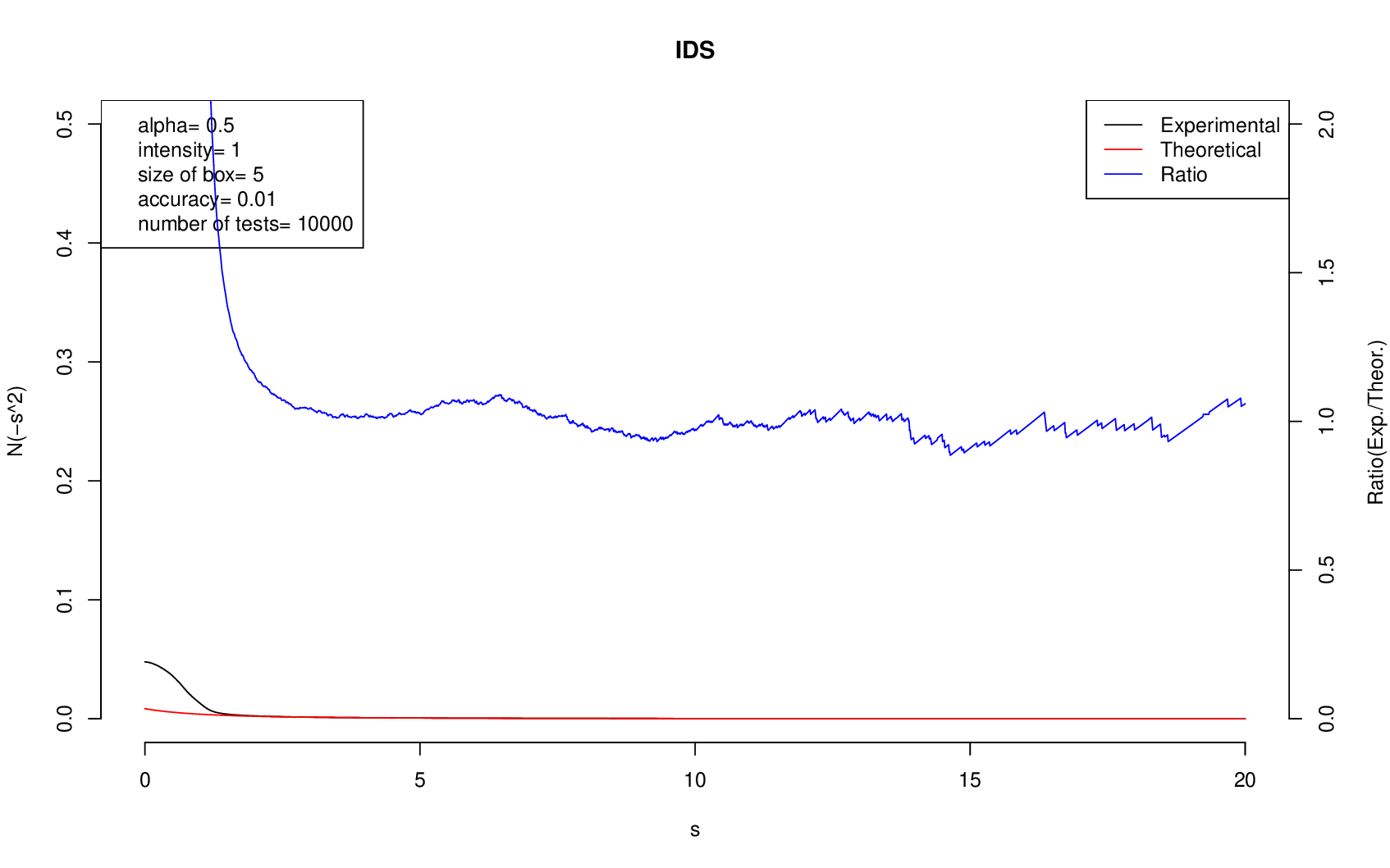}
\caption{$\alpha=0.5$}
\label{figure0.5}
\end{minipage}
 \end{tabular}
\end{center}
\end{figure}
In these figures, the horizontal axis is $s$-axis.
The curve diverging to $\infty$ as $s\to -4\pi\alpha$
is the first term in (\ref{01_08})
(it is defined only for $s>-4\pi\alpha$).
The another smooth curve 
is the numerical IDS $N(-s^2)$ calculated by the above algorithm.
The ticks for these two curves are given in the left side of each figure.
The rapidly fluctuating curve is 
the ratio of the numerical IDS
to the  the first term in (\ref{01_08}),
and the ticks for this curve is given in the right side.
In each figure, we can observe that the ratio is close to $1$ for large $s$,
which justifies our Theorem \ref{theorem_asymptotics_IDS}.

In Figure \ref{figure-0.5},
we find a jump of IDS near $s=-4\pi\alpha\doteqdot 6.28$.
The jump is considered to be created by the isolated points in $Y_\omega$,
since $-(4\pi \alpha)^2$ is the eigenvalue of 
$-\Delta_{\alpha,Y}$ when $Y$ is a one-point set
(section 2).
The height of the jump seems to be an interesting quantity,
although it is not well analyzed yet.

\section{Basics of IDS}
In this section,
we shall prove some fundamental results 
about the definition of IDS (the integrated density of states)
given by (\ref{01_03}),
namely,
the existence of IDS,
and its independence from the boundary conditions,
in the sense that $N^D(\lambda+0)=N^N(\lambda+0)$.
The results are well-known in the case of
the Schr\"odinger operators with random \textit{scalar} potential 
$H_\omega=-\Delta + V_\omega$ (cf.\ \cite{Ca-La, Pa-Fi}),
however, we check the results here again since
we cannot directly apply the known proofs to 
our operators $-\Delta_{\alpha_\omega, Y_\omega}$,
under Assumption \ref{assumption_PA}.
The results in this section are applicable in all dimensions $d=1,2,3$.
Though the results are well-known in one-dimensional case
\cite{Fr-Ll,Ko, Lu-Sy},
we give results here for readers' use.

We use the following notations and facts in the sequel.
Let $\mathcal{H}$ be a separable Hilbert space.
We denote the inner product on $\mathcal{H}$ by $(\cdot,\cdot)_{\mathcal{H}}$,
and the norm by $\|\cdot\|_\mathcal{H}$.
The inner product is assumed to be sesqui-linear,
that is,
conjugate linear with respect to the first component,
and linear with respect to the second component.
Let $A$ be a self-adjoint operator on $\mathcal{H}$.
We denote the spectral projection operator for $A$ on a interval $I$ 
by $P_I(A)$.
We say $A$ is \textit{lower semi-bounded} if there exists
$C\in \mathbf{R}$ such that
\begin{align*}
 (u,Au)_{\mathcal{H}}\geq C\|u\|_{\mathcal{H}}^2
\end{align*}
for every $u\in D(A)$.
Define a norm $\|\cdot\|_{Q(q_A)}$ on $D(A)$ by
\begin{align*}
 \|u\|_{Q(q_A)}^2:=(u,Au)_{\mathcal{H}}+(-C+1)\|u\|_{\mathcal{H}}^2
\end{align*}
for $u\in D(A)$.
We denote the completion of $D(A)$ with respect to the norm 
$\|\cdot\|_{Q(q_A)}$ by $Q(q_A)$,
which is a subspace of $\mathcal{H}$ since the norm $\|\cdot\|_{Q(q_A)}$
is stronger than $\|\cdot\|_{\mathcal{H}}$.
Then the norm $\|\cdot\|_{Q(q_A)}$ and the sesqui-linear form
\begin{align*}
 q_A(u,v):=(u,Av)_{\mathcal{H}}
\end{align*}
defined for $u,v\in D(A)$ is uniquely extended to $Q(q_A)$
by continuity.
Then $q_A$ is a quadratic form, that is,
$q_A$ is densely defined sesqui-linear form on $Q(q_A)$.
We denote $q_A[u]:=q_A(u,u)$.
The quadratic form $q_A$ is symmetric ($q_A(v,u)=\overline{q_A(u,v)}$),
closed ($(Q(q_A),\|\cdot\|_{Q(q_A)})$ is complete),
and lower semi-bounded ($q_A[u]\geq C\|u\|_\mathcal{H}^2$
for some $C\in \mathbf{R}$ and every $u\in Q(q_A)$).
We call $q_A$ the quadratic form associated with the operator $A$.
It is known that there exists one-to-one correspondence 
between a lower semi-bounded self-adjoint operator $A$
and a symmetric, lower semi-bounded, closed quadratic form $q_A$
(see \cite[section VIII.6]{Re-Si1}), as follows.
If $q_A$ is given, the operator domain $D(A)$ is given by
\begin{align*}
& D(A)=\{u\in Q(q_A);\mbox{there exists $C>0$ such that}\\
&\hspace{3cm} |q_A(v,u)|\leq C\|v\|_{\mathcal{H}}
\mbox{ for every }v\in Q(q_A)\}.
\end{align*}
Then, for $u\in D(A)$, the Riesz lemma implies that there exists
unique $Au\in \mathcal{H}$ such that $q_A(v,u)=(v,Au)$
for every $v\in Q(q_A)$,
and this is the definition of $Au$.

Let $U\subset \mathbf{R}^d$ be an open set.
Let $L^2(U)$ be the space of square-integrable functions,
and the inner product and the norm on $L^2(U)$
are given by 
\begin{align*}
 (u,v)_{L^2(U)}=\int_U \overline{u}v\,dx,\quad
\|u\|_{L^2(U)}=(u,u)^{1/2}
\end{align*}
for $u,v\in L^2(U)$.
When $U=\mathbf{R}^d$,
we sometimes use the abbreviation 
$\|u\|=\|u\|_{L^2(\mathbf{R}^d)}$, etc.
For vector-valued functions 
$u=(u_1,...,u_d),v=(v_1,...,v_d)\in L^2(U)^d$,
we denote
\begin{align*}
 (u,v)_{L^2(U)^d}=\int_U \overline{u}\cdot v\,dx,\quad
\|u\|_{L^2(U)^d}=(u,u)_{L^2(U)^d}^{1/2},
\end{align*}
where $\overline{u}\cdot v=\sum_{j=1}^d \overline{u}_jv_j$.
The space $H^j(U)$ ($j=1,2$) is the Sobolev space 
of $j$-th order, defined by
\begin{align*}
& H^1(U)=
\{u\in L^2(U);\nabla u \in L^2(U)^d\},\\
&\|u\|_{H^1(U)}^2
=
\|u\|_{L^2(U)}^2
+
\|\nabla u\|_{L^2(U)^d}^2,\\
&H^2(U)=\{u\in H^1(U);\frac{\partial^2 u}{\partial x_j\partial x_k}\in L^2(U)\ (j,k=1,...,d)\},\\
&\|u\|_{H^2(U)}^2=\|u\|_{H^1(U)}^2+\sum_{j,k=1}^d
\left\|\frac{\partial^2 u}{\partial x_j\partial x_k}\right\|_{L^2(U)}^2,
\end{align*}
where 
$\nabla u = (\frac{\partial u}{\partial x_1},...,\frac{\partial u}{\partial x_d})$,
and the derivatives are defined
as an element of $\mathcal{D}'(U)$
(the Schwartz distributions on $U$).
The space $C_0^\infty(U)$ consists of $u\in C^\infty(U)$ such that
$\supp u$ is compact and $\supp u\subset U$.
We also denote 
the completion of $C_0^\infty(U)$ with respect to
$H^j$-norm by $H^j_0(U)$ ($j=1,2$).
The space $L^2_{\rm loc}(U)$ consists of all 
measurable functions $u$
such that $\chi u\in L^2(U)$ for every $\chi\in C_0^\infty(U)$.
The spaces $H^j_{\rm loc}(U)$ ($j=1,2$) are defined similarly.

\subsection{Min-max principle}
In this subsection,
we recall the min-max principle and its several consequences.
In the sequel, we use the following notation.
Let $A$ be a lower semi-bounded self-adjoint operator 
on a separable Hilbert space $\mathcal{H}$.
If $N:=\dim \Ran P_{(-\infty,\inf \sigma_{\rm ess}(A))}(A)$,
then we denote
\begin{align*}
 \lambda_j(A)
:=
\begin{cases}
 \mbox{$j$-th smallest eigenvalue of $A$}&(1\leq j\leq N),\\
\inf \sigma_{\rm ess}(A)&(j\geq N+1),
\end{cases}
\end{align*}
for $j\in \mathbf{N}$,
where the eigenvalues are counted with multiplicity.
We define the eigenvalue counting function $N(\lambda;A)$ of $A$ by
\begin{align*}
 N(\lambda;A):=\dim\Ran P_{(-\infty,\lambda]}(A)
\end{align*}
for $\lambda\in \mathbf{R}$.
According to the \textit{min-max principle} \cite[Theorem XIII.1,2]{Re-Si4},
we have
\begin{align}
 \label{06_01}
\lambda_j(A)
=\sup_{v_1,...,v_{j-1}\in \mathcal{H}}
\left(
\inf_{u\in Q(q_A)\cap 
\langle v_1,...,v_{j-1}\rangle^{\bot}
,\, \|u\|_\mathcal{H}=1}q_A[u]
\right)
\end{align}
for every $j\in \mathbf{N}$, where 
$\langle v_1,...,v_{j-1}\rangle$
is the linear hull of 
$v_1,...,v_{j-1}$
and $W^\bot$ is the orthogonal complement 
of a subspace $W$.
Here we prepare two variants of the formula (\ref{06_01}).

\begin{proposition}
 \label{proposition_min-max}
Let $\mathcal{H}$ be a separable Hilbert space,
and $A$ be a lower semi-bounded self-adjoint operator on $\mathcal{H}$.
Then, we have
\begin{align}
 \label{06_02}
\lambda_j(A)
=\sup_{\codim V\leq j-1}
\left(
\inf_{u\in V\cap Q(q_A),\, \|u\|_\mathcal{H}=1}q_A[u]
\right)
\end{align}
for every $j\in \mathbf{N}$,
where the supremum ranges over all the linear subspaces $V$ of $\mathcal{H}$
with $\codim V:=\dim(\mathcal{H}/V)\leq j-1$.
Moreover, if $S$ is a countable dense subset of $\mathcal{H}$,
then we have
\begin{align}
 \label{06_03}
\lambda_j(A)
=\sup_{v_1,...,v_{j-1}\in S}
\left(
\inf_{u\in Q(q_A)\cap 
\langle v_1,...,v_{j-1}\rangle^{\bot}
,\, \|u\|_\mathcal{H}=1}q_A[u]
\right)
\end{align}
for every $j\in \mathbf{N}$.
\end{proposition}
\noindent\textbf{Remark.}
Compared with 
(\ref{06_01}),
$V$ is not necessarily a closed subspace in (\ref{06_02}).
This generalization is useful in the proof of the next proposition.
When the quadratic form 
$q_A[u]$ depends on some additional parameter
$\omega$ in a measure space $\Omega$,
the formula
(\ref{06_03}) is useful when we prove the measurability
of $\lambda_j(A)$ with respect to $\omega$.
\begin{proof}
 We denote the right hand side of 
(\ref{06_02}) and (\ref{06_03})
by $\mu_{1,j}(A)$ and $\mu_{2,j}(A)$,
respectively. 
Clearly we have
\begin{align*}
\mu_{2,j}(A)\leq \lambda_j(A)\leq \mu_{1,j}(A).
\end{align*}
We shall prove that the converse inequalities hold.

First we assume that 
$\mu_{1,j}(A)>\lambda_j(A)$,
and take $a\in \mathbf{R}$ such that $\mu_{1,j}(A)>a>\lambda_j(A)$.
Then there exists a subspace $V$ of $\mathcal{H}$ 
with $\codim V\leq j-1$ 
such that
\begin{align*}
 q_A[u]> a
\end{align*}
for every $u\in V\cap Q(q_A)$ with $\|u\|_{\mathcal{H}}=1$.
Since $a> \lambda_j(A)$, 
we have $\dim\Ran P_{(-\infty,a)}(A)\geq j$,
and there exist independent vectors 
$\varphi_k\in Q(q_A)$ ($k=1,...,j$) such that $\|\varphi_k\|_{\mathcal{H}}=1$
and 
\begin{align*}
 q_A[\varphi_k]<a\quad (k=1,\ldots,j).
\end{align*}
Thus we see that $\varphi_k\not\in V$,
which contradicts with the fact that $\codim V\leq j-1$.
Therefore we conclude that $\mu_{1,j}(A)=\lambda_j(A)$.

Next, we will show that $\mu_{2,j}(A)\geq \lambda_j(A)$.
Take a subspace $\Phi_j$ of
$\Ran P_{[\lambda_1(A), \lambda_{j-1}(A)]}(A)$
such that 
\begin{align*}
\dim \Phi_j= j-1,\quad 
\Phi_j^\bot \subset \Ran P_{[\lambda_j(A),\infty)}(A).
\end{align*}
Let $P_j$ be the orthogonal projection onto $\Phi_j$.
Then we have for $u\in Q(q_A)$
\begin{align}
 q_A[u]=&\ 
q_A[P_j u]+q_A[(1-P_j) u]\notag\\
\geq&\ 
\lambda_1(A)\|P_j u\|_{\mathcal{H}}^2
+\lambda_j(A)\|(1-P_j) u\|_{\mathcal{H}}^2\notag\\
=&\ 
\lambda_j(A)\|u\|_{\mathcal{H}}^2
-(\lambda_j(A)-\lambda_1(A))\|P_j u\|_{\mathcal{H}}^2.
\label{06_04}
\end{align}
For $v_1,\ldots,v_{j-1}\in \mathcal{H}$,
put $V:=\langle v_1,\ldots,v_{j-1}\rangle^\bot$,
and let $P_V$ be the orthogonal projection onto $V$.
Then we have by (\ref{06_04})
\begin{align}
&\ 
 \inf_{u\in Q(q_A)\cap V,\|u\|_{\mathcal{H}}=1}q_A[u]\notag\\
\geq 
&\ 
\lambda_j(A)-(\lambda_j(A)-\lambda_1(A))
 \sup_{u\in Q(q_A)\cap V,\|u\|_{\mathcal{H}}=1}
\|P_j u\|_{\mathcal{H}}^2\notag\\
\geq&\ 
\lambda_j(A)-(\lambda_j(A)-\lambda_1(A))\|P_jP_V\|,
\label{06_05}
\end{align}
where $\|P_jP_V\|$
is the operator norm of $P_jP_V$ on $\mathcal{H}$.

Let us take an orthonormal basis 
$\varphi_1,\ldots,\varphi_{j-1}$ of $\Phi_j$,
and take $v_{k,\ell}\in S$ ($k=1,\ldots,j-1$, $\ell\in \mathbf{N}$)
such that
$v_{k,\ell}\to \varphi_k$ as $\ell\to\infty$.
Put $V_\ell=\langle v_{1,\ell},\ldots,v_{j-1,\ell}\rangle^\bot$.
Then we can prove
$P_{V_\ell}\to 1-P_j$ as ($\ell\to\infty$)
in the operator norm,
so $\|P_j P_{V_\ell}\|\to 0$.
Then (\ref{06_05}) implies
\begin{align*}
 \mu_{2,j}(A)
\geq 
\lambda_j(A)-(\lambda_j(A)-\lambda_1(A))\|P_jP_{V_\ell}\|
\to \lambda_j(A) \quad (\ell\to\infty).
\end{align*}
Thus we have $\mu_{2,j}(A)\geq \lambda_j(A)$,
and we conclude $\mu_{2,j}(A)=\lambda_j(A)$.
\end{proof}

Next we recall some consequences about the eigenvalue counting function,
derived from the min-max principle.

\begin{proposition}
 \label{proposition_comparing}
For $j=1,2$, let $\mathcal{H}_j$ be a separable Hilbert space,
and let $A_j$ be a lower semi-bounded self-adjoint operator on $\mathcal{H}_j$.
Assume that there exists a linear isometry $T$ 
from $\mathcal{H}_1$ to $\mathcal{H}_2$  such that 
$T Q(q_{A_1})\subset Q(q_{A_2})$.
\begin{enumerate}
 \item[(1)]
Assume that
\begin{align*}
 q_{A_1}[u] \geq q_{A_2}[Tu]
\end{align*}
for every $u\in Q(q_{A_1})$.
Then, we have
\begin{align}
 N(\lambda;A_1) \leq  N(\lambda;A_2)
\label{06_06}
\end{align}
for every $\lambda\in \mathbf{R}$.
 \item[(2)]
Assume that
\begin{align*}
 q_{A_1}[u] \leq q_{A_2}[Tu]
\end{align*}
for every $u\in Q(q_{A_1})$, and
\begin{align}
\label{06_07}
 \dim(Q(q_{A_2})/TQ(q_{A_1}))=k.
\end{align}
Then, we have
\begin{align}
 N(\lambda;A_2) \leq  N(\lambda;A_1)+k
\label{06_08}
\end{align}
for every $\lambda\in \mathbf{R}$.
\end{enumerate}
\end{proposition}
\noindent\textbf{Remark.}
The statement (1) is well-known (e.g., \cite[Lemme 5.1]{Co}).
The statement (2) is considered to be well-known,
but we give a proof here for readers' use.
\begin{proof}
 Since $T$ is an isometry,
we can assume that $\mathcal{H}_1$
is a closed subspace of $\mathcal{H}_2$,
and $T$ is the inclusion operator $Tv=v$,
without loss of generality.
Then by assumption,
$Q(q_{A_1})\subset Q(q_{A_2})$.

(1) Since $Q(q_{A_1})\subset \mathcal{H}_1$, we have
\begin{align*}
& \{V\cap Q(q_{A_1});V\subset \mathcal{H}_1,\dim(\mathcal{H}_1/V)\leq j-1\}\\
=&\ 
 \{V\cap Q(q_{A_1});V\subset \mathcal{H}_2,\dim(\mathcal{H}_2/V)\leq j-1\}.
\end{align*}
Thus we have
\begin{align*}
 \lambda_j(A_1)
=&\ 
\sup_{V\subset \mathcal{H}_1,\codim V\leq j-1}
\inf_{u\in V\cap Q(q_{A_1}),\|u\|_{\mathcal{H}_1}=1}
q_{A_1}[u]\\
=&\ 
\sup_{V\subset \mathcal{H}_2,\codim V\leq j-1}
\inf_{u\in V\cap Q(q_{A_1}),\|u\|_{\mathcal{H}_2}=1}
q_{A_1}[u]\\
\geq&\ 
\sup_{V\subset \mathcal{H}_2,\codim V\leq j-1}
\inf_{u\in V\cap Q(q_{A_2}),\|u\|_{\mathcal{H}_2}=1}
q_{A_2}[u]\\
=&\ 
\lambda_j(A_2).
\end{align*}
Thus 
$\lambda_j(A_1)\leq \lambda$
implies
$\lambda_j(A_2)\leq \lambda$,
and (\ref{06_06}) holds.

(2) Let $W_1$ be a subspace of $\mathcal{H}_2$ such that
\begin{align*}
 W_1\cap Q(q_{A_2})=\{0\},\ 
W_1\oplus Q(q_{A_2})=\mathcal{H}_2.
\end{align*}
Put $W:=Q(q_{A_1})\oplus W_1$, then $\codim W=k$ by (\ref{06_07}).

If a subspace $V\subset \mathcal{H}_2$
satisfies $\codim V\leq j-1$,
then $\codim(V\cap W)\leq j+k-1$
and $V\cap Q(q_{A_1})=(V\cap W)\cap Q(q_{A_2})$.
Thus we have
\begin{align*}
 \lambda_{j+k}(A_2)
=&\ 
\sup_{\codim V\leq j+k-1}
\inf_{u\in Q(q_{A_2})\cap V}
q_{A_2}[u]\\
\geq&\ 
\sup_{\codim V\leq j-1}
\inf_{u\in Q(q_{A_2})\cap (V\cap W)}
q_{A_2}[u]\\
\geq&\ 
\sup_{\codim V\leq j-1}
\inf_{u\in Q(q_{A_1})\cap V}
q_{A_1}[u]\\
=&\ 
\lambda_j(A_1).
\end{align*}
Therefore,
$\lambda_{j+k}(A_2)\leq \lambda$ implies
$\lambda_j(A_1)\leq \lambda$,
and (\ref{06_08}) holds.
\end{proof}

\subsection{Boundary conditions}
\label{subsection_bdry}
In this subsection,
we give a rigorous definition of the operators $-\Delta^\sharp_{\alpha, Y, U}$
($\sharp=D$ or $N$),  given in the introduction.
The operators with point interactions on a bounded domain 
are studied by several authors (e.g., \cite{Ca-Sw,Bl-Fi-Ma, Lo-Mi}).
In particular,
the papers \cite{Bl-Fi-Ma, Lo-Mi}  define
$-\Delta^D_{\alpha, Y, U}$ by giving the operator domain,
using the basis of
the deficiency subspaces of the minimal operator
\begin{align*}
& -\Delta^D_{\min,Y,U}u
:= -\Delta|_{Y\cap U}u,\\
&u\in D( -\Delta^D_{\min,Y,U})
:=\{u\in H^2(U)\cap H^1_0(U);\\
&\hspace{4cm}u(y)=0\mbox{ for every }y\in Y\cap U\},
\end{align*}
provided that the boundary $\partial U$ 
is sufficiently smooth.
The paper \cite{Lo-Mi} also gives the form domain of the quadratic form
associated with $-\Delta^D_{\alpha, Y, U}$.
In Lemma \ref{lemma_DNdef} below,
we define the operators
$-\Delta^\sharp_{\alpha, Y, U}$ ($\sharp=D,N$)
by the method of quadratic form.
The form domain is defined by using compactly supported basis functions
$\phi_y$ ($y\in Y\cap U$), which is useful for our analysis.
The advantages of this method are as follows.
\begin{enumerate}
 \item[(i)] We can define
the operators
$-\Delta^\sharp_{\alpha, Y, U}$ ($\sharp=D,N$)
on any bounded open set $U$ with $\partial U\cap Y=\emptyset$.

 \item[(ii)] The properties of eigenvalue counting functions,
e.g., Dirichlet--Neumann bracketing,
are easily proved by the definitions of quadratic forms.
\end{enumerate}
If the boundary $\partial U$ of $U$ is sufficiently smooth,
then we prove that the corresponding self-adjoint operators
coincide with the ones defined in former papers.

For this purpose,
first we study the quadratic form $q_{\alpha,Y}$ 
associated with the operator $-\Delta_{\alpha,Y}$.
The quadratic form $q_{\alpha,Y}$  is constructed in 
\cite[Appendix F]{Al-Ge-Ho-Ho} and references therein,
by an indirect method using a quadratic form defined on a weighted $L^2$-space.
We give here a direct expression of $q_{\alpha,Y}$, 
which is suitable for our purpose.

When $d=1$, a simple expression of $q_{\alpha,Y}$ is 
well-known
(see e.g., \cite[page 79]{Al-Ge-Ho-Ho}).
Let $Y$ be a finite set in $\mathbf{R}$ and
$\alpha=(\alpha_y)_{y\in Y}$  a real-valued sequence.
Then the quadratic form $q_{\alpha,Y}$ associated with $-\Delta_{\alpha,Y}$
and the form domain $Q(q_{\alpha,Y})$ are given as
\begin{align*}
& q_{\alpha,Y}[u]:=\|u'\|^2_{L^2(\mathbf{R})}+\sum_{y\in Y}\alpha_y|u(y)|^2,
\\
&u\in Q(q_{\alpha,Y}):=H^1(\mathbf{R}).
\end{align*}
Notice that the value $u(y)$ makes sense,
since $u\in H^1(\mathbf{R})$ is continuous on $\mathbf{R}$
by the Sobolev embedding theorem.
Then, for a bounded open interval $U$ such that
$\partial U\cap Y=\emptyset$,
we define the operator $-\Delta^\sharp_{\alpha,Y,U}$ ($\sharp=D,N$) as
the self-adjoint operator corresponding to the quadratic form
$q^\sharp_{\alpha,Y,U}$ defined by
\begin{align*}
& q^D_{\alpha,Y,U}[u]:=
\|u'\|^2_{L^2(U)}+\sum_{y\in Y\cap U}\alpha_y|u(y)|^2,
\quad
u\in Q(q^D_{\alpha,Y,U}):=H^1_0(U),\\
& q^N_{\alpha,Y,U}[u]:=
\|u'\|^2_{L^2(U)}+\sum_{y\in Y\cap U}\alpha_y|u(y)|^2,
\quad
u\in Q(q^N_{\alpha,Y,U}):=H^1(U).
\end{align*}
Then it is easy to see that
$u\in D(-\Delta^D_{\alpha,Y,U})$ satisfies
the Dirichlet boundary condition $u(x)=0$ on $\partial U$,
and
$u\in D(-\Delta^N_{\alpha,Y,U})$ satisfies
the Neumann boundary condition $u'(x)=0$ on $\partial U$.

When $d=2$ or $3$,
we cannot define $q^\sharp_{\alpha,Y,U}$ in the same way,
because
\begin{enumerate}
 \item[(i)]
$u\in D(-\Delta_{\alpha,Y})$ \textit{does not} imply
$\nabla u\in L^2_{\rm loc}(\mathbf{R}^d)^d$,
so $\|\nabla u\|_{L^2(\mathbf{R}^d)^d}^2$ 
does not make sense, and

 \item[(ii)] $u(y)$ \textit{does not} make sense
for $u\in H^1(\mathbf{R}^d)$.
\end{enumerate}
So we use a different method using an operator core of $-\Delta_{\alpha,Y}$.

First we begin with an elementary lemma
about the Sobolev spaces.
\begin{lemma}
\label{lemma_density}
 Let $d\geq2$.
Let $Y$ be a finite set in $\mathbf{R}^d$.
Then, $C_0^\infty(\mathbf{R}^d\setminus Y)$
is dense in $H^1(\mathbf{R}^d)$.
\end{lemma}
\noindent\textbf{Remark.}
The statement of Lemma \ref{lemma_density} is a well-known consequence of 
the theory of polar sets (cf.\ \cite{Ad});
here we give a proof for readers' use.
The conclusion is true also for $d\geq 4$,
but it is false when $d=1$,
since the map $H^1(\mathbf{R})\ni u\mapsto u(y)$ is continuous.
\begin{proof}
Without loss of generality, we can assume that $Y=\{0\}$.
By \cite[Theorem 3.23]{Ad},
$C_0^\infty(\mathbf{R}^d\setminus\{0\})$ is dense in 
$H^1(\mathbf{R}^d)$ if and only if $\{0\}$ is $(1,2)$-polar,
that is,
if $u\in H^{-1}(\mathbf{R}^d)$ satisfies $\supp u\subset \{0\}$,
then $u=0$.
Here
\begin{align*}
 H^{-1}(\mathbf{R}^d)
:=
\{u\in \mathcal{S}'(\mathbf{R}^d); 
\int_{\mathbf{R}^d}|\hat{u}(\xi)|^2(1+|\xi|^2)^{-1}d\xi<\infty\},
\end{align*}
where $\mathcal{S}'(\mathbf{R}^d)$ is the space of the tempered distributions,
and $\hat{u}$ is the Fourier transform of $u$.

Assume that $u\in H^{-1}(\mathbf{R}^d)$ and $\supp u\subset \{0\}$.
Then, by \cite[Theorem 6.25]{Ru},
there exist $N\in \mathbf{N}$ and constants $c_{j_1,...,j_d}$ such that
\begin{align*}
 u=
\sum_{j_1+\cdots+j_d\leq N}c_{j_1,...,j_d}
\frac{\partial^{j_1+\cdots+j_d}}{\partial x_1^{j_1}\cdots \partial x_d^{j_d}}\delta_0,
\end{align*}
where 
the sum ranges over all non-negative integers 
$j_1,...,j_d$ with $j_1+\ldots +j_d\leq N$,
and $\delta_0$ is the Dirac delta distribution at $x=0$.
Taking the Fourier transform, we have
\begin{align*}
\hat{u}(\xi)
=
\frac{1}{(2\pi)^{d/2}}
\sum_{j_1+\cdots+j_d\leq N}c_{j_1,...,j_d}
(i \xi_1)^{j_1}\cdots (i\xi_d)^{j_d}.
\end{align*}
However, since $d\geq 2$, we have
\begin{align*}
 \int_{\mathbf{R}^d}|p(\xi)|^2(1+|\xi|^2)^{-1}d\xi=\infty
\end{align*}
for every non-zero polynomial $p(\xi)$.
Thus $u\in H^{-1}(\mathbf{R}^d)$ implies that $c_{j_1,...,j_d}=0$ for all
$j_1,...,j_d$, so $u=0$.
Thus the conclusion holds.
\end{proof}

Next we specify the quadratic form 
associated with  $-\Delta_{\alpha, Y}$,
when $d=2$ or $3$ and $Y$ is a finite set.
\begin{lemma}
 \label{lemma_quadratic-form}
Let $d=2$ or $3$.
Let $Y$ be a finite set in $\mathbf{R}^d$,
and $\alpha=(\alpha_y)_{y\in Y}$ be a sequence of real numbers.
Let $q_{\alpha,Y}:=q_{-\Delta_{\alpha,Y}}$ be the quadratic form 
associated with
the semi-bounded self-adjoint operator $-\Delta_{\alpha,Y}$,
and $Q(q_{\alpha,Y})$ be the form domain of $q_{\alpha,Y}$.
Let
\begin{align*}
 d_*:=\min_{y,y'\in Y, y\not=y'}|y-y'|,
\end{align*}
and $d_0$ be any number such that $0<d_0\leq d_*$.
Let $\chi\in C^\infty(\mathbf{R})$ such that
\begin{align*}
\begin{cases}
\chi(r)= 1 & (r\leq d_0/4),\\
0\leq \chi(r)\leq 1 &(d_0/4<r<d_0/3 ),\\
\chi(r)=0 & (r\geq d_0/3).
\end{cases}
\end{align*}
For each $y\in Y$, let
\begin{align*}
& \phi_y(x):=
\begin{cases}
 \left(-\frac{1}{2\pi}\log|x-y|+\alpha\right)\chi(|x-y|)
&(d=2),\\
 \left(\frac{1}{4\pi}|x-y|^{-1}+\alpha\right)\chi(|x-y|)
&(d=3),
\end{cases}
\\
& A_y:=\{x\in \mathbf{R}^d\,;\,d_0/4\leq |x-y|\leq d_0/3\}.
\notag
\end{align*}
Then, the following holds.
\begin{enumerate}
\item[(1)]
The form domain $Q(q_{\alpha,Y})$
is given as
\begin{align}
\label{06_09}
Q(q_{\alpha,Y})
=
H^1(\mathbf{R}^d)\oplus \bigoplus_{y\in Y}\mathbf{C}\phi_y,
\end{align}
where $\oplus$ denotes the algebraic direct sum of subspaces.
 \item[(2)]
The quadratic form $q_{\alpha,Y}$
is given as
\begin{align}
&\  q_{\alpha,Y}(u,v)
= 
(\nabla u_0,\nabla v_0)_{L^2(\mathbf{R}^d)^d}\notag\\
&\hspace{2cm}+
\sum_{y\in Y}
\Bigl(
(c_y \nabla \phi_y,\nabla v_0)_{L^2(A_y)^d}
+
(\nabla u_0,d_y \nabla \phi_y)_{L^2(A_y)^d}
\notag\\
&
\hspace{3cm}
-\overline{c_y}\int_{\partial A_y}\overline{\frac{\partial \phi_y}{\partial n}}v_0\,dS
-d_y\int_{\partial A_y}\overline{u_0}\frac{\partial \phi_y}{\partial n}dS
\notag\\
&\hspace{3cm}+
\overline{c_y}d_y(\phi_y,-\Delta \phi_y)_{L^2(A_y)}
\Bigr),
\label{06_10}
\\
&\ 
u=u_0+\sum_{y\in Y}c_y\phi_y,\ 
v=v_0+\sum_{y\in Y}d_y \phi_y\in Q(q_{\alpha,Y}),\notag\\
&\ 
u_0,v_0\in H^1(\mathbf{R}^d),\ c_y,d_y\in \mathbf{C}.
\notag
\end{align}
Here, $\partial \phi_y/\partial n=\nabla \phi_y\cdot n$,
$n$ is the unit outer normal vector on $\partial A_y$,
and $dS$ is the surface measure on $\partial A_y$.
\end{enumerate}
\end{lemma}
\noindent\textbf{Remark.}
1. The surface integrals in (\ref{06_10}) make sense,
since $u_0\in H^1(\mathbf{R}^d)$ implies that
$u_0|_{\partial A_y}\in H^{1/2}(\partial A_y;dS)\subset L^2(\partial A_y;dS)$,
by the trace theorem \cite[section 8]{Wl}.

2.
The equality (\ref{06_09}) implies that
the form domain $Q(q_{\alpha, Y})$ is independent of $\alpha$,
since the regular part of $\phi_y$, i.e.\ $\alpha_y \chi(|x-y|)$,
belongs to $H^1(\mathbf{R}^d)$,
and the singular part is independent of $\alpha$.
Similarly, the right hand side of (\ref{06_09}) 
is independent of the choice of $d_0$.
\begin{proof}
(1) By definition, $\phi_y\in L^2(\mathbf{R}^d)$ for $y\in Y$.
Moreover, it is easy to check that
\begin{align*}
& 
\Delta|_{\mathbf{R}^2\setminus Y}
\left(-\frac{1}{2\pi}\log|x-y|+\alpha\right)=0,\\
&\Delta|_{\mathbf{R}^3\setminus Y}
\left(\frac{1}{4\pi}|x-y|^{-1}+\alpha\right)=0.
\end{align*}
Then the Leibniz rule implies
\begin{align}
\supp (-\Delta|_{\mathbf{R}^d\setminus Y}\phi_y)
\subset \supp \nabla \chi(|\cdot-y|)
\subset A_y.
\label{06_11}
\end{align}
By (\ref{06_11}), we have $-\Delta|_{\mathbf{R}^d\setminus Y}\phi_y\in L^2(\mathbf{R}^d)$.
Moreover, since $\supp \phi_y \cap Y =\{y\}$ and
$\phi_y$ satisfies the boundary condition $(BC)_y$,
we see that $\phi_y\in D(-\Delta_{\alpha,Y})$.

Let $D_0=H^2_0(\mathbf{R}^d\setminus Y)$,
the closure of $C_0^\infty(\mathbf{R}^d\setminus Y)$
with respect to $H^2$-norm.
Since the symmetric operator $-\Delta|_{C_0^\infty(\mathbf{R}^d\setminus Y)}$
has deficiency indices $(N,N)$ ($N=\# Y$)(see \cite{Al-Ge-Ho-Ho}),
we see that
the dimension of the quotient space
$D(-\Delta_{\alpha,Y})/ D_0$
is equal to $N$.
Moreover, $\phi_y\not\in D_0$, since $D_0\subset H^2(\mathbf{R}^d)$
and any element of $H^2(\mathbf{R}^d)$ is continuous for $d=2$ or $3$,
by the Sobolev embedding theorem.
Thus the space
\begin{align*}
 \widetilde{D}
=C_0^\infty(\mathbf{R}^d\setminus{Y})\oplus 
\bigoplus_{y\in Y} \mathbf{C}\phi_y
\end{align*}
is an operator core of $-\Delta_{\alpha, Y}$.
So $\widetilde{D}$ is also a form core of $-\Delta_{\alpha, Y}$.

Since $Y$ is a finite set,
the operator $-\Delta_{\alpha,Y}$ is lower semi-bounded.
Let $M=-\inf \sigma(-\Delta_{\alpha,Y})+1$,
and define the norm $\|\cdot\|_Q$ on $\widetilde{D}$ by
\begin{align*}
 \|u\|_Q^2=(u,-\Delta|_{\mathbf{R}^d\setminus Y} u)+M\|u\|^2
\end{align*}
for $u\in \widetilde{D}$.
By the choice of $M$, we have
\begin{align*}
 \|u\|_Q^2\geq \|u\|^2
\end{align*}
for $u\in \widetilde{D}$,
so $\|\cdot\|_Q$ is actually a norm on $\widetilde{D}$.
Since $\widetilde{D}$ is a form core,
$Q(q_{\alpha,Y})$ is
the closure of $\widetilde{D}$ with respect to the norm $\|\cdot\|_Q$.
Thus, in order to prove (\ref{06_09}),
it is sufficient to prove
\begin{enumerate}
 \item $\widetilde{D}$ is dense in 
$\widetilde{Q}:=H^1(\mathbf{R}^d)\oplus\bigoplus_{y\in Y}\mathbf{C}\phi_y$,
and
 \item $\widetilde{Q}$ is a closed subspace of $Q(q_{\alpha,Y})$.
\end{enumerate}

To see (i), notice that
we have for $u_0\in C_0^\infty(\mathbf{R}^d\setminus Y)$
\begin{align*}
 \|u_0\|_Q^2
=
\|\nabla u_0\|^2+M\|u_0\|^2,
\end{align*}
which is equivalent to $H^1$-norm.
Thus 
the closure of 
$C_0^\infty(\mathbf{R}^d\setminus Y)$ 
with respect to $\|\cdot\|_Q$
coincides with
the closure of 
$C_0^\infty(\mathbf{R}^d\setminus Y)$ 
with respect to $H^1$-norm,
which is equal to $H^1(\mathbf{R}^d)$ 
by Lemma \ref{lemma_density}.
Thus (i) holds.

To see (ii),
notice that $Q(q_{\alpha,Y})$ is a Hilbert space
equipped with the inner product
\begin{align*}
 (u,v)_Q=q_{\alpha,Y}(u,v)+M(u,v),
\end{align*}
$H^1(\mathbf{R}^d)$ is a closed subspace of $Q(q_{\alpha,Y})$,
and $\dim(\widetilde{Q}/H^1(\mathbf{R}^d))=\#Y<\infty.$
Therefore
(ii) is a consequence of the following general statement;
`if $X$ is a Hilbert space,
$V_1$ is a closed subspace of $X$,
$V_2$ is a subspace of $X$ with $V_2\supset V_1$
and $\dim(V_2/V_1)<\infty$,
then $V_2$ is a closed subspace of $X$'.

(2)
Let $u,v\in \widetilde{D}$. Then $u$ is written as
\begin{align*}
& u=u_0+\sum_{y\in Y} c_y \phi_y,\ 
 v=v_0+\sum_{y\in Y} d_y \phi_y,\\
&
u_0,v_0\in C_0^\infty(\mathbf{R}^d\setminus Y),\ 
c_y,d_y \in \mathbf{C}.
\end{align*}
Then the quadratic form $q_{\alpha,Y}(u,v)=(u,-\Delta_{\alpha,Y}v)$ 
is calculated as follows.
\begin{align}
  (u,-\Delta_{\alpha,Y}v)
=&\ 
(u_0,-\Delta v_0)
+
\sum_{y\in Y}
\Bigl\{
(u_0,d_y(-\Delta|_{\mathbf{R}^d\setminus Y}\phi_y))\notag\\
&+
(c_y\phi_y,-\Delta v_0)
+
(c_y\phi_y,d_y(-\Delta|_{\mathbf{R}^d\setminus Y} \phi_y))
\Bigr\},
\label{06_12}
\end{align}
since $\supp \phi_y$ $(y\in Y)$ are disjoint.
By (\ref{06_11}), we have by integration by parts
\begin{align*}
 (u_0,-\Delta v_0)=&(\nabla u_0,\nabla v_0),\\
(u_0,d_y(-\Delta|_{\mathbf{R^d}\setminus Y} \phi_y))=&\ 
(u_0,d_y(-\Delta \phi_y))_{L^2(A_y)}\\
=&\ (\nabla u_0,d_y\nabla \phi_y)_{L^2(A_y)^d}
-d_y\int_{\partial A_y}\overline{u_0}
\frac{\partial \phi_y}{\partial n}dS,\\
(c_y\phi_y,-\Delta v_0)=&\ 
(c_y(-\Delta|_{\mathbf{R}^d\setminus Y}\phi_y), v_0)
=
(c_y(-\Delta \phi_y), v_0)_{L^2(A_y)}\\
=&\ 
(c_y \nabla \phi_y,\nabla v_0)_{L^2(A_y)^d}
-\overline{c_y}\int_{\partial A_y}
\overline{\frac{\partial \phi_y}{\partial n}}v_0
dS,\\
(c_y\phi_y,d_y(-\Delta|_{\mathbf{R}^d\setminus Y} \phi_y))
=&\ 
\overline{c_y}d_y
(\phi_y,-\Delta \phi_y)_{L^2(A_y)}.
\end{align*}
Combining these equalities with
(\ref{06_12}), we have (\ref{06_10})
for $u\in \widetilde{D}$.
Moreover, 
if we fix $c_y$ and $d_y$,
the right hand side of (\ref{06_10}) is continuous
with respect to $u_0,v_0\in H^1(\mathbf{R}^d)$,
by the trace theorem \cite{Wl}
(i.e.\ $H^1(\mathbf{R}^d)\ni u_0 \mapsto u_0|_{\partial A_y}\in L^2(\partial A_y;dS)$ is continuous).
Thus we have (\ref{06_10}) for every 
$u,v\in Q(q_{\alpha,Y})$.
\end{proof}

Now we can define our operators $-\Delta^\sharp_{\alpha, Y, U}$ 
($\sharp=D,N$) by the form method.
\begin{lemma}
 \label{lemma_DNdef}
Let $d=2$ or $3$. 
Let $Y$ be a locally finite set in $\mathbf{R}^d$,
and $\alpha=(\alpha_y)_{y\in Y}$ be a real-valued sequence.
Let $U$ be a bounded open set in $\mathbf{R}^d$
such that $\partial U\cap Y=\emptyset$,
and put $Y_U=Y\cap U$ and $\alpha_U=\alpha|_{Y_U}$.
Let
\begin{align*}
& d_*:=\min_{y,y'\in Y_U, y\not=y'}|y-y'|,
\end{align*}
and $d_0$ be any number satisfying
\begin{align*}
0< d_0\leq \min(d_*,\dist(Y_U,\partial U)).
\end{align*}
Define quadratic forms $q^\sharp_{\alpha,Y,U}$ ($\sharp=D,N$) by
\begin{align}
&Q(q^N_{\alpha,Y,U}):=
H^1(U)\oplus\bigoplus_{y\in Y_U}\mathbf{C}\phi_y,
\notag\\
&  q^N_{\alpha,Y,U}(u,v)
:=
(\nabla u_0,\nabla v_0)_{L^2(U)^d}\notag\\
&\hspace{2.5cm}
+
\sum_{y\in Y_U}
\Bigl(
(c_y \nabla \phi_y,\nabla v_0)_{L^2(A_y)^d}
+
(\nabla u_0,d_y \nabla \phi_y)_{L^2(A_y)^d}
\notag\\
&
\hspace{3.5cm}
-\overline{c_y}\int_{\partial A_y}\overline{\frac{\partial \phi_y}{\partial n}}v_0\,dS
-d_y\int_{\partial A_y}\overline{u_0}\frac{\partial \phi_y}{\partial n}dS
\notag\\
&\hspace{4cm}+
\overline{c_y}d_y(\phi_y,-\Delta \phi_y)_{L^2(A_y)}
\Bigr),
\label{06_13}
\\
&
u=u_0+\sum_{y\in Y_U}c_y \phi_y,\ 
v=v_0+\sum_{y\in Y_U}d_y \phi_y\in Q(q^N_{\alpha,Y,U}),\notag\\
&
u_0,v_0\in H^1(U),\ c_y,d_y\in \mathbf{C},
\notag\\
& Q(q^D_{\alpha,Y,U}):=H_0^1(U)\oplus\bigoplus_{y\in Y_U}\mathbf{C}\phi_y,
\notag\\
& q^D_{\alpha,Y,U}(u,v):=q^N_{\alpha,Y,U}(u,v),
\quad u,v\in Q(q^D_{\alpha,Y,U}),
\notag
\end{align}
where $\phi_y$ and $A_y$
are as in 
Lemma \ref{lemma_quadratic-form}.
Then, the following holds.
\begin{enumerate}
 \item[(1)] The quadratic forms
$q^\sharp_{\alpha,Y,U}$ ($\sharp=D,N$) are
symmetric, closed and lower semi-bounded.

 \item[(2)]
Let $-\Delta^\sharp_{\alpha,Y,U}$ ($\sharp=D,N$) be
the self-adjoint operators 
corresponding to the quadratic forms
$q^\sharp_{\alpha,Y,U}$.
\begin{enumerate}
 \item[(i)]
For $\sharp=D,N$,
every $u\in D(-\Delta^\sharp_{\alpha,Y,U})$
satisfies the boundary condition $(BC)_y$
for every $y\in Y_U$,
and belongs to $H^2_{\rm loc}(U\setminus Y)$.
Moreover,
\begin{align*}
 -\Delta^\sharp_{\alpha,Y,U}u=-\Delta|_{U\setminus Y}u
\end{align*}
for every $u\in D(-\Delta^\sharp_{\alpha,Y,U})$.

\item[(ii)]
Assume additionally that $U$ is $(2,1)$-smooth.
Then, the operator domains $D(-\Delta^\sharp_{\alpha,Y,U})$ 
($\sharp=D,N$) are given as
\begin{align}
 D(-\Delta^D_{\alpha,Y,U})
=&\ 
\{u\in H^2(U);u=0\mbox{ on }\partial U,\notag\\
&\ u(y)=0 \mbox{ for every }y\in Y_U\}
\oplus \bigoplus_{y\in Y_U}\mathbf{C}\phi_y,
\label{06_14}
\\
 D(-\Delta^N_{\alpha,Y,U})
=&\ 
\{u\in H^2(U);\frac{\partial u}{\partial n}=0\mbox{ on }\partial U,\notag\\
&\ u(y)=0 \mbox{ for every }y\in Y_U\}
\oplus \bigoplus_{y\in Y_U}\mathbf{C}\phi_y.
\label{06_15}
\end{align}

\item[(iii)]
For $\sharp=D$ or $N$,
let
$-\Delta^\sharp_U:=-\Delta^\sharp_{\emptyset,\emptyset,U}$
be the self-adjoint operator corresponding to the quadratic form
$q^\sharp_U:=q^\sharp_{\emptyset,\emptyset,U}$.
Then, 
the operator
$-\Delta^\sharp_{\alpha,Y,U}$
has compact resolvent
if and only if
$-\Delta^\sharp_U$ has compact resolvent.
\end{enumerate}
\end{enumerate}
\end{lemma}
\noindent\textbf{Remark.}
1. For the definition of $(2,1)$-smooth,
see \cite[Definition 2.7]{Wl}.
Roughly speaking, 
a $(2,1)$-smooth region is a region whose boundary is locally
represented as the graph of a $C^2$-function $f$,
and all second order derivatives of $f$ are Lipschitz continuous.

2. 
A lower semi-bounded self-adjoint operator $A$
has compact resolvent
if and only if $\lambda_j(A)\to \infty$ as $j\to \infty$,
so $N(\lambda;A)$ takes a finite value
for every $\lambda\in \mathbf{R}$.

3. The operator $-\Delta^D_U$ (resp.\ $-\Delta^N_U$)
is the usual Dirichlet Laplacian 
(resp.\ Neumann Laplacian) for $U$ (see \cite[section XIII.15]{Re-Si4}).
It is known that 
the operator $-\Delta^D_{U}$
has compact resolvent for every bounded open set $U$.
The operator
$-\Delta^N_{U}$
has compact resolvent
if $\partial U$  
has some regularity.
For example, if 
$U$ satisfies the uniform cone property (see \cite[Definition 2.3]{Wl}),
then there exists a continuous extension operator
from $H^1(U)$ to $H^1_0(\widetilde{U})$,
where $\widetilde{U}$ is a bounded open set such that 
$\overline{U}\subset \widetilde{U}$
(cf.\ \cite[Theorem 5.4]{Wl}),
so $-\Delta^N_{U}$ has compact resolvent.

\begin{proof}
(1) 
Clearly
the quadratic forms
$q^\sharp_{\alpha,Y,U}$ ($\sharp=D,N$) are symmetric.
First we show that 
$q^\sharp_{\alpha,Y,U}$ ($\sharp=D,N$) are lower semi-bounded.

Consider the case where $\sharp=D$.
Since $H^1_0(U)\subset H^1(\mathbf{R}^d)$,
we have that 
$Q(q^D_{\alpha,Y,U})\subset Q(q_{\alpha_U,Y_U})$ 
by Lemma \ref{lemma_quadratic-form}, and
\begin{align}
 q^D_{\alpha,Y,U}[u]
=
q_{\alpha_U,Y_U}[u]
\geq C\|u\|^2_{L^2(U)},\ 
C=\inf\sigma(-\Delta_{\alpha_U,Y_U})
\label{06_16}
\end{align}
for every $u\in Q(q^D_{\alpha,Y,U})$.
Thus $q^D_{\alpha,Y,U}$ is lower semi-bounded.

Next, consider the case where $\sharp=N$.
Let
\begin{align*}
& U_1=\{x\in U;\dist(x,\partial U)>d_0/2\},\\
& U_2=\{x\in U;d_0/3\leq \dist(x,\partial U)\leq d_0/2\},\\
& U_3=\{x\in U;\dist(x,\partial U)<d_0/3\}.
\end{align*}
Notice that $(U_2\cup U_3)\cap \supp \phi_y=\emptyset$ for $y\in Y$.
Then we can construct two functions $\chi_1,\chi_2\in C^\infty(U)$
satisfying the following conditions
(cf.\ \cite[Proposition 8.1]{Do-Iw-Mi}):
\begin{align*}
& \chi_1^2+\chi_2^2=1,\\
&
\begin{cases}
\chi_1(x) =1 & (x\in U_1),\\
0\leq \chi_1(x) \leq 1 & (x\in U_2),\\
\chi_1(x) =0 & (x\in U_3),
\end{cases}
\quad 
\begin{cases}
\chi_2(x) =0 & (x\in U_1),\\
0\leq \chi_2(x) \leq 1 & (x\in U_2),\\
\chi_2(x) =1 & (x\in U_3).
\end{cases}
\end{align*}
By a simple calculation (cf.\ \cite[Theorem 3.2]{Cy-Fr-Ki-Si} or 
\cite[(6.14)]{Do-Iw-Mi})
and (\ref{06_16}), 
we have
\begin{align}
 q^N_{\alpha,Y,U}[u]
=&
 q^D_{\alpha,Y,U}[\chi_1 u]
-
\|(\nabla\chi_1)u\|_{L^2(U)}^2
+
 q^N_{U}[\chi_2 u]
-
\|(\nabla\chi_2)u\|_{L^2(U)}^2\notag\\
\geq&\ 
(C
-\|\nabla\chi_1\|_\infty^2
-\|\nabla\chi_2\|_\infty^2
)\|u\|_{L^2(U)}^2
\label{06_17}
\end{align}
for every $u\in Q(q^N_{\alpha,Y,U})$,
since $\chi_1 u\in Q(q^D_{\alpha,Y,U})$ and
\begin{align*}
 q^N_{U}[\chi_2u]=\|\nabla(\chi_2 u)\|_{L^2(U)^d}^2 \geq 0.
\end{align*}
Thus $q^N_{\alpha,Y,U}$ is also lower semi-bounded.

Next we shall prove that $q^\sharp_{\alpha,Y,U}$ ($\sharp=D,N$) are closed,
that is,
the form domain $Q(q^\sharp_{\alpha,Y,U})$ is complete with 
respect to the norm $\|\cdot\|_{Q(q^\sharp_{\alpha,Y,U})}$.
For $\sharp=D$, it is obvious since $Q(q^D_{\alpha,Y,U})$
is a closed subspace of $Q(q_{\alpha_U,Y_U})$.
For $\sharp=N$,
we define the norm on $q^N_{\alpha,Y,U}$ by
\begin{align*}
& \|u\|^2_{Q(q^N_{\alpha,Y,U})}
:=
q^N_{\alpha,Y,U}[u]+M\|u\|_{L^2(U)}^2,\\
&
M:=
-C
+\|\nabla\chi_1\|_\infty^2
+\|\nabla\chi_2\|_\infty^2
+1
\end{align*}
for $u\in Q(q^N_{\alpha,Y,U})$,
where $C$ is the constant in (\ref{06_17}).
Take a Cauchy sequence 
$\{u_k\}_{k=1}^\infty$ in 
$(Q(q^N_{\alpha,Y,U}),\|\cdot\|_{Q(q^N_{\alpha,Y,U})})$.
Put $\eta_j=\chi_j^2$ ($j=1,2$),
where $\chi_j$ is the function defined above.
Then, the maps
\begin{align*}
& Q(q^N_{\alpha,Y,U})\ni u \mapsto \eta_1 u\in  Q(q^D_{\alpha,Y,U}),\\
& Q(q^N_{\alpha,Y,U})\ni u \mapsto \eta_2 u\in  Q(q^N_{U})
\end{align*}
are continuous, 
so $\{\eta_j u_k\}_{k=1}^\infty$ ($j=1,2$) are Cauchy sequences.
Since the quadratic forms 
$q^D_{\alpha,Y,U}$
and
$q^N_{U}$
are closed
(notice that
$Q(q^N_{U})=H^1(U)$ is complete),
there exist 
$v_1\in Q(q^D_{\alpha,Y,U})$
and 
$v_2\in Q(q^N_{U})$ such that
\begin{align*}
& \|\eta_1 u_k-v_1\|_{Q(q^D_{\alpha,Y,U})}\to 0\quad (k\to \infty),\\
& \|\eta_2 u_k-v_2\|_{Q(q^N_{U})}\to 0\quad (k\to \infty).
\end{align*}
Thus we have $u_k=(\eta_1+\eta_2)u_k\to v_1+v_2\in Q(q^N_{\alpha,Y,U})$,
and we conclude that $Q(q^N_{\alpha,Y,U})$ is complete.

(2) 
First, the self-adjoint operators $-\Delta^\sharp_{\alpha,Y,U}$ 
($\sharp=D,N$) are well-defined, 
by the statement (1) and the method of quadratic forms.

(i) Let $v\in D(-\Delta^\sharp_{\alpha,Y,U})$.
Since $D(-\Delta^\sharp_{\alpha,Y,U})\subset
Q(q^\sharp_{\alpha,Y,U})$,
$v$ is written as
\begin{align*}
 v=v_0+\sum_{y\in Y_U}d_y\phi_y,\  
d_y\in \mathbf{C},\ 
v_0\in 
\begin{cases}
 H^1_0(U) & (\sharp=D),\\
 H^1(U) & (\sharp=N),
\end{cases}
\end{align*}
and there exists $-\Delta^\sharp_{\alpha,Y,U} v\in L^2(U)$ such that
\begin{align}
\label{06_18}
 q^\sharp_{\alpha,Y,U}(u,v)=(u,-\Delta^\sharp_{\alpha,Y,U} v)_{L^2(U)}
\end{align}
for every $u\in Q(q^\sharp_{\alpha,Y,U})$.
If we put $u_0\in C_0^\infty(U\setminus Y)$
and $c_y=0$ in (\ref{06_13}),
we have by (\ref{06_11}) and integration by parts
\begin{align}
q^\sharp_{\alpha,Y,U}(u_0,v)
=&\ 
(\nabla u_0,\nabla v_0)_{L^2(U)}\notag\\
&+
\sum_{y\in Y_U}
\Bigl(
(\nabla u_0,d_y \nabla \phi_y)_{L^2(A_y)^d}
-d_y\int_{\partial A_y}\overline{u_0}\frac{\partial \phi_y}{\partial n}dS
\Bigr)
\notag\\
&=(-\Delta u_0, v_0)_{L^2(U)}+
\sum_{y\in Y_U}(u_0, -\Delta(d_y\phi_y))_{L^2(A_y)}
\notag\\
&=
\left(-\Delta u_0,v_0+\sum_{y\in Y_U}d_y \phi_y\right)_{L^2(U)}
=
(-\Delta u_0,v)_{L^2(U)}.
\label{06_19}
\end{align}
Comparing (\ref{06_18}) with $u=u_0$ and (\ref{06_19}),
we have 
\begin{align}
 -\Delta^\sharp_{\alpha,Y,U}v= -\Delta|_{U\setminus Y}v,
\label{06_20}
\end{align}
where $-\Delta|_{U\setminus Y}$ is the distributional 
Laplacian on $U\setminus Y$.
Moreover, (\ref{06_20}) implies that
$-\Delta|_{U\setminus Y}v\in L^2(U)$,
then 
we have $v\in H^2_{\rm loc}(U\setminus Y)$
by the elliptic interior regularity (cf.\ \cite{Ag}).

Take $\eta\in C_0^\infty(U)$ such that $\eta=1$ 
on $\bigcup_{y\in Y_U}\supp \phi_y$.
Then, by putting $\eta v=0$ on $U^c$,
we have $\eta v\in L^2(\mathbf{R}^d)$ and
\begin{align*}
-\Delta_{\mathbf{R^d}\setminus Y_U}(\eta v)
=(-\Delta \eta)v-2\nabla\eta\cdot \nabla v
+\eta(-\Delta|_{U\setminus Y} v)
\in L^2(\mathbf{R}^d), 
\end{align*}
since $v\in H^2_{\rm loc}(U\setminus Y)$,
$-\Delta|_{U\setminus Y} v\in L^2(U)$,
and $\supp \nabla \eta$ and $\supp\Delta \eta$ 
are compact sets in $U\setminus Y$.
By the definition of the distributional derivative,
we have
\begin{align}
 (u,-\Delta_{\mathbf{R}^d\setminus Y_U}(\eta v))_{L^2(\mathbf{R}^d)}
=
 (-\Delta u,\eta v)_{L^2(\mathbf{R}^d)}
\label{06_21}
\end{align}
for every $u\in C_0^\infty(\mathbf{R}^d\setminus Y_U)$.
The equality (\ref{06_21}) implies that $\eta v\in D(-\Delta_{\min,Y_U}^*)$,
where $-\Delta_{\min ,Y_U}$ is the minimal operator defined by
\begin{align*}
& -\Delta_{\min ,Y_U}u :=-\Delta u,\\
&u\in D(-\Delta_{\min ,Y_U}):=H^2_0(\mathbf{R}^d\setminus Y_U),\\
&H^2_0(\mathbf{R}^d\setminus Y_U)
=\{u\in H^2(\mathbf{R}^d);u(y)=0\mbox{ for every }y\in Y_U\},
\end{align*}
and $-\Delta_{\min,Y_U}^*$ is the adjoint operator of $-\Delta_{\min,Y_U}$
(or the maximal operator),
explicitly given by
\begin{align*}
& -\Delta_{\min ,Y_U}^*u =-\Delta|_{\mathbf{R}^d\setminus Y_U}u,\\
&u\in D(-\Delta_{\min ,Y_U}^*)=
\{u\in L^2(\mathbf{R}^d);-\Delta|_{\mathbf{R}^d\setminus Y_U}u\in L^2(\mathbf{R}^d) \}.
\end{align*}
The operator $-\Delta_{\min ,Y_U}$ is closed and symmetric,
and it is known that the deficiency indices of 
$-\Delta_{\min ,Y_U}$ are $(N,N)$
($N=\# Y_U$; see \cite{Al-Ge-Ho-Ho}).
Thus we have
\begin{align*}
\dim(D(-\Delta_{\min ,Y_U}^*)/D(-\Delta_{\min ,Y_U}))=2N.
\end{align*}
Put $\psi_y(x):=\chi(|x-y|)$,
where $\chi$ is the function 
given in Lemma \ref{lemma_quadratic-form}.
Then the vectors $\{\phi_y,\psi_y\}_{y\in Y_U}$ 
form a basis of 
$D(-\Delta_{\min ,Y_U}^*)/ D(-\Delta_{\min ,Y_U})$,
so we have 
\begin{align*}
& D(-\Delta_{\min ,Y_U}^*)
=
D(-\Delta_{\min ,Y_U})\oplus\bigoplus_{y\in Y_U}
(\mathbf{C}\phi_y\oplus \mathbf{C}\psi_y).
\end{align*}
Since $\eta v\in D(-\Delta_{\min ,Y_U}^*)$,
$\eta v$ is represented as the following form
\begin{align*}
& \eta v = \widetilde{v_1} + \sum_{y\in Y_U}(d_y \phi_y+e_y \psi_y),
\ d_y,e_y\in \mathbf{C},
\\
&\widetilde{v_1}\in H^2_0(\mathbf{R}^d\setminus Y_U),\ 
\supp \widetilde{v_1}\subset \supp \eta.
\end{align*}
We regard $\eta v$ and $\widetilde{v_1}$ as the functions on $U$,
so $\widetilde{v_1}\in H^2_0(U\setminus Y)$.
Then $v$ is represented as
\begin{align}
\label{06_22}
& v = v_1+ \sum_{y\in Y_U}(d_y \phi_y+e_y \psi_y),\ 
d_y,e_y\in \mathbf{C},
\\
& v_1:
=
(1-\eta)v+\widetilde{v_1}
=
(1-\eta)v_0+\widetilde{v_1}
\in  
\begin{cases}
H^1(U)\cap H^2_{\rm loc}(U) & (\sharp=N),\\
H^1_0(U)\cap H^2_{\rm loc}(U) & (\sharp=D).
\end{cases}
\label{06_23}
\end{align}
Since $\eta=1$ on $\bigcup_{y\in Y_U}\supp \phi_y$,
$v\in L^2(U)\cap H^2_{\rm loc}(U\setminus Y)$,
$-\Delta|_{U\setminus Y}v\in L^2(U)$, 
and $\widetilde{v_1}\in H^2_0(U\setminus Y)$,
we have
\begin{align}
&
v_1\in L^2(U),\ 
-\Delta|_U v_1= -\Delta|_U((1-\eta)v)-\Delta|_U \widetilde{v_1}\in L^2(U),
\label{06_24}\\
&v_1(y)=0\mbox{ for every }y\in Y_U.
\label{06_25}
\end{align}

Next, put $u_0=0$, $c_{y'}=\delta_{y y'}$, 
$v_0=v_1+\sum_{y\in Y_U}e_y \psi_y$ in (\ref{06_13}),
we have by (\ref{06_22})
\begin{align}
&\ q^\sharp_{\alpha,Y,U}(\phi_y,v)\notag\\
=&\ \left(\nabla \phi_y,\nabla(v_1+e_y \psi_y)\right)_{L^2(A_y)^d}\notag\\
&-\int_{\partial A_y}\overline{\frac{\partial \phi_y}{\partial n}}
(v_1+e_y \psi_y)dS
+
d_y(\phi_y,-\Delta \phi_y)_{L^2(A_y)}\notag\\
=&\ 
(-\Delta|_{U\setminus Y}\phi_y, v_1+e_y\psi_y)_{L^2(U)}
+
d_y(\phi_y,-\Delta|_{U\setminus Y}\phi_y)_{L^2(U)}.
\label{06_26}
\end{align}
On the other hand, (\ref{06_18}) and (\ref{06_20}) imply
\begin{align}
 q^\sharp_{\alpha,Y,U}(\phi_y,v)
=&\ 
(\phi_y,-\Delta_{U\setminus Y}v)_{L^2(U)}\notag\\
=&\ 
(\phi_y,-\Delta|_{U\setminus Y}(v_1+d_y\phi_y+e_y\psi_y))_{L^2(U)}.
\label{06_27}
\end{align}
Notice that $v_1=\widetilde{v_1}$ on $\supp \phi_y$ and
\begin{align}
&\  (-\Delta|_{U\setminus Y}\phi_y,v_1)_{L^2(U)}
=
 (-\Delta|_{U\setminus Y}\phi_y,\widetilde{v_1})_{L^2(U)}
\notag\\
=&\ 
 (\phi_y,-\Delta|_{U\setminus Y}\widetilde{v_1})_{L^2(U)}
=
 (\phi_y,-\Delta|_{U\setminus Y} v_1)_{L^2(U)},
\label{06_28}
\end{align}
since (\ref{06_28}) holds 
when $\widetilde{v_1}\in C_0^\infty(U\setminus Y)$,
and all inner products  in (\ref{06_28}) are continuous
with respect to $\widetilde{v_1}\in H^2_0(U\setminus Y)$.
Thus (\ref{06_26}), (\ref{06_27}), and (\ref{06_28}) imply
\begin{align}
 e_y
\bigl\{
(-\Delta|_{U\setminus Y} \phi_y,\psi_y)_{L^2(U)}
-
 (\phi_y,-\Delta|_{U\setminus Y} \psi_y)_{L^2(U)}
\bigr\}=0.
\label{06_29}
\end{align}
 Moreover,
when $d=2$, we have by integration by parts
\begin{align}
& (-\Delta|_{U\setminus Y}\phi_y,\psi_y)_{L^2(U)}
-
 (\phi_y,-\Delta|_{U\setminus Y}\psi_y)_{L^2(U)}
\notag\\
=&\ 
\int_0^{d_0/3}
\biggl\{
\chi(r)\left(\frac{1}{r}\frac{d}{dr}r \frac{d}{dr}\right)
(\chi(r)(\log r-2\pi\alpha_y))
\notag\\
&\hspace{2cm}
-
\chi(r)(\log r-2\pi\alpha_y)\left(\frac{1}{r}\frac{d}{dr}r \frac{d}{dr}\right)\chi(r)
\biggr\}rdr
\notag\\
=&\ 
\left[
\chi(r)r\frac{d}{dr}(\chi(r)\log r)-\chi(r)\log r \cdot r \frac{d}{dr}\chi(r)
\right]_0^{d_0/3}
\notag\\
=&\ 
\left[
\chi(r)^2
\right]_0^{d_0/3}=-1,
\label{06_30}
\end{align}
where we used the polar coordinate change $x-y=r(\cos\theta,\sin\theta)$.
When $d=3$, we have similarly
\begin{align}
& (-\Delta|_{U\setminus Y}\phi_y,\psi_y)_{L^2(U)}
-
 (\phi_y,-\Delta|_{U\setminus Y}\psi_y)_{L^2(U)}
\notag\\
=&\ 
\int_0^{d_0/3}
\biggl\{
\chi(r)\left(-\frac{1}{r^2}\frac{d}{dr}r^2 \frac{d}{dr}\right)
(\chi(r)(r^{-1}+4\pi\alpha_y))
\notag\\
&\hspace{2cm}
-
\chi(r)(r^{-1}+4\pi\alpha_y)\left(-\frac{1}{r^2}\frac{d}{dr}r^2 \frac{d}{dr}\right)\chi(r)
\biggr\}r^2dr
\notag\\
=&\ 
\left[
-\chi(r)r^2\frac{d}{dr}(\chi(r)r^{-1})+\chi(r)r^{-1} \cdot r^2 \frac{d}{dr}\chi(r)
\right]_0^{d_0/3}
\notag\\
=&\ 
\left[
\chi(r)^2
\right]_0^{d_0/3}=-1.
\label{06_31}
 \end{align}
By (\ref{06_29}), (\ref{06_30}) and (\ref{06_31}),
we have $e_y=0$.
Thus we have by 
(\ref{06_22})
\begin{align}
\label{06_32}
 v = v_1 + \sum_{y\in Y_U}d_y \phi_y,
\end{align}
where $v_1$ satisfies 
(\ref{06_23}), (\ref{06_24}) and (\ref{06_25}).
Then (\ref{06_32}) and 
(\ref{06_25}) means that $v$ satisfies $(BC)_y$.

(ii) 
We additionally assume that
$U$ is $(2,1)$-smooth,
and continue the argument of part (i).
Let $\widetilde{-\Delta^D_{\alpha,Y,U}}$
(resp.\ $\widetilde{-\Delta^N_{\alpha,Y,U}}$)
be the operator $-\Delta|_{U\setminus Y}$
with the operator domain 
given by (\ref{06_14}) (resp.\ (\ref{06_15})).

First we shall prove that
\begin{align}
D(\widetilde{-\Delta^\sharp_{\alpha,Y,U}})\subset 
D(-\Delta^\sharp_{\alpha,Y,U})\quad
(\sharp=D,N).
\label{06_33}
\end{align}
Let
\begin{align*}
&u=u_0+\sum_{y\in Y_U}c_y \phi_y\in Q(q^\sharp_{\alpha,Y}),
\ u_0\in 
\begin{cases}
H^1_0(U) & (\sharp=D),\\
H^1(U) & (\sharp=N),
\end{cases}\ 
c_y\in \mathbf{C},\\
&v=v_0+\sum_{y\in Y_U}d_y \phi_y\in D(\widetilde{-\Delta^\sharp_{\alpha,Y,U}}),\ 
v_0\in H^2(U),\ d_y\in \mathbf{C},\\
& v_0(y)=0\mbox{ for every $y\in Y_U$},\ 
\begin{cases}
 v_0|_{\partial U}=0 & (\sharp=D),\\
\frac{\partial v_0}{\partial n}|_{\partial U}=0 & (\sharp=N).
\end{cases}
\end{align*}
Since $U$ is $(2,1)$-smooth, we have
by the definition (\ref{06_13}) of $q^\sharp_{\alpha,Y,U}$
and integration by parts
\begin{align}
&\  q^\sharp_{\alpha,Y,U}(u,v)\notag\\
=&\ 
(u_0,-\Delta v_0)_{L^2(U)} +\int_{\partial U}\overline{u_0}\frac{\partial v_0}{\partial n}dS\notag\\
&+\sum_{y\in Y_U}
\biggl(
\overline{c_y}(-\Delta \phi_y, v_0)_{L^2(A_y)}
+
d_y(u_0,-\Delta \phi_y)_{L^2(A_y)}\notag\\
&
\hspace{1.5cm}
+\overline{c_y}d_y(\phi_y,-\Delta \phi_y)_{L^2(A_y)}
\biggr).
\label{06_34}
\end{align}
By (\ref{06_11}) and (\ref{06_28}), we have
\begin{align}
 (-\Delta \phi_y,v_0)_{L^2(A_y)}
=
 (-\Delta|_{U\setminus Y} \phi_y,v_0)_{L^2(U)}
=
 (\phi_y,-\Delta v_0)_{L^2(U)}.
\label{06_35}
\end{align}
By (\ref{06_11}), (\ref{06_34}) and (\ref{06_35}), we have
\begin{align*}
 q^\sharp_{\alpha,Y,U}(u,v)
=
(u,-\Delta|_{U\setminus Y}v)_{L^2(U)}
+\int_{\partial U}\overline{u_0}\frac{\partial v_0}{\partial n}dS.
\end{align*}
The last surface integral vanishes,
since $u_0|_{\partial U}=0$ ($\sharp=D$) or
$\partial v_0/\partial n|_{\partial U}=0$ ($\sharp=N$).
Thus we have
\begin{align}
 q^\sharp_{\alpha,Y,U}(u,v)=
(u,-\Delta|_{U\setminus Y}v)_{L^2(U)}
\label{06_36}
\end{align}
for every $u\in Q(q^\sharp_{\alpha,Y,U})$.
The equality (\ref{06_36}) implies that
$v\in D(-\Delta^\sharp_{\alpha,Y,U})$ and
$-\Delta^\sharp_{\alpha,Y,U}v=-\Delta|_{U\setminus Y}v$.
Therefore (\ref{06_33}) holds.

Next, we shall prove
\begin{align}
 \label{06_37}
D(-\Delta^\sharp_{\alpha,Y,U})
\subset 
D(\widetilde{-\Delta^\sharp_{\alpha,Y,U}})
\quad
(\sharp=D,N).
\end{align}
Take $v\in D(-\Delta^\sharp_{\alpha,Y,U})$.
In the proof of the statement (i),
we have proved that
$v$ is represented as the form
(\ref{06_32}), where $v_1$ satisfies
(\ref{06_23}), (\ref{06_24}), and (\ref{06_25}).
Since $\phi_y\in D(\widetilde{-\Delta^\sharp_{\alpha,Y,U}})\subset D(-\Delta^\sharp_{\alpha,Y,U})$,
we have
\begin{align}
\label{06_38}
 v_1 = v-\sum_{y\in Y_U} d_y \phi_y\in 
D(-\Delta^\sharp_{\alpha,Y,U}).
\end{align}
And (\ref{06_38}) implies that
\begin{align}
& q^\sharp_{\alpha,Y,U}(u,v_1)
=
(\nabla u,\nabla v_1)_{L^2(U)^d}
=
(u,-\Delta v_1 )_{L^2(U)}
\label{06_39}
\end{align}
for every $u\in H^1_0(U)$ ($\sharp=D$)
or $u\in H^1(U)$ ($\sharp=N$).
Under 
the assumption that $U$ is $(2,1)$-smooth,
the conditions 
(\ref{06_23}), (\ref{06_24}), and (\ref{06_39})
imply that 
\begin{align}
 v_1\in 
\begin{cases}
H^2(U)\cap H^1_0(U) & (\sharp=D),\\
H^2(U) & (\sharp=N),
\end{cases}
\label{06_40}
\end{align}
by the regularity theorem of the elliptic equation (\cite[Theorem 20.4]{Wl}).

Notice that $v=v_1$ around the boundary $\partial U$ of $U$.
We shall check that $v_1$ satisfies the corresponding boundary conditions.
If $\sharp=D$, then 
(\ref{06_40}) implies that $v_1\in H^1_0(U)$,
so $v_1|_{\partial U}=0$ by the trace theorem,
since $U$ is $(2,1)$-smooth.

Next, let $\sharp=N$.
Since $v_1\in H^2(U)$ and $U$ is $(2,1)$-smooth,
we have by integration by parts
\begin{align}
 (\nabla u, \nabla v_1)_{L^2(U)^d}
=
(u,-\Delta v_1)_{L^2(U)}+\int_{\partial U}\overline{u}\frac{\partial v}{\partial n}dS
\label{06_41}
\end{align}
for every $u\in H^1(U)$.
Comparing (\ref{06_39}) and (\ref{06_41}),
we have
\begin{align}
\label{06_42}
\int_{\partial U}\overline{u}\frac{\partial v}{\partial n}dS=0
\end{align}
for every $u\in H^1(U)$.
If $U$ is $(2,1)$-smooth,
then the range of $H^1(U)\ni u \mapsto u|_{\partial U}\in L^2(\partial U;dS)$ 
is dense in $L^2(\partial U;dS)$
(actually, the range is equal to $H^{1/2}(\partial U;dS)$ 
by \cite[Theorem 8.9]{Wl}),
and (\ref{06_42}) implies ${\partial v}/{\partial n}=0$ on $\partial U$.
Thus we conclude (\ref{06_37}) holds,
and the statement (ii) is proved.

(iii)
By the definition of $Q(q^\sharp_{\alpha,Y,U})$,
we have
\begin{align*}
&q^\sharp_{U}[u]
=
q^\sharp_{\alpha,Y,U}[u]\quad (u\in Q(q^\sharp_{U})),\\
& \dim(Q(q^\sharp_{\alpha,Y,U})/Q(q^\sharp_{U}))
=\# (Y\cap U)=:k.
\end{align*}
Then the statements (1) and (2) of Proposition \ref{proposition_comparing}
imply
\begin{align*}
 \lambda_j(-\Delta^\sharp_{U})
\leq
 \lambda_{j+k}(-\Delta^\sharp_{\alpha,Y,U})
\leq
 \lambda_{j+k}(-\Delta^\sharp_{U})
\end{align*}
for every $j\in \mathbf{N}$.
The conclusion follows from this inequality.
\end{proof}

Next we shall prove some properties of eigenvalue counting functions,
including the Dirichlet--Neumann bracketing technique
(\cite[section XIII-15]{{Re-Si4}}).
For $\lambda\in \mathbf{R}$ and a bounded open set $U$,
we denote $N^\sharp_{\alpha, Y,U}(\lambda):=N(\lambda;-\Delta^\sharp_{\alpha,Y,U})$ ($\sharp=D,N$).

\begin{proposition}
 \label{proposition_DNbracketing}
Let $d=1,2$ or $3$.
Let $Y$ be a locally finite set in $\mathbf{R}^d$,
and $\alpha=(\alpha_y)_{y\in Y}$ be a sequence of real numbers.
\begin{enumerate}
 \item [(1)]
Let $U$ be a bounded open set in $\mathbf{R}^d$,
such that $\partial U\cap Y=\emptyset$.
Then, we have for every $\lambda\in \mathbf{R}$
\begin{align*}
N^D_{\alpha,Y,U}(\lambda)
\leq 
N^N_{\alpha,Y,U}(\lambda).
\end{align*}

 \item[(2)]
Let $U_1, \ldots, U_n$ be disjoint bounded open sets in $\mathbf{R}^d$
such that
\begin{enumerate}
 \item[(i)]
$\bigoplus_{j=1}^k L^2(U_j)=L^2(U)$,
where $U$ is the interior of $\overline{U_1\cup\cdots \cup U_n}$, and
 \item[(ii)]  $\partial U_j\cap Y=\emptyset$ for every $j=1,...,n$.
\end{enumerate}
Then, we have for every $\lambda\in \mathbf{R}$
\begin{align}
& N^N_{\alpha,Y,U}(\lambda)
\leq
\sum_{j=1}^n 
 N^N_{\alpha,Y,U_j}(\lambda),
\label{06_43}
\\
& N^D_{\alpha,Y,U}(\lambda)
\geq
\sum_{j=1}^n 
 N^D_{\alpha,Y,U_j}(\lambda).
\label{06_44}
\end{align}

 \item[(3)] Let $U$ be a bounded open set in $\mathbf{R}^d$,
such that $\partial U\cap Y=\emptyset$.
Let $\widetilde{Y}$ is a subset of $Y_U:=Y\cap U$,
and $\widetilde{\alpha}=\alpha|_{\widetilde{Y}}$.
Then, we have for $\sharp=D,N$ and for every $\lambda\in \mathbf{R}$
\begin{align}
\label{06_45}
 N^\sharp_{\alpha,Y,U}(\lambda)
\leq
 N^\sharp_{\widetilde{\alpha},\widetilde{Y},U}(\lambda)+\#(Y_U\setminus \widetilde{Y}).
\end{align}
\end{enumerate}
\end{proposition}
\begin{proof}
(1) 
By Lemma \ref{lemma_DNdef},
we have $Q(q^D_{\alpha,Y,U})\subset Q(q^N_{\alpha,Y,U})$
and $q^D_{\alpha,Y,U}[u]=q^N_{\alpha,Y,U}[u]$
for $u\in Q(q^D_{\alpha,Y,U})$.
Then the conclusion follows from 
the statement (1) of Proposition \ref{proposition_comparing}
with $T=Id$.

(2)
Let $V=\bigcup_{j=1}^n U_j$,
so $U$ is the interior of $\overline{V}$.
By assumption, $L^2(U)$ and $L^2(V)$ can be identified.
By Lemma \ref{lemma_DNdef},
we have 
\begin{align*}
& Q(q^N_{\alpha,Y,U})
\subset
Q(q^N_{\alpha,Y,V}),\\
&q^N_{\alpha,Y,U}[u]
=q^N_{\alpha,Y,V}[u],\quad
u\in Q(q^N_{\alpha,Y,U}),\\
& Q(q^D_{\alpha,Y,U})
\supset
Q(q^D_{\alpha,Y,V}),\\
&q^D_{\alpha,Y,U}[u]
=q^D_{\alpha,Y,V}[u],\quad
u\in Q(q^D_{\alpha,Y,V}).
\end{align*}
Then (\ref{06_43}) and (\ref{06_44}) 
are consequences of 
the statement (1) of Proposition \ref{proposition_comparing},
with $T=Id$.

(3)
When $d=2$ or $3$,
(\ref{06_45}) is a consequence of
Lemma \ref{lemma_DNdef}
and the statement (2) of Proposition \ref{proposition_comparing} with $T=Id$.
When $d=1$,
we assume that $Y_U=\{y_1,...,y_k\}\cup\widetilde{Y}$, where $k=\#(Y_U\setminus\widetilde{Y})$.
Take a subspace $W$ of $L^2(U)$ such that 
$Q(q^\sharp_{\alpha,Y,U})\oplus W=L^2(U)$ and 
$Q(q^\sharp_{\alpha,Y,U})\cap W=\{0\}$.
If a subspace $V$ of $L^2(U)$ satisfies $\codim V\leq j-1$,
then the subspace
\begin{align*}
 \widetilde{V}:=\{u\in V\cap Q(q^\sharp_{\alpha,Y,U});u(y_1)=\cdots=u(y_k)=0\}\oplus W
\end{align*}
satisfies $\codim \widetilde{V}\leq j+k-1$.
Then we have by Proposition \ref{proposition_min-max}
\begin{align*}
&\  \lambda_{j+k}(-\Delta^\sharp_{\alpha,Y,U})\\
=&\ 
\sup_{\codim V\leq j+k-1}
\left(
\inf_{u\in Q(q^\sharp_{\alpha,Y,U})\cap V,\,\|u\|_{L^2(U)}=1}q^\sharp_{\alpha,Y,U}[u]\right)\\
\geq &\ 
\sup_{\codim V\leq j-1}
\left(
\inf_{u\in Q(q^\sharp_{\alpha,Y,U})\cap \widetilde{V},\|u\|_{L^2(U)}=1}q^\sharp_{\widetilde{\alpha},\widetilde{Y},U}[u]\right)\\
\geq &\ 
\sup_{\codim V\leq j-1}
\left(
\inf_{u\in Q(q^\sharp_{\alpha,Y,U})\cap V,\|u\|_{L^2(U)}=1}q^\sharp_{\widetilde\alpha,\widetilde{Y},U}[u]\right)\\
=&\ \lambda_{j}(-\Delta_{\widetilde{\alpha},\widetilde{Y},U}^\sharp).
\end{align*}
Thus we have (\ref{06_45}) for $d=1$.
\end{proof}
For $\lambda<0$,
we can compare counting functions $N^\sharp_{\alpha,Y,U}(\lambda)$
($\sharp=D,N$)
with the third counting function $N_{\alpha,Y,U}(\lambda):=N(\lambda;-\Delta_{\alpha_U,Y_U})$,
where $Y_U=Y\cap U$ and $\alpha_U=\alpha|_{Y_U}$.
\begin{proposition}
 \label{proposition_DNbracketing2}
Let $d=1,2$ or $3$.
Let $U$ be a bounded open set
such that 
$L^2(\mathbf{R}^d)=L^2(U)\oplus L^2(V)$,
where $V$ is the interior of $\mathbf{R}^d\setminus U$.
Let $Y$ be a locally finite set in $\mathbf{R}^d$
with $Y\cap \partial U=\emptyset$,
and $\alpha=(\alpha_y)_{y\in Y}$ be a sequence of real numbers.
Then, for any $\lambda<0$, we have
\begin{align*}
N^D_{\alpha,Y,U}(\lambda)
\leq 
N_{\alpha,Y,U}(\lambda)
\leq
N^N_{\alpha,Y,U}(\lambda).
\end{align*}
\end{proposition}
\begin{proof}
Since $Y_U\cap V=\emptyset$,
we have by Lemma \ref{lemma_quadratic-form} and \ref{lemma_DNdef}
\begin{align*}
 Q(q^D_{\alpha,Y,U})\oplus
 Q(q^D_{V})
\subset
 Q(q_{\alpha_U,Y_U})
\subset
 Q(q^N_{\alpha,Y,U})\oplus
 Q(q^N_{V}),
\end{align*}
and
\begin{align*}
& q^D_{\alpha,Y,U}[u]
+
 q^D_{V}[v]
=
 q_{\alpha_U,Y_U}[u\oplus v]
\quad
(u\in Q(q^D_{\alpha,Y,U}),
v\in Q(q^D_{V})),\\
& 
 q_{\alpha_U,Y_U}[u]
=
q^N_{\alpha,Y,U}[u|_U]
+
q^N_{V}[u|_V]
\quad
(u\in Q(q_{\alpha_U,Y_U})).
\end{align*}
Since the operators $-\Delta^\sharp_{V}$ ($\sharp=D,N$)
are non-negative,
we have $N^\sharp_{V}(\lambda)=0$ 
for $\lambda<0$.
Thus the conclusion follows from the statement (1) of Proposition
\ref{proposition_comparing} with $T=Id$.
\end{proof}

\subsection{Existence of IDS}
\label{subsection_eIDS}
In this subsection,
we shall prove the almost sure existence of IDS,
defined in (\ref{01_03}),
by the functional analytic method
using the Dirichlet--Neumann bracketing (cf.\ \cite{Ca-La} or 
\cite[section 2.3]{Ki-Me}).
In the present paper $Y_\omega$ is assumed to be the
Poisson configuration,
but here we prove our results under more general assumptions
described below,
for the future development.

Let $d=1,2$ or $3$.
Let $(\Omega_1,P_{\Omega_1})$ be a probability space,
and $\xi_{\omega_1}$ ($\omega_1\in \Omega_1$) be a 
non-negative integer-valued random measure on 
$(\mathbf{R}^d,\mathcal{B}(\mathbf{R}^d))$ 
($\mathcal{B}(\mathbf{R}^d)$ is the Borel $\sigma$-algebra in $\mathbf{R}^d$)
satisfying the following conditions.
\begin{enumerate}

 \item[(I1)]
For any Borel measurable set $S$ in $\mathbf{R}^d$,
the function $\omega_1\mapsto \xi_{\omega_1}(S)$ is a measurable 
map from $\Omega_1$ to $[0,\infty]$.
Moreover, if $S$ is bounded, then
$\xi_{\omega_1}(S)$ is finite almost surely.

 \item[(I2)]
The measure $\xi_{\omega_1}$ is simple, almost surely.
\end{enumerate}
Here, we say a non-negative integer-valued measure $\xi$ on $\mathbf{R}^d$ is 
\textit{simple} if $\xi(\{x\})\leq 1$ for every $x\in \mathbf{R}^d$
(cf.\ \cite{La-Pe}).
If $\xi_{\omega_1}$ satisfies (I1),
then we call $\xi_{\omega_1}$ a \textit{boundedly finite point process},
and if $\xi_{\omega_1}$ additionally satisfies (I2),
then we say that the process $\xi_{\omega_1}$ is \textit{simple}
(cf.\ \cite{La-Pe, Da-Ve1, Kal}).
A point process $\xi_{\omega_1}$ satisfying (I1) and (I2) 
is identified with the support 
$Y_{\omega_1}$ of the measure $\xi_{\omega_1}$.
The set $Y_{\omega_1}$ is a locally finite subset of $\mathbf{R}^d$
and $\xi_{\omega_1}(S)=\#(Y_{\omega_1}\cap S)$,
almost surely.
In the sequel, 
we also call $Y_{\omega_1}$ a point process 
by abuse of terminology,
and describe the assumptions in terms of $Y_{\omega_1}$.

Additionally, we assume the following.
Let $X$ be the measure space of all locally finite subsets $Y$ of 
$\mathbf{R}^d$,
equipped with the $\sigma$-algebra generated by the maps
$\{Y\mapsto\#(Y\cap S)\}_S$, where $S$ ranges over 
all bounded Borel subsets of $\mathbf{R}^d$.
We use the notation
\begin{align*}
 A+x :=\{Y+x\in X;Y\in A\},\quad
Y+x:=\{y+x\in \mathbf{R}^d;y\in Y\},
\end{align*}
for $A\in X$ and $x\in \mathbf{R}^d$.
The terminologies used below are taken from the standard textbooks
(cf.\ \cite{La-Pe,Da-Ve1,Kal}).
\begin{enumerate}
 \item[(I3)]
The intensity measure 
$\mu(S):=\mathbf{E}_{\Omega_1}[\#(Y_{\omega_1}\cap S)]$
is \textit{boundedly finite},
that is, $\mu(S)<\infty$ for any bounded measurable set $S$.

 \item[(I4)] The process $Y_{\omega_1}$ is $\mathbf{Z}^d$-\textit{stationary}, that is, there exist measure preserving transformations 
$\{T_n\}_{n\in \mathbf{Z}^d}$ on $\Omega_1$ such that
\begin{align}
\label{06_46}
Y_{\omega_1}+n = Y_{T_n \omega_1}
\end{align}
almost surely, for every $n\in \mathbf{Z}^d$.

 \item[(I5)] The process $Y_{\omega_1}$ is $\mathbf{Z}^d$-ergodic,
that is, if a measurable subset $A$ of $X$ satisfies 
\begin{align*}
 A+n =A
\end{align*}
for every $n\in \mathbf{Z}^d$, then $\mathbf{P}_{\Omega_1}(A)=0$ or $1$.

 \item[(I6)] 
$\#(Y_{\omega_1}\cap(\partial Q_1+n))=0$ 
for every $n\in \mathbf{Z}^d$ almost surely,
where $Q_1=(0,1)^d$.
\end{enumerate}
We impose the assumption (I6) in order to avoid the difficulty
caused by $Y_{\omega_1}\cap \partial Q_L\not=\emptyset$
($Q_L=(0,L)^d$),
when we define the operator $-\Delta^\#_{\alpha,Y_{\omega_1},Q_L}$.

Let us consider the case where $Y_{\omega_1}$ is the Poisson point process 
on $\mathbf{R}^d$ with intensity measure $\rho dx$,
where $\rho$ is a positive constant.
In this case, (I1)-(I4) and (I6) are satisfied
by definition.
The ergodicity assumption (I5) follows from 
Kolmogorov's 0-1 law,
since the random variables 
$\{\#(Y_{\omega_1}\cap(E+n))\}_{n\in \mathbf{Z}^d}$ 
are independent for any Borel measurable set $E\subset Q_1$
(cf.\ \cite[Chapter 20]{Kl}).

Next, we introduce the randomness of the coefficients 
$\alpha$.
We call a simple point process $Y_{\omega_1}$ on $\mathbf{R}^d$ is \textit{proper},
if there exists a random variable $n_{\omega_1}$ 
with $n_{\omega_1}\in \{0,1,2,...\}\cup\{\infty\}$,
and $\mathbf{R}^d$-valued random variables
$y_{\omega_1,j}$ ($j=1,2,\ldots$) such that
\begin{align}
\label{06_461}
Y_{\omega_1}=\bigcup_{j=1}^{n_{\omega_1}}\{y_{\omega_1,j}\},
\quad
y_{\omega_1,j}\not=y_{\omega_1,j'} \quad (j\not=j')
\end{align}
almost surely,
where $Y_{\omega_1}$ is the empty set if $n_{\omega_1}=0$.
It is known that every point process on $\mathbf{R}^d$ satisfying (I1) and (I2) is
proper (cf.\ \cite[Proposition 6.3]{La-Pe}).
Moreover, if the process 
$Y_{\omega}$ 
satisfies (I4) and is non-trivial 
(i.e., $\mathbf{E}[\#(Y_{\omega_1}\cap Q_1)]>0$),
then $n_{\omega_1}=\infty$ almost surely.
We use the expression (\ref{06_461}) 
with $n_{\omega_1}=\infty$ in the sequel.

\begin{enumerate}
 \item[(I7)] The single site measure $\nu$ 
is a probability measure on the measure space $(\mathbf{R},\mathcal{B}(R))$.
The probability space $\Omega_2$ is given by the infinite product space
\begin{align*}
 \Omega_2:=\prod_{j\in \mathbf{Z}^d} \Omega_{2,j},\quad
\Omega_{2,j}:=(\mathbf{R},\mathcal{B}(\mathbf{R}),\nu).
\end{align*}
\end{enumerate}

We put $\Omega=\Omega_1\times \Omega_2$.
Notice that an element $\omega_2$ of $\Omega_2$ 
is written as
\begin{align*}
 \omega_2=(\omega_{2,j})_{j=1}^\infty,\quad \omega_{2,j}\in \mathbf{R}.
\end{align*}
For $\omega=(\omega_1,\omega_2)\in \Omega$,
we put 
\begin{align*}
Y_\omega:=Y_{\omega_1},\ 
\alpha_\omega:=\{\omega_{2,j}\}.
\end{align*}
This is the meaning of the independence of 
the set $Y_{\omega}$ and the values $\alpha_{\omega}$,
stated in the introduction.

Now we can prove the existence of IDS as follows.
\begin{theorem}
\label{theorem_existence_IDS}
Let $d=1,2$ or $3$.
Let $(Y_\omega,\alpha_\omega)$ satisfy
assumptions (I1)-(I7) above.
For $L\in \mathbf{N}$,
put $Q_L=(0,L)^d$.
Then,
for any $\lambda\in \mathbf{R}$,
the following equalities hold almost surely.
\begin{align}
&\lim_{L\to\infty}\frac{N^N_{\alpha_\omega,Y_\omega,Q_L}(\lambda)}{|Q_L|}
=
\lim_{L\to\infty}\frac{\mathbf{E}[N^N_{\alpha_\omega,Y_\omega,Q_L}(\lambda)]}{|Q_L|},
\label{06_47}\\
&\lim_{L\to\infty}\frac{N^D_{\alpha_\omega,Y_\omega,Q_L}(\lambda)}{|Q_L|}
=
\lim_{L\to\infty} \frac{\mathbf{E}[N^D_{\alpha_\omega,Y_\omega,Q_L}(\lambda)]}{|Q_L|}.
\label{06_48}
\end{align}
We denote the right hand side of of (\ref{06_47}) (resp.\ (\ref{06_48}))
by $N^N(\lambda)$ (resp.\ $N^D(\lambda)$), respectively.
Then, we have for every $\lambda\in \mathbf{R}$
\begin{align}
\label{06_49}
 N^D(\lambda) \leq N^N(\lambda).
\end{align}
\end{theorem}
\noindent{\bf Remark.} 
1. The right hand sides of (\ref{06_47}) and (\ref{06_48}) are independent of $\omega$,
so the left hand sides are also independent of $\omega$,
almost surely.
Moreover, 
$\lim_{L\to\infty}$ in the right hand side of (\ref{06_47}) (resp.\  (\ref{06_48}))
can be replaced by $\inf_L$ (resp.\ $\sup_L$),
by the subadditivity (resp.\ superadditivity)
in the sense of the statement (2) of Proposition \ref{proposition_DNbracketing}.

2. By (I7), the set $\Omega_0=\{\omega\in \Omega\,;\, (\partial Q_1+n)\cap Y_\omega=\emptyset$ for every $n\in \mathbf{Z}^d\}$ has probability $1$.
In the sequel, we replace $\Omega$ by $\Omega_0$,
so we always assume that $(\partial Q_1+n)\cap Y_\omega=\emptyset$
for every $n\in \mathbf{Z}^d$.
So $-\Delta^\sharp_{\alpha_\omega,Y_\omega,Q_L}$ is well-defined.

\begin{proof}
Let $m_1,m_2\in \mathbf{Z}^d$ such that $m_{1,j}<m_{2,j}$ for any $j=1,...,d$,
and let $R_{m_1,m_2}=\prod_{j=1}^d(m_{1,j},m_{2,j})$ be an open rectangle.
Let us check the measurability of 
the $j$-th eigenvalue
$\lambda_j(-\Delta^\sharp_{\alpha_\omega,Y_\omega,R_{m_1,m_2}})$
of $-\Delta^\sharp_{\alpha_\omega,Y_\omega,R_{m_1,m_2}}$ 
($\sharp=D,N$) 
with respect to $\omega$.
By Proposition \ref{proposition_min-max},
\begin{align*}
&\  \lambda_j(-\Delta^\sharp_{\alpha_\omega,Y_\omega,R_{m_1,m_2}})\\
=&\ \sup_{v_1,\ldots,v_{j-1}\in S}
\left(
\inf_{u\in Q(q^\sharp_{\alpha_\omega,Y_\omega,R_{m_1,m_2}})\cap 
\langle v_1,\ldots,v_{j-1}\rangle^\bot
,\,
\|u\|_{L^2(R_{m_1,m_2})}=1
}q^\sharp_{\alpha_\omega,Y_\omega,R_{m_1,m_2}}[u]
\right),
\end{align*}
where $S$ is a countable dense subset of $L^2(R_{m_1,m_2})$.
If we fix $k=\#(Y_\omega\cap R_{m_1,m_2})$,
$d_0$, $u_0$, and $c_y$ in Lemma \ref{lemma_DNdef},
then the quadratic form $q^\sharp_{\alpha,Y,R_{m_1,m_2}}[u]$ 
is continuous with respect to $y\in Y\cap R_{m_1,m_2}$ and  $\alpha_y$.
Then, the function 
$q^\sharp_{\alpha_\omega,Y_\omega,R_{m_1,m_2}}[u]$ 
is measurable with respect to $\omega\in \Omega$,
and we can prove the measurability of $\lambda_j(-\Delta^\sharp_{\alpha_\omega,Y_\omega,R_{m_1,m_2}})$
by replacing the range of the infimum by some countable dense subset.
Thus the function $N^\sharp_{\alpha_\omega,Y_\omega,R_{m_1,m_2}}(\lambda)$
is also measurable.

By the assumption (I4) and (\ref{06_461}),
there exists a number $j_{\omega_1,n,j}$ such that
\begin{align*}
 y_{\omega_1,j_{\omega_1,n,j}}+n
=
y_{T_n\omega_1,j}.
\end{align*}
Then the map from $\Omega$ to $\Omega$ defined by
\begin{align*}
&
\Omega=\Omega_1\times \Omega_2
\ni \omega=(\omega_1,\omega_2)
\mapsto
\widetilde{\omega}=(T_n \omega_1,S_n \omega_2)\in \Omega,\\
&S_n\omega_2 :=(\omega_{2,j_{\omega_1,n,j}})_{j=1}^\infty,
\end{align*}
is a measure preserving map, and satisfies
\begin{align*}
 N^\sharp_{\alpha_{\widetilde{\omega}},Y_{\widetilde{\omega}},R_{m_1+n,m_2+n}}
(\lambda)
=
 N^\sharp_{\alpha_{\omega},Y_{\omega},R_{m_1,m_2} }(\lambda)
\end{align*}
almost surely, for every $\lambda \in \mathbf{R}$
and $m_1,m_2\in \mathbf{Z}^d$ with $m_{1,j}<m_{2,j}$ ($j=1,\ldots,d$).
Moreover, 
the family
$\{N^N_{\alpha_\omega,Y_\omega,R_{m_1,m_2}}(\lambda)\}_{m_1,m_2}$ 
(resp.\ $\{N^D_{\alpha_\omega,Y_\omega,R_{m_1,m_2}}(\lambda)\}_{m_1,m_2}$ 
) satisfies the subadditivity (resp.\ superadditivity),
in the sense of the statement (2) of Proposition \ref{proposition_DNbracketing}.
Further, by the statement (3) of
Proposition \ref{proposition_DNbracketing},
we have
\begin{align}
\label{06_50}
 N^\sharp_{\alpha_\omega,Y_\omega,R_{m_1,m_2}}(\lambda)
\leq  N^\sharp_{R_{m_1,m_2}}(\lambda)+\#(Y_\omega\cap R_{m_1,m_2}).
\end{align}
By Proposition \ref{proposition_DNbracketing},
we have
\begin{align}
\label{06_51}
 N^D_{R_{m_1,m_2}}(\lambda)
\leq
 N^N_{R_{m_1,m_2}}(\lambda)
\leq |R_{m_1,m_2}|
 N^N_{Q_1}(\lambda).
\end{align}
By assumptions (I3) and (I4), we have
\begin{align}
\label{06_52}
 \mathbf{E}[\#(Y_\omega\cap R_{m_1,m_2})]
\leq |R_{m_1,m_2}|\mu(Q_1).
\end{align}
By the statement (1) of Proposition \ref{proposition_DNbracketing},
(\ref{06_50}), (\ref{06_51}), and (\ref{06_52}), we have a bound
\begin{align*}
 0\leq&\  
\sup_{m_1,m_2} \frac{\mathbf{E}[N^D_{\alpha_\omega,Y_\omega,R_{m_1,m_2}}(\lambda)]}{|R_{m_1,m_2}|}
\leq 
\inf_{m_1,m_2} \frac{\mathbf{E}[N^N_{\alpha_\omega,Y_\omega,R_{m_1,m_2}}(\lambda)]}{|R_{m_1,m_2}|}\\
\leq&\ 
 N^N_{Q_1}(\lambda)
+
\mu(Q_1).
\end{align*}
Thus the almost sure existence of the limits in the left hand sides
of (\ref{06_47}) and (\ref{06_48})
are guaranteed by
the superadditive ergodic theorem
(cf.\ \cite{Ak-Kr} or \cite{Ca-La}).
Then the equalities 
(\ref{06_47}) and (\ref{06_48}) follow from 
the ergodicity assumption (I5),
and (\ref{06_49}) follows from 
the statement (1) of Proposition \ref{proposition_DNbracketing}.
\end{proof}

\subsection{Independence of IDS from the boundary conditions}
In this subsection,
we prove that IDS $N^\sharp(\lambda)$
is independent of the boundary conditions $\sharp=D,N$,
except at most countable $\lambda$ on $\mathbf{R}$.
The following proof is based on \cite{Do-Iw-Mi},
in which the same statement
for the magnetic Schr\"odinger operators 
(without point interactions) is proved.

\begin{theorem}
 \label{theorem_independence_IDS}
Let $d=1,2$ or $3$.
Suppose that the assumptions (I1)-(I7) hold.
Let $N^\sharp(\lambda)$ ($\sharp=D,N$) be the functions defined in 
Theorem \ref{theorem_existence_IDS}.
Then, for any $\lambda\in \mathbf{R}$ and any $\epsilon>0$,
we have
\begin{align}
 \label{06_53}
N^D(\lambda)\leq N^N(\lambda) \leq N^D(\lambda+\epsilon).
\end{align}
In particular, 
we have for every $\lambda\in \mathbf{R}$
\begin{align}
\label{06_54}
N^D(\lambda+0)= N^N(\lambda+0),
\end{align}
where $f(\lambda+0):=\lim_{\epsilon\to+0}f(\lambda +\epsilon)$.
Moreover, 
if either $N^D(\lambda)$ or $N^N(\lambda)$ is continuous at $\lambda$,
then 
\begin{align}
\label{06_55}
N^D(\lambda)= N^N(\lambda)=N^D(\lambda+0)=N^N(\lambda+0).
\end{align}
\end{theorem}
\noindent\textbf{Remark.} 
By definition, the function $N^\sharp(\lambda)$ 
is monotone non-decreasing on $\mathbf{R}$.
The set of points of discontinuity is at most countable,
and (\ref{06_55}) holds except this at most countable set of points 
of discontinuity.

\begin{proof}
First we assume that (\ref{06_53}) holds, and prove
that (\ref{06_54}) and (\ref{06_55}) hold.
Notice that 
the limit from the right
$N^\sharp(\lambda+0)$ exists for every $\lambda\in \mathbf{R}$,
since the function $N^\sharp(\lambda)$ is monotone non-decreasing.
By (\ref{06_53}), we have 
\begin{align}
\label{06_56}
  N^D(\lambda+\epsilon)\leq N^N(\lambda+\epsilon)\leq N^D(\lambda+2\epsilon)
\end{align}
for every $\lambda\in \mathbf{R}$ and $\epsilon>0$.
Taking the limit $\epsilon\to +0$ in (\ref{06_56}), we have
(\ref{06_54}).
Moreover, we have by (\ref{06_53}) 
\begin{align}
\label{06_57}
 N^N(\lambda-\epsilon)\leq N^D(\lambda)\leq N^N(\lambda)\leq N^D(\lambda+\epsilon)
\end{align}
for every $\lambda\in\mathbf{R}$ and $\epsilon>0$.
If either $N^D(\lambda)$ or $N^N(\lambda)$ is continuous at $\lambda$,
then (\ref{06_57}) and (\ref{06_54}) imply (\ref{06_55}).

Next we shall prove (\ref{06_53}).
We already know that the first inequality
in (\ref{06_53}) holds, by (\ref{06_49}).
We shall prove the second inequality in (\ref{06_53}).

For a positive integer $L$,
we put $\ell=[\sqrt{L}]$,
where $[x]$ is the maximal integer which is less than or equal to $x$.
 Let $Q_L=(0,L)^d$, $Q_{L,1}=(\ell,L-\ell)^d$, and 
$Q_{L,2}=(2\ell,L-2\ell)^d$.
Put
 $Y_{\omega,2}=Y_\omega\cap Q_{L,2}$
and
$\alpha_{\omega,2}=\alpha_\omega|_{Y_{\omega,2}}$.
By the statement (3) of Proposition \ref{proposition_DNbracketing},
we have
\begin{align*}
 N^N_{\alpha_\omega,Y_\omega,Q_L}(\lambda)
\leq 
 N^N_{\alpha_{\omega,2},Y_{\omega,2},Q_L}(\lambda)
+
\#(Y_\omega\cap(Q_L\setminus Q_{L,2})).
\end{align*}
Taking expectation, we have by (I3), (I4) and (I6)
\begin{align}
\mathbf{E}[ N^N_{\alpha_\omega,Y_\omega,Q_L}(\lambda)]
\leq &\ 
\mathbf{E}[ N^N_{\alpha_{\omega,2},Y_{\omega,2},Q_L}(\lambda)]
+
\mathbf{E}[\#(Y_\omega\cap(Q_L\setminus Q_{L,2}))]\notag\\
\leq&\ 
\mathbf{E}[ N^N_{\alpha_{\omega,2},Y_{\omega,2},Q_L}(\lambda)]
+|Q_L\setminus Q_{L,2}|\mu(Q_1).
\label{06_58}
\end{align}

We can construct two functions $\chi_1,\chi_2\in C^\infty(Q_L)$
satisfying the following conditions
(cf.\ \cite[Proposition 8.1]{Do-Iw-Mi}):
\begin{align}
& \chi_1^2+\chi_2^2=1,
\label{06_59}
\\
&
\begin{cases}
\chi_1(x) =0 & (x\in Q_{L,1}^c) ,\\
0\leq \chi_1(x)\leq 1 & (x\in Q_{L,1}\setminus Q_{L,2}),\\
\chi_1(x)=1 & (x\in Q_{L,2}),
\end{cases}
\begin{cases}
\chi_2 (x)=1 & (x\in Q_{L,1}^c),\\
0\leq \chi_2(x)\leq 1 & (x\in Q_{L,1}\setminus Q_{L,2}),\\
\chi_2(x) =0 & (x\in Q_{L,2}),
\end{cases}
\notag
\\
&
\|\nabla \chi_1\|_\infty^2
+\|\nabla \chi_2\|_\infty^2
\leq \frac{C}{\ell^2},
\label{06_60}
\end{align}
where $C$ is a positive constant independent of $L$.
By a simple calculation (cf.\ \cite[Theorem 3.2]{Cy-Fr-Ki-Si}
or \cite[(6.14)]{Do-Iw-Mi},
we have
\begin{align*}
q^N_{\alpha_{\omega,2},Y_{\omega,2},Q_L}[u]
=&\ 
q^D_{\alpha_{\omega,2},Y_{\omega,2},Q_L}[\chi_1 u]
-\|(\nabla \chi_1)u\|_{L^2(Q_L)}^2\notag\\
&\ 
+
q^N_{Q_L\setminus \overline{Q_{L,2}}}[\chi_2 u]
-\|(\nabla \chi_2)u\|_{L^2(Q_L\setminus \overline{Q_{L,2}})}^2,
\end{align*}
for every $u\in Q(q^N_{\alpha_{\omega,2},Y_{\omega,2},Q_L})$.
Then, (\ref{06_59}) and (\ref{06_60}) imply
\begin{align}
 q^N_{\alpha_{\omega,2},Y_{\omega,2},Q_L}[u]
\geq&\  (
q^D_{\alpha_{\omega,2},Y_{\omega,2},Q_L}
-C\ell^{-2})[\chi_1 u]\notag\\
&+
(q^N_{Q_L\setminus \overline{Q_{L,2}}}
-C\ell^{-2})[\chi_2 u].
\label{06_61}
\end{align}
The inequality (\ref{06_61}) and 
the statement (1) of Proposition \ref{proposition_comparing} with
\begin{align*}
 T:L^2(Q_L)\ni u\mapsto
\chi_1 u\oplus \chi_2 u
\in L^2(Q_L)\oplus
L^2(Q_L\setminus \overline{Q_{L,2}})
\end{align*}
imply
\begin{align}
&\  N^N_{\alpha_{\omega,2},Y_{\omega,2},Q_L}(\lambda)\notag\\
\leq &\ 
 N^D_{\alpha_{\omega,2},Y_{\omega,2},Q_L}(\lambda+C\ell^{-2})
+
 N^N_{Q_L\setminus\overline{Q_{L,2}}}(\lambda+C\ell^{-2})
\notag\\
\leq &\ 
 N^D_{\alpha_{\omega,2},Y_{\omega,2},Q_L}(\lambda+C\ell^{-2})
+
|Q_L\setminus\overline{Q_{L,2}}|
 N^N_{Q_1}(\lambda+C\ell^{-2}).
\label{06_62}
\end{align}
For any $\epsilon>0$, take $L$ so large that 
$C\ell^{-2}=C[\sqrt{L}]^{-2}<\epsilon$.
Then,
(\ref{06_58}) and (\ref{06_62}) imply
\begin{align*}
& \mathbf{E}[ N^N_{\alpha_\omega,Y_\omega,Q_L}(\lambda)]\\
\leq&\ 
\mathbf{E}[ N^D_{\alpha_\omega,Y_\omega,Q_L}(\lambda+\epsilon)]
+
|Q_L\setminus\overline{Q_{L,2}}|(\mu(Q_1)+ N^N_{Q_1}(\lambda+\epsilon)).
\end{align*}
Dividing the both sides by $|Q_L|$ and taking the limit $L\to \infty$,
we have (\ref{06_53}).
\end{proof}

\section{Appendix}
\subsection{Poisson point process}
\label{section_pp}

The definition of the Poisson point process is as follows (see e.g., \cite{Re}).
\begin{definition}
\label{definition_poisson}
Let $S$ be a Borel measurable set in $\mathbf{R}^d$ ($d\geq 1$).
Let $\mu_\omega$ be a random measure on $S$
dependent on $\omega\in \Omega$ for some probability space $\Omega$.
For a positive constant $\rho$,
we say $\mu_\omega$ is the Poisson point process on $S$
with intensity measure $\rho dx$ 
if the following conditions hold.
\begin{enumerate}
 \item For every Borel measurable set $E$ in $S$ 
with the Lebesgue measure $|E|<\infty$, 
$\mu_\omega(E)$ is an integer-valued random variable on $\Omega$ and
\begin{align*}
 \mathbf{P}(\mu_\omega(E)=k)=\frac{(\rho|E|)^k}{k!}e^{-\rho|E|}
\end{align*}
for every non-negative integer $k$.

 \item 
For any disjoint Borel measurable sets $E_1,\ldots,E_n$ in $S$
with finite Lebesgue measure,
the random variables $\{\mu_\omega(E_j)\}_{j=1}^n$
are independent.
\end{enumerate}
We call the support $Y_\omega$
of the Poisson point process measure $\mu_\omega$
the \textit{Poisson configuration on $S$}.
\end{definition}

We review a construction of the Poisson configuration on $S$
when $|S|$ is finite
(for the proof, see \cite[Theorem 1.2.1]{Re}).

\begin{proposition}
\label{prop_poisson_construction}
Let $S$ be a Borel measurable set in $\mathbf{R}^d$ ($d \geq 1$)
with Lebesgue measure $|S|<\infty$.
Let $\rho>0$ be a constant.
Let $n_\omega$ be a random variable 
obeying the Poisson distribution with parameter $\rho|S|$.
Let $\{y_{j,\omega}\}_{j=1}^\infty$ be independent $\mathbf{R}^d$-valued 
random variables obeying the uniform distribution on $S$,
which are independent of $n_\omega$.
Define the random set $Y_\omega$ by
\begin{align*}
 Y_\omega = 
\begin{cases}
\{y_{j,\omega}\}_{j=1}^{n_\omega} & (n_\omega\geq 1), \\
\emptyset & (n_\omega=0).
\end{cases}
\end{align*}
Then, $Y_\omega$ is a Poisson configuration on $S$ with intensity
$\rho dx$.
\end{proposition}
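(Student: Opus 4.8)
The plan is to produce the candidate Poisson point process measure $\mu_\omega$ explicitly from the data $(n_\omega,\{y_{j,\omega}\}_j)$, to check that its support is $Y_\omega$, and then to verify the two defining conditions of Definition \ref{definition_poisson} by the classical Poissonization (Poisson thinning) computation. Condition (i) will fall out as the special case $n=1$ of condition (ii), so only one computation is really needed.

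First I would set $\mu_\omega:=\sum_{j=1}^{n_\omega}\delta_{y_{j,\omega}}$, the null measure when $n_\omega=0$. Then for any Borel set $E\subset S$ we have $\mu_\omega(E)=\sum_{j=1}^{n_\omega}\mathbf{1}_E(y_{j,\omega})$, which is a $\{0,1,2,\ldots\}$-valued random variable; moreover $\supp\mu_\omega$ is the finite (hence closed) set $\{y_{1,\omega},\ldots,y_{n_\omega,\omega}\}=Y_\omega$, so $Y_\omega$ is indeed the support of a random measure and it remains to check (i) and (ii) for $\mu_\omega$. As a side remark, since $|S|<\infty$ the uniform law on $S$ is non-atomic, so almost surely the $y_{j,\omega}$ are pairwise distinct and $\mu_\omega(E)=\#(Y_\omega\cap E)$ almost surely, which is the form in which the measure is used elsewhere in the paper.

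The core step is as follows. Let $E_1,\ldots,E_n\subset S$ be disjoint Borel sets, put $E_0:=S\setminus\bigcup_{j=1}^nE_j$, and set $p_j:=|E_j|/|S|$ for $j=0,\ldots,n$, so $\sum_{j=0}^np_j=1$. Conditioned on $\{n_\omega=m\}$ the points $y_{1,\omega},\ldots,y_{m,\omega}$ are i.i.d.\ uniform on $S$, hence $(\mu_\omega(E_0),\mu_\omega(E_1),\ldots,\mu_\omega(E_n))$ is multinomial with parameters $m$ and $(p_0,p_1,\ldots,p_n)$. Using that $n_\omega$ is independent of $(y_{j,\omega})_j$ and Poisson with parameter $\rho|S|$, and reindexing the sum over $m$ by $m=k_0+k_1+\cdots+k_n$ (with $k_0$ the unconstrained value of $\mu_\omega(E_0)$), I would compute, for non-negative integers $k_1,\ldots,k_n$,
\begin{align*}
\mathbf{P}(\mu_\omega(E_1)=k_1,\ldots,\mu_\omega(E_n)=k_n)
&=\sum_{k_0=0}^\infty\mathbf{P}(n_\omega=k_0+\cdots+k_n)\,
\frac{(k_0+\cdots+k_n)!}{k_0!\,k_1!\cdots k_n!}\,p_0^{k_0}\cdots p_n^{k_n}\\
&=e^{-\rho|S|}\Bigl(\prod_{j=1}^n\frac{(\rho|S|p_j)^{k_j}}{k_j!}\Bigr)
\sum_{k_0=0}^\infty\frac{(\rho|S|p_0)^{k_0}}{k_0!}
=\prod_{j=1}^n\frac{(\rho|E_j|)^{k_j}}{k_j!}e^{-\rho|E_j|},
\end{align*}
where in the last equality I used $|S|p_j=|E_j|$ together with $\rho|S|p_0=\rho|S|-\rho\sum_{j=1}^n|E_j|$ to cancel the factor $e^{\rho|S|p_0}$ against $e^{-\rho|S|}$. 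Specializing to $n=1$ (with $E=E_1$) gives property (i), and the displayed product form for arbitrary $n$ gives property (ii), namely independence together with the correct Poisson marginals.

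I do not expect a genuine obstacle here: the only points needing care are routine bookkeeping ones — justifying the interchange of the sum over $k_0$ with the Poisson series (Tonelli, everything being non-negative), the reindexing $m=k_0+\cdots+k_n$ that turns the conditioning on $\{n_\omega=m\}$ into a single sum over the free coordinate $k_0$, and the non-atomicity remark that lets one identify $\mu_\omega(E)$ with the set cardinality $\#(Y_\omega\cap E)$. The mathematical content is entirely the elementary Poisson-thinning identity displayed above.
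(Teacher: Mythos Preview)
Your argument is correct and is exactly the standard Poissonization (multinomial thinning) computation one finds in textbooks. The paper itself does not give a proof of this proposition at all; it simply refers the reader to \cite[Theorem 1.2.1]{Re}, so there is nothing to compare beyond noting that your self-contained derivation supplies what the paper leaves to the reference.
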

The Poisson configuration on $\mathbf{R}^d$ can be constructed
by dividing $\mathbf{R}^d$ into disjoint cubes $Q_j$,
constructing independent Poisson configurations $Y_j$  on $Q_j$,
and taking the union of $Y_j$.

\subsection{Palm distribution}
\label{subsection_palm}
We quote some basic definitions
about the Palm distribution of a random measure,
from 
Daley--Vere-Jones \cite{Da-Ve2}.
We say a Borel measure $\xi$ on $\mathbf{R}^d$ is 
\textit{boundedly finite}
if $\xi(U)<\infty$ for any bounded Borel set $U$ in $\mathbf{R}^d$.
Let $M$
be the space of boundedly finite measures $\xi$ on $\mathbf{R}^d$,
equipped with the topology of weak convergence
and the corresponding Borel $\sigma$-algebra.
Let $\xi$ be a boundedly finite \textit{random measure} on $\mathbf{R}^d$,
that is,
$\Omega\ni \omega\mapsto \xi_\omega \in M$
is a measurable map
from some probability space $\Omega$
to $M$.
We assume that the \textit{first moment measure}
(or the \textit{intensity measure})
\begin{align*}
 \mu(U):=\mathbf{E}[\xi_\omega(U)]
\end{align*}
is also boundedly finite,
where $\mathbf{E}[f(\omega)]=\int_\Omega f(\omega) dP(\omega)$
is the expectation of a measurable function $f$
with respect to the probability measure $P$ on $\Omega$.
Then, it is known that
there exists
a family of probability measures $\{P_x\}_{x\in \mathbf{R}^d}$
on $\Omega$ such that 
\begin{align*}
\mathbf{E}\left[
\int_{\mathbf{R}^d}g(x,\xi_\omega)d\xi_\omega(x)
\right]
=
\int_{\mathbf{R}^d} 
\mathbf{E}_x
\left[
g(x,\xi_\omega)
\right]
d\mu(x)
\end{align*}
for any non-negative measurable
function $g(x,\xi)$
on $\mathbf{R}^d\times M$,
where
\begin{align*}
 \mathbf{E}_x[f(\omega)]=\int_\Omega f(\omega)dP_x(\omega)
\end{align*}
(cf.\ \cite[Proposition 13.1.IV]{Da-Ve2}).
The measure
$P_x$ is called a 
\textit{local Palm distribution for $\xi$},
and the family 
$\{P_x\}_{x\in \mathbf{R}^d}$
is called the \textit{Palm kernel associated with $\xi$}.
The local Palm distribution $P_x$ is,
roughly speaking,
the conditional distribution of 
the random measure $\xi$ 
under the condition $x\in \supp \xi$.

In the case that $\xi$ is 
the homogeneous Poisson point process on $\mathbf{R}^d$
with intensity measure $\rho dx$,
it is known that
$P_x$ and the distribution of $\delta_x + \xi$ on $\Omega$ 
are equal (cf. \cite[Proposition 13.1.VII]{Da-Ve2}),
where $\delta_x$ is the Dirac measure supported on $x$.
Thus we have the following corollary.
\begin{corollary}
 \label{corollaly_palm}
Let $\xi$ be the Poisson point process on $\mathbf{R}^d$
with intensity measure $\rho dx$, where $\rho$ is a positive constant.
Then, for any non-negative measurable function $g(x,\xi)$ 
on $\mathbf{R}^d\times M$,
we have
\begin{align}
\label{07_01}
 \mathbf{E}\left[\sum_{x\in \supp \xi_\omega}g(x,\xi_\omega)\right]
=
\int_{\mathbf{R}^d}
\mathbf{E}\left[g(x,\delta_x+\xi_\omega)\right]
\rho dx.
\end{align}
\end{corollary}

By using (\ref{07_01}),
we give another proof of Proposition \ref{prop_cluster}.
For $n=1,2,3,\ldots$, put
\begin{align*}
 g(x,\xi_\omega)
=
\begin{cases}
 1 & (x\in Q_L,\ n_x(R)=n),\\
 0 & (\mbox{otherwise}),
\end{cases}
\end{align*}
where $n_x(R)=\xi_\omega(B_x(R))=\#(\supp\xi_\omega \cap B_x(R))$, 
just as (\ref{03_01}) with $Y_\omega=\supp \xi_\omega$.
Then, the left hand side of (\ref{07_01}) divided by $|Q_L|$ 
coincides with the left hand side of (\ref{03_02}).
Moreover, 
since $(\delta_x + \xi_\omega)(B_x(R))= 1 + \xi_\omega(B_x(R))$,
the right hand side of (\ref{07_01}) is calculated as follows.
\begin{align*}
 \int_{Q_L}
\mathbf{P}(\xi_\omega(B_x(R))=n-1)\rho dx
=
\rho |Q_L|\frac{(\rho|B_0(R)|)^{n-1}}{(n-1)!}e^{-\rho |B_0(R)|}.
\end{align*}
Dividing the last expression by $|Q_L|$,
we have the right hand side of (\ref{03_02}).

\subsection{Basic formulas for the point interactions}
We review some basic formulas for the spectrum of the
Hamiltonian with point interactions on $\mathbf{R}^3$.

First we introduce an auxiliary matrix.
For a finite set $Y=(y_j)_{j=1}^n$ ($n=\#Y$) in $\mathbf{R}^3$,
a sequence of real numbers $\alpha=(\alpha_j)_{j=1}^n$
(we use the abbreviation $\alpha_{y_j}=\alpha_j$),
and $s>0$,
define an $n\times n$ matrix $A=A(s)=(a_{jk}(s))$ by
\begin{align}
\label{07_02}
 a_{jk}(s)
=
\begin{cases}
\displaystyle
\alpha_j + \frac{s}{4\pi} & (j=k),\\[2mm]
\displaystyle
-\frac{e^{-s|y_j-y_k|}}{4\pi |y_j-y_k|} & (j\not=k).
\end{cases}
\end{align}
When we need to specify $\alpha$ and $Y$,
we write $A(s)=A_{\alpha,Y}(s)$.
We also recall the definition of the eigenvalue counting function
\begin{align*}
  N_{\alpha,Y}(\lambda):=&\ 
\#\{\mu \leq \lambda \,;\,
\mu \mbox{ is an eigenvalue of }- \Delta_{\alpha,Y} \}
\end{align*}
for $\lambda<0$,
where eigenvalues are counted with multiplicity.

\begin{proposition}
\label{prop_al}
Let $d=3$.
Let 
$Y=(y_j)_{j=1}^n$ ($n=\#Y$) be a finite set,
and $\alpha=(\alpha_j)_{j=1}^n$ be a sequence of real numbers.
Let $A(s)$ be the matrix given by (\ref{07_02}).
Let $\mu_j(s)$ be the $j$-th smallest eigenvalue of $A(s)$,
counted with multiplicity.

\begin{enumerate}
 \item[(1)]
For $s>0$,
$\lambda =-s^2$ is an eigenvalue of $-\Delta_{\alpha,Y}$ if 
and only if $\det A(s)=0$.
Moreover, the multiplicity of the eigenvalue $-s^2$ is
equal to the dimension of $\Ker A(s)$.

 \item[(2)]
The function $\mu_j(s)$ is continuous on $(0,\infty)$,
and real-analytic on $(0,\infty)\setminus S$,
where the set $S$ does not have an accumulation point in 
$(0,\infty)$.
Moreover,
$\mu_j(s)$ is strictly monotone increasing with respect to $s$,
and 
$\displaystyle \lim_{s\to \infty} \mu_j(s)=\infty$.

 \item[(3)]
We have $0\leq N_{\alpha,Y}(-s^2)\leq n$ for every $s>0$.
For $m=0,\ldots, n$,
the condition $N_{\alpha,Y}(-s^2)=m$ 
is equivalent to the condition
\begin{align*}
 \mu_1(s) \leq \cdots \leq \mu_m(s) \leq 0 
< \mu_{m+1}(s)\leq \cdots \leq \mu_n(s).
\end{align*}
\end{enumerate}
\end{proposition}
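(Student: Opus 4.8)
The plan is to reduce Proposition \ref{prop_al} to the Krein-type resolvent formula for $-\Delta_{\alpha,Y}$ together with a few elementary facts about the real symmetric matrix $A(s)$. For part (1) I would quote the resolvent formula for finitely many point interactions in $\mathbf{R}^3$ from \cite{Al-Ge-Ho-Ho}: for $z=-s^2$ with $s>0$ the free Green function is $G_{-s^2}(x)=e^{-s|x|}/(4\pi|x|)$, the matrix there denoted $\Gamma_{\alpha,Y}$ evaluated at the relevant point is exactly $A(s)$, and $(-\Delta_{\alpha,Y}+s^2)^{-1}$ equals the free resolvent plus a term built from $A(s)^{-1}$ sandwiched between the vectors $(G_{-s^2}(\,\cdot\,-y_j))_j$; hence $-s^2$ is an eigenvalue iff $\det A(s)=0$, and a residue computation identifies the eigenspace with $\{\sum_j v_j G_{-s^2}(\,\cdot\,-y_j):v\in\Ker A(s)\}$. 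Alternatively, for self-containedness one checks directly that $u_v:=\sum_j v_jG_{-s^2}(\,\cdot\,-y_j)\in L^2(\mathbf{R}^3)$ solves $(-\Delta+s^2)u_v=0$ off $Y$ and satisfies, near $y_j$, $u_v(x)=\frac{v_j}{4\pi}|x-y_j|^{-1}+\bigl(-\frac{sv_j}{4\pi}+\sum_{k\ne j}\frac{v_ke^{-s|y_j-y_k|}}{4\pi|y_j-y_k|}\bigr)+o(1)$, so that $(BC)_{y_j}$ for every $j$ is equivalent to $A(s)v=0$; the map $v\mapsto u_v$ is injective on $\Ker A(s)$ since distinct singularities cannot cancel, and every $L^2$-eigenfunction has this form by elliptic regularity, giving the multiplicity statement.

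For part (2), the entries of $A(s)$ are real-analytic on $(0,\infty)$, so continuity of $\mu_j$ and analyticity of the ordered eigenvalues off a set $S$ of crossings follow from analytic perturbation theory (Rellich/Kato); $S$ has no accumulation point in $(0,\infty)$ because $\det A(s)\not\equiv0$. Indeed $A(s)=\frac{s}{4\pi}I+\mathrm{diag}(\alpha_j)+E(s)$ with $\|E(s)\|\le\max_j\sum_{k\ne j}\frac{e^{-s|y_j-y_k|}}{4\pi|y_j-y_k|}\to0$, which also gives $\mu_j(s)\ge\frac{s}{4\pi}+\min_j\alpha_j-\|E(s)\|\to\infty$. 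For strict monotonicity I differentiate: $4\pi\,dA/ds=\bigl(e^{-s|y_j-y_k|}\bigr)_{j,k=1}^n$ (the diagonal being $e^0=1$). Using the Yukawa transform and differentiating in $s$, $e^{-s|x|}=\frac{s}{\pi^2}\int_{\mathbf{R}^3}\frac{e^{i\xi\cdot x}}{(|\xi|^2+s^2)^2}\,d\xi$, so for $v\in\mathbf{C}^n$ one has $\langle v,(4\pi\,dA/ds)v\rangle=\frac{s}{\pi^2}\int_{\mathbf{R}^3}\frac{|\sum_j v_je^{i\xi\cdot y_j}|^2}{(|\xi|^2+s^2)^2}\,d\xi\ge0$, with equality only if $\sum_j v_je^{i\xi\cdot y_j}\equiv0$, i.e.\ $v=0$ since $y_1,\dots,y_n$ are distinct. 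Hence $dA/ds\succ0$, so $A(s_2)-A(s_1)=\int_{s_1}^{s_2}(dA/ds)\,ds\succ0$ for $s_1<s_2$, and the min-max principle yields $\mu_j(s_1)<\mu_j(s_2)$.

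For part (3), by (1) the negative eigenvalues of $-\Delta_{\alpha,Y}$ are precisely the numbers $-s'^2$ with $\det A(s')=0$, the eigenvalue $-s'^2$ having multiplicity $\#\{j:\mu_j(s')=0\}$. Fixing $s>0$, we get $N_{\alpha,Y}(-s^2)=\sum_{s'\ge s,\ \det A(s')=0}\#\{j:\mu_j(s')=0\}$. For each fixed $j$, strict monotonicity and $\mu_j(s'')\to\infty$ imply $\mu_j$ has a (unique) zero in $[s,\infty)$ iff $\mu_j(s)\le0$; summing over $j$ gives $N_{\alpha,Y}(-s^2)=\#\{j:\mu_j(s)\le0\}$, which, the $\mu_j$ being ordered, is exactly the asserted equivalence, and $0\le N_{\alpha,Y}(-s^2)\le n$ is immediate.

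The only genuinely analytic ingredient is the strict positive-definiteness of the matrix $\bigl(e^{-s|y_j-y_k|}\bigr)$ needed for strict monotonicity of the $\mu_j$; everything else is either a citation to \cite{Al-Ge-Ho-Ho} (part (1)) or routine linear-algebra and perturbation-theoretic bookkeeping. A secondary point requiring mild care is that $S$ does not accumulate in $(0,\infty)$, which I handle via $\det A(s)\not\equiv0$, itself a consequence of the large-$s$ behavior $A(s)\approx\frac{s}{4\pi}I$.
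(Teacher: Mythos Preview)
Your proposal is correct and considerably more detailed than the paper's own proof, which for parts (1), (2), and the bound $0\le N_{\alpha,Y}(-s^2)\le n$ in (3) simply cites \cite[Theorem II-1.1.4]{Al-Ge-Ho-Ho}, and for the equivalence in (3) gives exactly the one-line monotonicity argument you give (``$\mu_j(s')=0$ has a unique solution with $s'\ge s$ iff $\mu_j(s)\le 0$''). Your explicit proof of strict monotonicity of the $\mu_j$ via the strict positive-definiteness of $4\pi\,dA/ds=(e^{-s|y_j-y_k|})_{j,k}$, established through the Fourier representation $e^{-s|x|}=\frac{s}{\pi^2}\int_{\mathbf{R}^3}(|\xi|^2+s^2)^{-2}e^{i\xi\cdot x}\,d\xi$, is a clean self-contained substitute for the citation; the paper does not spell any of this out. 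What your approach buys is independence from the reference for (2); what the paper's approach buys is brevity.

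One small slip to fix: you justify the discreteness of the crossing set $S$ by ``$\det A(s)\not\equiv0$,'' but $\det A(s)$ controls where $0$ is an eigenvalue of $A(s)$, not where two eigenvalues of $A(s)$ coincide. The relevant object is the discriminant of the characteristic polynomial of $A(s)$, or more simply one invokes Rellich's theorem directly: for a real-analytic one-parameter family of real symmetric matrices the eigenvalues admit real-analytic labelings, and two real-analytic functions on $(0,\infty)$ either agree identically or differ except on a discrete set, so the ordered eigenvalues are real-analytic off a discrete set in either case. Your large-$s$ asymptotics $A(s)\approx\frac{s}{4\pi}I+\mathrm{diag}(\alpha_j)$ is still useful---it gives $\mu_j(s)\to\infty$---but it does not by itself rule out persistent degeneracies (e.g.\ when all $\alpha_j$ coincide).
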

\noindent
\begin{proof}
For the proof of (1), (2), and the first statement 
of (3),
see \cite[Theorem II-1.1.4]{Al-Ge-Ho-Ho}.
The equivalence in (3) follows from the fact
that $\mu_j(s')=0$ has a unique solution with $s'\geq s$
if and only if $\mu_j(s)\leq 0$.
\end{proof}

Next we summarize properties of the counting function $N_{\alpha,Y}(\lambda)$
when we change the value $\alpha$ or the set $Y$.
\begin{proposition}
Let $d=3$.
\label{prop_al2} 
\begin{enumerate}
 \item[(1)]
Let $Y$ be a finite set in $\mathbf{R}^3$.
If two sequences 
$\alpha=(\alpha_y)_{y\in Y}$
and 
$\alpha'=(\alpha_y')_{y\in Y}$
satisfy
$\alpha_y \geq \alpha_y'$ for every $y\in Y$,
then
\begin{align*}
 N_{\alpha,Y}(\lambda)\leq N_{\alpha',Y}(\lambda)
\end{align*}
for every $\lambda<0$.

 \item[(2)]
Let $Y$ and $Y'$ be finite sets in $\mathbf{R}^3$ with $Y'\subset Y$.
Let $\alpha = (\alpha_y)_{y\in Y}$ 
be a real-valued sequence
 and $\alpha' = \alpha|_{Y'}$ 
(restriction of $\alpha$ on $Y'$).
Then,
\begin{align*}
 N_{\alpha', Y'}(\lambda) \leq  N_{\alpha, Y}(\lambda)
\leq N_{\alpha', Y'}(\lambda) + \#(Y\setminus Y')
\end{align*}
for every $\lambda<0 $.
\end{enumerate}

\end{proposition}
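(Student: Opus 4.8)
The plan is to reduce both statements to linear-algebra facts about the matrix $A(s)=A_{\alpha,Y}(s)$ defined in (\ref{matrixA}), using Proposition \ref{prop_al}. Recall from part (3) of that proposition that for every $s>0$ the quantity $N_{\alpha,Y}(-s^2)$ equals the number of non-positive eigenvalues of $A_{\alpha,Y}(s)$, counted with multiplicity; that is, $N_{\alpha,Y}(-s^2)=\#\{j\,;\,\mu_j(s)\le 0\}$ where $\mu_1(s)\le\cdots\le\mu_n(s)$ are the eigenvalues of $A_{\alpha,Y}(s)$. Since every $\lambda<0$ is of the form $-s^2$ with $s=\sqrt{-\lambda}>0$, it suffices to prove the asserted inequalities with $\lambda=-s^2$ for each fixed $s>0$.

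For (1), observe that $A_{\alpha,Y}(s)=A_{\alpha',Y}(s)+D$, where $D=\mathrm{diag}\bigl((\alpha_y-\alpha_y')_{y\in Y}\bigr)$ is positive semidefinite because $\alpha_y\ge\alpha_y'$ for all $y$. By the min-max principle the $j$-th eigenvalue is monotone non-decreasing under addition of a positive semidefinite matrix, so the eigenvalues $\mu_j^{\alpha}(s)$ of $A_{\alpha,Y}(s)$ satisfy $\mu_j^{\alpha}(s)\ge\mu_j^{\alpha'}(s)$ for every $j$. Hence the number of indices $j$ with $\mu_j^{\alpha}(s)\le 0$ is at most the number with $\mu_j^{\alpha'}(s)\le 0$, which is exactly $N_{\alpha,Y}(-s^2)\le N_{\alpha',Y}(-s^2)$.

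For (2), note that the off-diagonal entries of $A(s)$ depend only on the points and that the diagonal entry at $y\in Y'$ is $\alpha_y+s/(4\pi)=\alpha_y'+s/(4\pi)$, so $A_{\alpha',Y'}(s)$ is precisely the principal submatrix of $A_{\alpha,Y}(s)$ obtained by deleting the rows and columns indexed by $Y\setminus Y'$. Write $n=\#Y$, $m=\#Y'$, and $k:=n-m=\#(Y\setminus Y')$, and let $\mu_1'(s)\le\cdots\le\mu_m'(s)$ be the eigenvalues of $A_{\alpha',Y'}(s)$. By the Cauchy interlacing theorem, $\mu_j(s)\le\mu_j'(s)\le\mu_{j+k}(s)$ for $j=1,\dots,m$. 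From $\mu_j(s)\le\mu_j'(s)$ we get that $\mu_j'(s)\le 0$ forces $\mu_j(s)\le 0$, hence $N_{\alpha',Y'}(-s^2)\le N_{\alpha,Y}(-s^2)$. From $\mu_j'(s)\le\mu_{j+k}(s)$ we get that $\mu_{j+k}(s)\le 0$ forces $\mu_j'(s)\le 0$; since the non-positive eigenvalues of $A_{\alpha,Y}(s)$ form the prefix $\mu_1(s),\dots,\mu_{N}(s)$ with $N=N_{\alpha,Y}(-s^2)$, applying this to the indices $i=k+1,\dots,N$ shows $A_{\alpha',Y'}(s)$ has at least $N-k$ non-positive eigenvalues, i.e.\ $N_{\alpha,Y}(-s^2)\le N_{\alpha',Y'}(-s^2)+k$. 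Combining the two bounds gives the claim; alternatively, (2) can be obtained by induction on $k$, adding one point at a time and invoking the interlacing property of bordered Hermitian matrices.

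There is essentially no deep obstacle: the whole content is packaged in Proposition \ref{prop_al}, and what remains is the standard min-max / Cauchy interlacing bookkeeping. The only point requiring a little care is the sign convention — increasing the interaction parameters $\alpha_y$ pushes the eigenvalues of $A(s)$ upward, which \emph{removes} negative eigenvalues of $-\Delta_{\alpha,Y}$ and therefore \emph{decreases} the counting function — and, in (2), keeping track of which of the two interlacing inequalities furnishes each half of the two-sided bound.
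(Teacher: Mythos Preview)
Your argument is correct. For part (2) you and the paper do essentially the same thing: the paper derives the interlacing inequalities $\mu_j(s)\le\mu_j'(s)\le\mu_{j+1}(s)$ (one removed point at a time) directly from the min-max principle via the ``$\alpha_1=\infty$'' quadratic-form trick, whereas you simply invoke the Cauchy interlacing theorem for principal submatrices; these are the same content.

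For part (1) your route is genuinely more elementary. The paper treats $A$ as an analytic function of a single coordinate $\alpha_1$, appeals to Kato's analytic perturbation theory to control the eigenvalue branches, and then uses the Feynman--Hellmann formula $\partial\mu_j/\partial\alpha_1=|(\phi_j)_1|^2\ge 0$ to get monotonicity, iterating over the coordinates. You bypass all of this by observing that $A_{\alpha,Y}(s)-A_{\alpha',Y}(s)$ is a diagonal positive semidefinite matrix and applying min-max once. Your approach is shorter and avoids the machinery of analytic eigenvalue branches and exceptional points; the paper's approach, while heavier here, has the minor advantage of giving an explicit derivative formula that could in principle be used for finer quantitative statements.
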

\begin{proof}
 (1) Let $n=\#Y$, and we use the notation
\begin{align*}
 Y =\{y_1,\ldots,y_n\},\quad
\alpha=\{\alpha_1,\ldots,\alpha_n\},\quad \alpha_j:=\alpha_{y_j}.
\end{align*}
We fix $Y$, $\alpha_2,\cdots,\alpha_n$, 
and regard $A=A_{\alpha,Y}(s)$ as an analytic matrix-valued function 
of $\alpha_1$.
This function is symmetric in the sense of
\cite[section II-6.1]{Ka},
and 
\cite[II-Theorem 6.1]{Ka} states that
the $j$-th smallest eigenvalue $\mu_j=\mu_j(s,\alpha_1)$ of $A$
is continuous in $\alpha_1 \in \mathbf{R}$,
and real-analytic in $\alpha_1 \in \mathbf{R}\setminus S$,
where $S$ is some exceptional set having no accumulation points in $\mathbf{R}$.
We can also take 
a normalized eigenvector $\phi_j=\phi_j(s,\alpha_1)$
corresponding to the eigenvalue $\mu_j$
such that $\phi_j(s,\alpha_1)$ is real-analytic in $\alpha_1\in \mathbf{R}\setminus S$.
Then, by the Feynman--Hellmann theorem and the explicit form of $A$,
we have
\begin{align*}
 \frac{\partial \mu_j}{\partial \alpha_1}
=
\left(\phi_j, \frac{\partial A}{\partial \alpha_1}\phi_j\right)_{\mathbf{C}^n}
=
|(\phi_j)_1|^2\geq 0,
\end{align*}
where $(\phi_j)_1$ is the first component of the vector $\phi_j$.
This implies
\begin{equation}
\label{07_03}
 \mu_j(s,\alpha_1)\geq  \mu_j(s,\alpha_1')
\end{equation}
if $\alpha_1\geq \alpha_1'$.

Let 
\begin{align}
\label{07_04}
\alpha=\{\alpha_1,\alpha_2,\ldots,\alpha_n\},\quad
\alpha'=\{\alpha_1',\alpha_2,\ldots,\alpha_n\},\quad
\alpha_1\geq \alpha_1'.
\end{align}
Assume $N_{\alpha,Y}(-s^2)=m$. 
By (3) of Proposition \ref{prop_al},
\begin{align*}
  \mu_1(s,\alpha_1) \leq \cdots \leq \mu_m(s,\alpha_1) \leq 0 
< \mu_{m+1}(s,\alpha_1)\leq \cdots \leq \mu_n(s,\alpha_1).
\end{align*}
By this inequality and (\ref{07_03}), we have
\begin{align*}
  \mu_1(s,\alpha_1') \leq \cdots \leq \mu_m(s,\alpha_1') \leq 0.
\end{align*}
By (3) of Proposition \ref{prop_al}, this inequality implies
\begin{align}
\label{07_05}
 N_{\alpha', Y}(-s^2)\geq m =  N_{\alpha, Y}(-s^2)
\end{align}
for $\alpha$ and $\alpha'$ satisfying (\ref{07_04}).
Then we can prove (\ref{07_05}) in general case
by using (\ref{07_05}) in the case (\ref{07_04}) repeatedly.

(2) This statement is a consequence of
Lemma \ref{lemma_quadratic-form}
and Proposition \ref{proposition_comparing} with $T=Id$.
\end{proof}

In \cite[Theorem II.1.1.4]{Al-Ge-Ho-Ho},
it is stated that $0$ is not an eigenvalue of 
the operator $-\Delta_{\alpha,Y}$,
if $Y$ is a finite set.
But this statement is incorrect.
According to `Errata and Addenda' in \cite[page 485]{Al-Ge-Ho-Ho},
the unpublished manuscript
`A remark on the existence of zero-energy bound states for 
point interaction Hamiltonians' by G.\ F.\  Dell'Antonio and G.\ Panati
gives an example of $(\alpha,Y)$ with $\#Y=3$ such that
$0$ is an eigenvalue of $-\Delta_{\alpha,Y}$.
In general, there exist pairs $(\alpha,Y)$ with 
$2\leq \# Y<\infty$, such that
$0$ is an eigenvalue of $-\Delta_{\alpha,Y}$.
This fact is proved in \cite[Theorem 1.1]{Co-Mi-Ya} when 
the space dimension $d=2$,
and also pointed out in the remark of \cite[Theorem 1.1]{Co-Mi-Ya} when $d=3$.
Here we give a simple proof of this fact when $d=3$,
for readers' use.

\begin{proposition}
 \label{proposition_zeroev}
Let $d=3$.
Let $Y=(y_j)_{j=1}^n$ be a finite subset of $\mathbf{R}^3$
with $\# Y =n<\infty$.
Let $\alpha=(\alpha_j)_{j=1}^n$ be a sequence of real numbers.
Let $A(0)$ be the $n\times n$ matrix $A(s)$ given in (\ref{07_02}) with $s=0$,
that is,
\begin{align*}
 A(0)=(a_{jk}(0)),\quad
a_{jk}(0)
=
\begin{cases}
 \alpha_j  & (j=k),\\
-\frac{1}{4\pi|y_j-y_k|} & (j\not=k).
\end{cases}
\end{align*}
If  $\det A(0)=0$  and $\Ker A(0)$ has a 
non-zero vector $c=(c_j)_{j=1}^n$ with
\begin{align}
\label{zero_01}
 \sum_{j=1}^n c_j=0, 
\end{align}
then
$0$ is an eigenvalue of $-\Delta_{\alpha,Y}$.
\end{proposition}
\begin{proof}
Suppose that $\det A(0)=0$ and $\Ker A(0)$ has a non-zero vector
$c=(c_j)_{j=1}^n$ with $\sum_{j=1}^n c_j=0$.
Let $u$ be the function given by
\begin{align*}
 u(x)=\sum_{j=1}^n
\frac{c_j}{4\pi|x-y_j|}.
\end{align*}
Since
\begin{align*}
 &u(x)=\frac{u_{y_j,0}}{|x-y_j|}+u_{y_j,1}+O(|x-y_j|)\quad (x\to y_j),\\
&u_{y_j,0}=\frac{c_j}{4\pi},\quad
u_{y_j,1}=\sum_{k\not=j}\frac{c_k}{4\pi|y_k-y_j|},
\end{align*}
the function $u$ satisfies 
$(BC)_{y_j}: -4\pi \alpha_j u_{y_j,0}+u_{y_j,1}=0$
for every $j=1,\ldots,n$,
if and only if $c\in \Ker A(0)$.
The function $u$ also satisfies 
the Laplace equation $-\Delta u=0$ on $\mathbf{R}^3\setminus Y$.
Moreover, since
\begin{align*}
\frac{c_j}{4\pi|x-y_j|}=\frac{c_j}{4\pi|x|}+O(|x|^{-2}) \quad(|x|\rightarrow \infty),
\end{align*}
the condition (\ref{zero_01}) implies that
the function $u$ satisfies $u(x)=O(|x|^{-2})$ as $|x|\rightarrow \infty$,
and $u\in L^2(\mathbf{R}^3)$.
Thus $u$ is an eigenfunction of $-\Delta_{\alpha,Y}$
with eigenvalue $0$.
\end{proof}
For example, let $n=2$,
$Y=\{y_1,y_2\}$ with $|y_1-y_2|=R>0$,
and $\alpha_1=\alpha_2=-1/(4\pi R)$.
Then the matrix
\begin{align*}
 A(0)
=-\frac{1}{4\pi R}
\begin{pmatrix}
 1 & 1\\ 1 & 1
\end{pmatrix}
\end{align*}
satisfies $\det A(0)=0$,
and $\Ker A(0)$ has a non-zero vector $c=(1,-1)^{t}$.
Thus $0$ is an eigenvalue of the operator $-\Delta_{\alpha,Y}$.

\subsection{Estimate for the operator norm}
We review an elementary estimate 
for the operator norm of a matrix operator.
\begin{proposition}
\label{prop_operator-norm}
For a countable set $X$,
let $M=(m_{xy})_{x,y\in X}$ be a 
$\mathbf{C}$-valued matrix satisfying
\begin{align*}
C_1:=\sup_{x\in X}\sum_{y\in Y}|m_{xy}|<\infty,\quad
C_2:=\sup_{y\in Y}\sum_{x\in X}|m_{xy}|<\infty.
\end{align*}
Then, the operator on $\ell^2(X)$
defined by
\begin{align*}
 M f(x)
=
\sum_{y\in Y} m_{xy} f(y)
\quad (f\in \ell^2(X))
\end{align*}
is a bounded linear operator on $\ell^2(X)$ 
and the operator norm $\|M\|$ satisfies
\begin{align*}
 \|M\|\leq \sqrt{C_1C_2}.
\end{align*}
\end{proposition}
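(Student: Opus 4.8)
The plan is to establish this via the classical Schur test. First I would verify that $M$ is well defined as a map on $\ell^2(X)$: for $f\in\ell^2(X)$ the function $f$ is bounded, and since $\sum_{y\in X}|m_{xy}|\le C_1<\infty$ for each fixed $x$, the series $\sum_{y\in X}m_{xy}f(y)$ converges absolutely, so $Mf(x)$ is defined pointwise.

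The core of the argument is a single application of the Cauchy--Schwarz inequality. For fixed $x\in X$, I would write $|m_{xy}|=|m_{xy}|^{1/2}\cdot|m_{xy}|^{1/2}$ and estimate
\[
|Mf(x)|^2\le\Bigl(\sum_{y\in X}|m_{xy}|\Bigr)\Bigl(\sum_{y\in X}|m_{xy}|\,|f(y)|^2\Bigr)\le C_1\sum_{y\in X}|m_{xy}|\,|f(y)|^2 .
\]

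Next I would sum this over $x\in X$. Since all the terms involved are nonnegative, Tonelli's theorem permits interchanging the order of summation, and applying the bound on the column sums gives
\[
\|Mf\|_{\ell^2(X)}^2=\sum_{x\in X}|Mf(x)|^2\le C_1\sum_{y\in X}|f(y)|^2\sum_{x\in X}|m_{xy}|\le C_1C_2\|f\|_{\ell^2(X)}^2 .
\]
Taking square roots yields $\|M\|\le\sqrt{C_1C_2}$, and in particular $M$ is a bounded linear operator on $\ell^2(X)$.

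There is no genuine obstacle here --- the proof is essentially a textbook Schur test --- so the only points needing a word of care are the pointwise absolute convergence defining $Mf$ and the interchange of the two summations, both of which follow immediately from the finiteness of $C_1$ and $C_2$ together with $f\in\ell^2(X)$.
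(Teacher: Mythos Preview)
Your proof is correct and follows essentially the same route as the paper's own proof: both apply Cauchy--Schwarz with the splitting $|m_{xy}|=|m_{xy}|^{1/2}\cdot|m_{xy}|^{1/2}$ to bound $|Mf(x)|^2$ by $C_1\sum_y|m_{xy}|\,|f(y)|^2$, then sum over $x$ and swap the order of summation to pick up the factor $C_2$. Your version is slightly more explicit about the well-definedness of $Mf$ and the justification for interchanging sums, but the argument is otherwise identical.
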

Especially when $M$ is a Hermitian operator ($\overline{m_{xy}}=m_{yx}$),
we have $C_1=C_2$ and $\|M\|\leq C_1$.
\begin{proof}
For every $x\in X$, we have by the Schwarz inequality
\begin{align*}
 |Mf(x)|^2
\leq&\ 
\left(
\sum_{y\in Y}|m_{xy}| |f(y)|
\right)^2
\leq
\sum_{y\in Y}|m_{xy}|
\cdot
\sum_{y\in Y}|m_{xy}| |f(y)|^2
\\
\leq&\ 
C_1\sum_{y\in Y}|m_{xy}| |f(y)|^2.
\end{align*} 
Therefore
\begin{align*}
 \|Mf\|^2
=& 
\sum_{x\in X}|Mf(x)|^2
\leq
C_1\sum_{x\in X}\sum_{y\in Y}|m_{xy}||f(y)|^2\\
\leq&\ 
C_1\sum_{y\in Y}
\left(\sum_{x\in X}
|m_{xy}|\right)|f(y)|^2
\leq
C_1C_2\|f\|^2.
\end{align*}
\end{proof}

\textbf{Acknowledgments.}
The work of T.\ M.\ is partially supported by
JSPS KAKENHI Grant Number JP18K03329.
The work of F.\ N.\ is partially supported by
JSPS KAKENHI Grant Number 20K03659.

\end{document}